\documentclass[a4paper,UKenglish]{article}
\usepackage[margin=1.0in]{geometry}

\sloppy

\usepackage{mathtools}
\usepackage{xfrac}
\usepackage{xspace}
\usepackage{xargs}
\usepackage{amssymb}
\usepackage{amsfonts}
\usepackage{enumerate}
\usepackage{amsthm}

\usepackage[textsize=tiny]{todonotes}


\usepackage{color}
\usepackage[T1]{fontenc}
\usepackage[utf8]{inputenc}
\usepackage{cite}
\usepackage{calrsfs} 
\usepackage{hyperref}
\usepackage{authblk}

\usepackage{paralist}

  \pltopsep=1pt
  \plitemsep=1pt
  \plparsep=1pt

\allowdisplaybreaks 

\newtheorem{theorem}{Theorem}[section]
\newtheorem{lemma}[theorem]{Lemma}

\newtheorem{observation}{Observation}
\newtheorem{definition}{Definition}




\DeclarePairedDelimiter{\floor}{\lfloor}{\rfloor}

\newcommand{\kdots}{, \dots, }


%
%

\newcommand{\de}{d}

\newcommand{\dmax}{d_{max}}
\newcommand{\dmin}{d_{min}}
\newcommand{\davg}{d_{avg}}

\newcommand{\hittime}{t_{hit}}
\newcommand{\meettime}{t_{meet}}

\newcommand{\Exp}{\ensuremath{\operatorname{\mathbf{E}}}}

\newcommand{\Var}{\ensuremath{\operatorname{\mathbf{Var}}}}
\newcommand{\vol}{\ensuremath{\operatorname{\mbox{vol}}}}

\newcommand{\logn}{\log n }

\newcommand{\pot}[1]{\Psi(S_{{#1}})}
\newcommand{\potfixed}[1]{\Psi(s_{{#1}})}

\DeclareUnicodeCharacter{00A0}{ }
\newcommand\numberthis{\addtocounter{equation}{1}\tag{\theequation}}

\title{Bounds on the Voter Model in Dynamic Networks}
\author[1]{Petra Berenbrink\thanks{petra@sfu.ca}}
\author[2]{George Giakkoupis\thanks{george.giakkoupis@inria.fr}}
\author[2]{Anne-Marie Kermarrec\thanks{anne-marie.kermarrec@inria.fr}}
\author[1,3]{\\Frederik Mallmann-Trenn\thanks{fmallman@sfu.ca}}
\affil[1]{Simon Fraser University, Burnaby, Canada}
\affil[2]{INRIA, Rennes, France}
\affil[3]{École Normale Supérieure, Paris, France}

%
%
%
%
%
%

\begin{document} 
\maketitle

\begin{abstract}
In the \emph{voter model}, each node of a graph has an opinion, and in every round each node chooses independently a random neighbour and adopts its opinion. We are interested in the \emph{consensus time}, which is the first point in time where all nodes have the same opinion.
We consider dynamic graphs
in which the edges are rewired in every round (by an adversary) giving rise to the graph sequence $G_1, G_2, \dots $, where we assume that $G_i$ has conductance at least $\phi_i$.
We assume that the degrees of nodes don't change over time as one can show that the consensus time can become super-exponential otherwise.
In the case of a sequence of $d$-regular graphs, we obtain asymptotically tight results.
Even for some static graphs, such as the cycle, our results improve the state of the art.
Here we show that the expected number of rounds until all nodes have the same opinion is bounded by $O(m/(\dmin \cdot \phi))$, for any  graph with $m$ edges, conductance $\phi$, and degrees at least $\dmin$.
In addition, we consider a \emph{biased} dynamic voter model, where each opinion $i$ is associated with  a probability $P_i$, and when a node chooses a neighbour with that opinion, it adopts opinion $i$ with probability $P_i$ (otherwise the node keeps its current opinion).
We show for any regular dynamic graph, that if there is an $\epsilon>0$ difference between the highest and second highest opinion probabilities, and at least $\Omega(\log n)$ nodes have initially the opinion with the highest probability, then all nodes adopt w.h.p.\ that opinion.
We obtain a bound on the convergence time, which becomes $O(\log n/\phi)$ for static graphs.
\end{abstract}

\setcounter{footnote}{0}
\section{Introduction}
In this paper, we investigate the spread of opinions in a connected and undirected graph using the \emph{voter model}. The standard voter model works in synchronous rounds and is defined as follows. At the beginning, every node has one opinion from the set $\{ 0, \dots, n-1\}$, and in every  round, each node chooses one of its neighbours uniformly at random and adopts its opinion.
In this model, one is usually interested in the {\em consensus time} and the {\em fixation probability}. The consensus time  is the number of rounds it takes until all nodes have the same opinion. The fixation probability of opinion $i$ is the probability that this opinion prevails, meaning that all other opinions vanish.  This probability  is known to be proportional to the sum of the degrees of the nodes starting with opinion~$i$~\cite{HP01,NIY00}.


The voter model is the dual of the {\em coalescing random walk model} which can be described as follows. Initially, there is a pebble on every node of the graph. In every round, every pebble chooses a neighbour uniformly at random and moves to that node. Whenever two or more pebbles meet at the same node, they are merged into a single pebble which continues performing a random walk. The process terminates when only one pebble remains. The time it takes until only one pebble remains is called  {\em coalescing time}. It is known that the coalescing time for a graph $G$ equals the  consensus time of the voter model on $G$ when initially each node has a distinct opinion~\cite{AF14,LN07}.

\medskip

In this paper we consider the voter model and a {\em biased}
variant where the opinions have different popularity.
We express the consensus time as a function of the graph \emph{conductance} $\phi$.

We assume a dynamic graph model
 where the edges of the graph can be rewired by an adversary in every round, as long as the adversary respects the given degree sequence and the given conductance for all generated graphs.
We show that consensus is reached with constant probability after  $\tau$ rounds, where  $\tau$ is the first round such that the sum of conductances up to round  $\tau$ is at least $m /\dmin$, where $m$ is the number of edges.
For static graphs the above  bound simplifies to
$O(m/(\dmin \cdot \phi))$, where $\dmin$ is the minimum degree.

For the biased model we assume a \emph{regular} dynamic  graph $G$.
Similar to \cite{LN07,KT08} the opinions have a {\em popularity}, which is expressed as a probability with which nodes adopt  opinions.
Again, every node chooses one of its neighbours uniformly at random, but this time it adopts the neighbour's opinion with a probability that equals the popularity of this opinion (otherwise the node keeps its current opinion).
We  assume that the popularity of the most popular  opinion is 1, and every other opinion has a popularity of at most $1-\epsilon$ (for an arbitrarily small but constant $\epsilon > 0$).
We also assume that at least $\Omega(\log n)$ nodes start with the most popular opinion.
 Then we show that
the most popular opinion prevails w.h.p.\footnote{An event happens \emph{with high probability (w.h.p.)} if its probability is at least $1-{1}/{n}$.}  after  $\tau$ rounds, where  $\tau$ is the first round such that the sum of conductances up to round
$\tau$ is of order $O(\log n)$.
For static graphs the above  bound simplifies as follows:
the most popular opinion prevails w.h.p.\  in $O(\logn/\phi)$ rounds, if
at least $\Omega(\log n)$ nodes start with that opinion.

\subsection{Related work}
A sequential version of the voter model was introduced in~\cite{HL75} and can be described as follows. In every round, a single node is chosen uniformly at random and this node changes its opinion to that of a random neighbour. The authors of~\cite{HL75}  study infinite grid graphs. This was generalised to arbitrary graphs in~\cite{DW83} where it is shown among other things that the probability for opinion $i$ to prevail is proportional to the sum of the degrees of the nodes having opinion $i$ at the beginning of the process.

The standard voter model was first analysed in~\cite{HP01}.
The authors of~\cite{HP01} bound the expected coalescing time (and thus the expected consensus time) in terms of the  expected meeting time $\meettime$ of two random walks  and show a bound of $O(\meettime \cdot \logn)=O(n^3\log n)$.
Note that the meeting time is an obvious lower bound on the coalescing time, and thus a lower bound on the consensus time when all nodes have distinct opinions initially.
The authors of~\cite{CEOR12} provide an improved upper bound of $O\big(\tfrac{1}{1-\lambda_2} (\log^4n+ \rho)\big)$ on the expected coalescing time for any graph $G$, where $\lambda_2$ is the second eigenvalue of the transition matrix of a random walk on $G$, and $\rho = \big(\sum_{u\in V(G)}d(u)\big)^2/ \sum_{u\in V(G)}d^2(u)$ is the ratio of the square of the sum of node degrees over the sum of the squared degrees.
The value of $\rho$ ranges from $\Theta(1)$, for the star graph, to $\Theta(n)$, for regular graphs.


The  authors of~\cite{L85,AF14,LN07} investigate coalescing random walks in a continuous setting where the movement of the pebbles are modelled by independent Poisson processes with a rate of 1. In~\cite{AF14}, it is shown a lower bound of $\Omega(m/\dmax)$ and an upper bound of $O( \hittime \cdot \log n)$ for the expected coalescing time. Here $m$ is the number of edges in the graph, $\dmax$ is the maximum degree, and $\hittime$ is the (expected) hitting time.
In~\cite{O12}, it is shown that the expected coalescing time is bounded by $O(\hittime)$.

In\cite{LN07} the authors consider the biased voter model in the continuous setting and two opinions.
They show that for  $d$-dimensional lattices the probability for  the less popular  opinion to prevail is exponentially small.
In~\cite{KT08}, it is shown that in this setting the expected consensus time is exponential for the line.

The authors of~\cite{CER14} consider a modification of the standard voter model with two opinions,
which they call {\em two-sample voting}.
In every round, each node chooses two of its neighbours randomly and  adopts
their opinion only if they both agree. For regular graphs
and random regular graphs, it is shown that two-sample voting has a consensus time of $O(\log n)$ if the initial imbalance between the nodes having the two opinions is large enough.
There are several other works on the setting where every node contacts in every round two or more neighbours  before adapting its opinion  \cite{AD15,CG14,CERRS15,EFKMT16}.

There are several other models which are related to the voter model, most notably the {\em Moran process} and
{\em rumor spreading} in the phone call model.
In the case of the Moran process, a population resides on the vertices of a graph. The initial population consists of one
mutant with fitness $r$ and the rest of the nodes are non-mutants with fitness 1.
In every round, a node is chosen at random with probability proportional to its fitness. This node then reproduces by placing a copy of itself on a randomly chosen neighbour, replacing the individual that was there.
The main quantities of interest are the probability that the mutant occupies the whole graph (fixation) or vanishes (extinction), together with the time before either of the two states is reached (absorption time).
There are several publications considering the fixation probabilities~\cite{HV11,MS13,DGMRSS14}.

Rumor spreading in the phone call model
works as follows. Every node $v$ opens a channel to a randomly chosen neighbour $u$. The channel can be used for transmissions in both directions. A transmission from $v$ to $u$ is called {\em push} transmission and a transmission from $u$ to $v$ is called {\em pull}. There is a vast amount of papers analysing rumor spreading on different graphs.
The result that is most relevant to ours is that broadcasting of a message in the whole network is completed in $O(\log n/\phi)$ rounds w.h.p, where $\phi$ is the conductance (see Section~\ref{sec:modelVoter} for a definition) of the network.
In \cite{GSS14}, the authors study rumor spreading in dynamic networks, where the edges in every round are distributed by an adaptive adversary. They show that broadcasting terminates w.h.p.\ in a round $t$ if the sum of conductances up to round $t$ is of order $\log n$.
%
%
%
%
Here, the sequence of graphs $G_1, G_2, \dots$ have the same
vertex set of size $n$, but possibly distinct edge sets. The authors assume that the degrees and the conductance may change over time. 
We refer the reader to the next section for a discussion of the differences.
Dynamic graphs have received ample attention in various areas \cite{AKL08,KLO10,PSSS13,LLMSW12}.
%

\subsection{Model and New Results} \label{sec:modelVoter}
In this paper we show results for the standard voter model  and biased voter model
in dynamic graphs.
Our protocols work in synchronous steps. The consensus time $T$ is defined at the first time step at which all nodes have the same opinion.

\paragraph*{Standard Voter Model.}
Our first result concerns the standard voter model in dynamic graphs.  Our protocol works as follows.
 In every synchronous time step every node chooses a neighbour u.a.r. and adopts its opinion with probability $1/2$.\footnote{The factor of $1/2$ ensures that the process converges on bipartite graphs.}

We assume that the dynamic graphs $\mathcal{G}=G_1, G_2,\ldots$ are generated by an adversary. We assume that each graph has $n$ nodes and the nodes are numbered from $1$ to $n$.  The sequence of conductances
$\phi_1,\phi_2,\ldots$ is given in advance, as well as a degree sequence
$d_1,d_2,\ldots, d_n$.
The adversary is now allowed to create every graph $G_i$  by redistributing the edges of the graph.   The constraints are that each  graph $G_i$  has to have conductance $\phi_i$ and node $j$ has to have degree $d_j$ (the degrees of the nodes do not change over time).
Note that the sequence of the conductances is fixed and, hence,  cannot be regarded as a random variable in the following.
For the redistribution of the edges we assume that the adversary
 knows the distribution of all opinions during all previous rounds.

Note that our model for dynamic graphs is motivated by the model presented in  \cite{GSS14}. They allow the adversary to determine the edge set at every round, without having to respect the node degrees and conductances.

We show (Observation~\ref{dontlettheadversarychangethedegrees}) that, allowing the adversary to change the node degrees over time can results in super-exponential voting time. Since this changes the behaviour significantly, we assume that the degrees of nodes are fixed.
Furthermore, in contrary to \cite{GSS14}, we assume that (bounds on) the conductance of (the graph at any time step) are fixed/given beforehand.
Whether one can obtain the same results, if the conductance of the graph is determined by an adaptive adversary remains an open question.
The reason we consider an adversarial dynamic graph model is in order to understand how the voting time can be influenced in the worst-case.
Another interesting model would be to assume that in every round the nodes are connected to random neighbours. One obstacle  to such a model seems to be to guarantee  that neighbours are chosen u.a.r.\ and the degrees of nodes do not change.
For the case of regular random dynamic graphs our techniques easily carry over since the graph will have constant conductance w.h.p.\ in any such round since the graph is essentially a random regular graph in every round.

For the (adversarial) dynamic model we show the following result bounding the consensus time $T$.

\begin{theorem}[upper bound]  \label{thm:voter}
Consider the Standard Voter model and in the dynamic graph 
model. Assume
$\kappa \leq n$  opinions are  arbitrarily distributed over the nodes of $G_1$.
Let $\phi_t$ be a lower bound on the conductance at time step $t$.
Let $b>0$ be a suitable chosen constant.
Then, with a probability of $1/2$ we
have that $T\le \min\{\tau,\tau'\}$, where

\begin{enumerate}[(i)]
\item $\tau$ is the first round so that
$\sum_{t=1}^\tau \phi_t \geq b \cdot m/\dmin$. \emph{(part 1)}
\item  $\tau'$ is the first round so that $\sum_{t=1}^{\tau'} \phi_t^2 \geq b \cdot n \log n.$  \phantom{2}\emph{(part 2)}

 \end{enumerate}
For static graphs ($G_{i+1}=G_i$ for all $i$), we have $T\leq \min \{m/(\dmin \cdot \phi), n\log n/\phi^2\}$.
\end{theorem}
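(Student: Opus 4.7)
The plan is a potential-function / martingale argument. Write $S^i_t \subseteq V$ for the set of nodes holding opinion $i$ at time $t$, and introduce the potential
\[
\Psi_t \;=\; \sum_i \vol(S^i_t)\bigl(2m - \vol(S^i_t)\bigr)\;=\;4m^2 - \sum_i \vol(S^i_t)^2,
\]
which vanishes exactly at consensus and has smallest positive value at least $\dmin(2m-\dmin) = \Omega(\dmin m)$. The engine of the argument is the classical voter-model identity that each $\vol(S^i_t)$ is a martingale with respect to the natural filtration $\mathcal F_t$: because degrees are time-invariant, a direct per-node expected-change computation gives zero drift regardless of the adversary's choice of $G_t$.

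The core step is to prove the multiplicative drift inequality
\[
\Exp\bigl[\Psi_{t+1}\mid\mathcal F_t\bigr]\;\le\;\bigl(1 - c\,\dmin\phi_t/m\bigr)\,\Psi_t,\qquad c>0\text{ a universal constant.}
\]
Since each per-opinion volume is a conditional martingale, the one-step drop of $\Psi_t$ equals $\sum_i \Var\bigl(\vol(S^i_{t+1})\mid\mathcal F_t\bigr)$. The nodes' updates are $\mathcal F_t$-conditionally independent, so each summand decomposes into per-node Bernoulli variances $p_u(1-p_u)$; the protocol's factor $1/2$ is essential here, because it pins every $p_u\in[0,1/2]$ so that $p_u(1-p_u)\ge p_u/2$ (this is also what guarantees convergence on bipartite graphs). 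Reorganising by cut edges and using $d(u),d(v)\ge \dmin$ on each cut edge yields $\Var\bigl(\vol(S^i_{t+1})\mid\mathcal F_t\bigr)\ge(\dmin/4)\,e_{G_t}(S^i_t,\overline{S^i_t})$, and the conductance bound $e_{G_t}(S,\bar S)\ge \phi_t\vol(S)\vol(\bar S)/(2m)$ closes the estimate after summation over $i$.

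To prove part~(i), I iterate the drift: once $\sum_{t\le\tau}\phi_t \ge b\,m/\dmin$ we have $\Exp[\Psi_\tau]\le 4m^2\exp(-cb)$, and for $b$ large enough this drops below $\tfrac12\dmin m$, so Markov's inequality together with the gap $\Psi_t \in \{0\}\cup[\Omega(\dmin m),\infty)$ gives $\Pr[T\le\tau]\ge 1/2$. Part~(ii) needs a Freedman-type concentration inequality on the supermartingale $\Psi_t$ (or on a rescaled version such as $\log\Psi_t$, which has per-step drift $-\Theta(\phi_t)$ and one-step quadratic variation $O(\phi_t^2)$): the tail estimate kicks in once $\sum\phi_t^2\ge b\,n\log n$, yielding $\Psi_{\tau'}=0$ with probability at least $1/2$ without needing the full multiplicative gap. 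The combined bound $T\le\min\{\tau,\tau'\}$ then follows by taking whichever stopping condition triggers first, and the static-graph corollary is immediate by substituting $\phi_t\equiv\phi$.

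The main obstacle I foresee is matching the constant $b$ precisely in part~(i): a naive Markov reading of the iterated drift appears to leak a $\log(m/\dmin)$ factor, because the smallest positive value of $\Psi$ is only $\Omega(\dmin m)$ rather than $1$. Closing this apparent gap will likely require either a level-set / halving argument (summing $O(\log(m/\dmin))$ phases each of expected duration $O(m/(\dmin\phi))$ via optional stopping on a biased random walk over level indices) or an appeal to the coalescing-random-walk duality. The other delicate point is keeping the rate $\dmin\phi_t/m$ tight in the variance lower bound; any slackness in the edge-pairing argument or in the preservation of the $1/2$-factor would immediately propagate through the iteration and weaken both parts.
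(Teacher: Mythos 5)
Your potential $\Psi_t=\sum_i\vol(S^i_t)\bigl(2m-\vol(S^i_t)\bigr)$ and the variance decomposition are genuinely different from the paper, which tracks $\sqrt{\vol(S_t)}$ for the minority side of a \emph{two}-opinion chain and handles $\kappa>2$ separately via a ``vanish-or-prevail'' phase argument (Lemma~\ref{fraction}). Your multiplicative drift inequality $\Exp[\Psi_{t+1}\mid\mathcal F_t]\le(1-c\,\dmin\phi_t/m)\Psi_t$ is correct: $\vol(S^i_{t+1})$ is indeed a conditional martingale for fixed degrees, the per-node updates are conditionally independent, and the laziness pins $p_u\le1/2$ so the per-node variance bound goes through. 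The problem is exactly the one you flag, and your proposed fixes do not close it. The smallest positive value of $\Psi$ is $\Theta(\dmin m)$ while $\Psi_0=\Theta(m^2)$, so reducing $\Exp[\Psi]$ below that threshold via the iterated multiplicative drift requires $\sum_t\phi_t=\Omega\bigl(\tfrac{m}{\dmin}\log\tfrac{m}{\dmin}\bigr)$---a $\log(m/\dmin)$ factor off the theorem's $b\cdot m/\dmin$. Your first fix (``$O(\log(m/\dmin))$ phases each of expected duration $O(m/(\dmin\phi))$'') just reintroduces the same log multiplicatively; the halving idea can only work if the $j$-th phase's budget is $O(\vol_j/(\dmin\phi))$ with $\vol_j$ geometrically decreasing, so the sum telescopes to $O(m/(\dmin\phi))$---which is precisely what the paper's $\sqrt{\vol}$ potential achieves implicitly. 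There, the one-step drop is \emph{additive} of the form $-\Theta(\dmin\phi_t/\Psi_t)$ (Lemma~\ref{lem:technical-st}), so the total ``budget'' to reach zero scales like $\Psi_0^2=\vol(s_0)\le m$ with no log overhead, extracted cleanly via the contradiction/stopping argument in Lemma~\ref{lem:two-op-const-prob}.

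Part~(ii) as sketched is also not yet a proof. You would apply Freedman to $\log\Psi_t$, but $\log\Psi_t$ is unbounded below near consensus, and the claimed per-step quadratic-variation bound $O(\phi_t^2)$ is asserted without establishing that $\Var(\Psi_{t+1}/\Psi_t\mid\mathcal F_t)=O(\phi_t^2)$, which does not follow from the estimates you've set up (the variance argument you used controls $\Var(\vol(S^i_{t+1}))$, not its ratio to $\Psi_t$). The paper's route here is more elementary: it keeps $\Psi=\sqrt{\vol}$, applies Cauchy--Schwarz to get $\sum_u\lambda_u^2\ge(\phi_t\vol(s_t))^2/n$, deduces a uniform \emph{multiplicative} decay $\Exp[\Psi_{t+1}]\le(1-\phi_t^2/(32n))\Exp[\Psi_t]$, and iterates; here the $\Theta(\log n)$ factor is absorbed by the $n\log n$ budget and Markov suffices, since the minimum positive value of $\sqrt{\vol}$ is $\Omega(1)$ rather than $\Omega(\dmin m)$. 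So the root cause of the difficulty in both parts is your choice of potential: the quadratic $4m^2-\sum_i\vol(S^i)^2$ has a large ``gap'' at the bottom that Markov cannot cross without a log penalty; the paper's concave $\sqrt{\vol}$ of the minority set is engineered precisely so that no such penalty appears.
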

%
%

For \emph{static} $d$-regular graphs, where the graph doesn't change over time, the above bound becomes $O(n/\phi)$, which is tight when either $\phi$ or $d$ are constants (see Observation \ref{lem:lowerbound}).
Theorem \ref{thm:voter} gives the first tight bounds for cycles and circulant graphs $C_n^{k}$ (node $i$ is adjacent to the nodes $i \pm 1, \ldots, i \pm k \mod n$) with degree $2k$ ($k$ constant). For these graphs the consensus time is $\Theta(n^2)$, which matches our upper bound from Theorem \ref{thm:voter}.\footnote{The lower bound of $\Omega(n^2)$ follows from the fact that two coalescing random walks starting on opposite sites of a cycle require in expectation time $\Omega(n^2)$ to meet.}
For a comparison with the results of~\cite{CEOR12} note that $\phi^2 \leq 1-\lambda_2 \leq 2\phi$.
In particular, for the cycle $\phi = 1/n$ and $1/(1-\lambda_2) = \Theta(1/n^2)$.
Hence, for this graph, our bound is by a factor of $n$ smaller. 
Note that, due  to the duality between the voter model and coalescing random walks, the  result also holds for the coalescing time.
In contrast to~\cite{CEOR12,CER14}, the above result  is shown using a potential function argument,  whereas the authors of \cite{CEOR12,CER14} show their results for  coalescing random walks and fixed graphs. The advantage of analysing the process directly is, that our techniques allow us to obtain the results for the dynamic setting.

\medskip

The next result shows that the bound of Theorem \ref{thm:voter} is asymptotically tight if the adversary  is allowed to change  the node degrees over time.

\begin{theorem}[lower bound]\label{st:adaptivelowerbound}
Consider the Standard Voter model in the dynamic graph 
model.
Assume that
$\kappa \leq n$  opinions are  arbitrarily distributed over the nodes of $G_1$.
Let $\phi_t$ be an upper bound on the conductance at time step $t$.
Let $b>0$ be a suitable constant and
assume $\tau''$ is the first round such  that
$\sum_{t=1}^{\tau''} \phi_t \geq b n.
$
Then,  with a probability of at least $1/2$,  there are still nodes with different opinions in $G_{\tau''}$.

\end{theorem}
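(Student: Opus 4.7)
It suffices to exhibit a single hard instance. For each round $t$, take $G_t$ to be a \emph{dumbbell}: partition $V$ into halves $A, B$ of size $n/2$, place a clique on each half, and add a matching of $k_t = \lceil c_0\,n^2\phi_t\rceil$ bridge edges between $A$ and $B$, where the constant $c_0$ is chosen so that $\phi(G_t) = \Theta(\phi_t)$ respects the adversary's upper bound. All degrees then lie in $\{n/2-1, n/2\}$. Assign opinion $0$ to every vertex of $A$ and opinion $1$ to every vertex of $B$, giving $\kappa = 2$. It suffices to exhibit a pair $u\in A$, $v\in B$ for which $\Pr[X_u^{\tau''} \ne X_v^{\tau''}]$ is at least a constant close to $1/2$, since such a disagreement precludes consensus at time $\tau''$.

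I would analyse this through the standard duality of the voter model with coalescing random walks on the dynamic graph (lazy, because of the $1/2$ adoption factor). Let $W_u, W_v$ denote the backward coalescing walkers started at $u$ and $v$. Then $X_u^{\tau''} \ne X_v^{\tau''}$ holds precisely when these two walks do not coalesce during $[0, \tau'']$ and their ancestors $W_u(0), W_v(0)$ lie in different cliques. Two elementary calculations on the dumbbell drive the argument: (i)~a walker sits at a bridge vertex with stationary probability $\Theta(k_t/n) = \Theta(n\phi_t)$ and then crosses via its unique matching partner with probability $\Theta(1/n)$, so its \emph{side} (whether it is in $A$ or in $B$) flips with per-round probability $\Theta(\phi_t)$; (ii)~to reach a specific vertex of the opposite clique a walker must traverse that vertex's unique matching edge, so the per-round meeting probability of two walkers is $\Theta(\phi_t/n)$.

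Summing (ii) over $t \le \tau''$ and applying a union bound, the total coalescence probability is $O\!\left((\sum_{t}\phi_t)/n\right) = O(b)$. Meanwhile the total side-flip rate is $\Theta(\sum_t \phi_t) = \Theta(bn) \gg 1$, so each walker's side is within $o(1)$ total-variation distance of uniform on $\{A, B\}$, and conditional on non-coalescence the two walkers' sides are almost independent; hence the ancestors end in different cliques with probability $1/2 - o(1)$. Combining, $\Pr[X_u^{\tau''} \ne X_v^{\tau''}] \ge 1/2 - O(b) - o(1)$, which yields the theorem's bound once the constant $b$ is chosen sufficiently small. The main technical obstacle is making the ``almost independent sides conditional on non-coalescence'' step precise: because the walkers are coupled by coalescence and the graph is dynamic, one cannot simply invoke stationarity of the side chain. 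I would handle this via an explicit coupling that at each round separates the ``stay/cross'' decision from the ``which vertex within the destination clique'' decision, and bounds the residual dependence between the two walks by the per-round meeting probability in~(ii).
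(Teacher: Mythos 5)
Your proposal aims at a static hard instance, whereas the paper's proof constructs an \emph{adaptive} adversary that, at every step, rewires the graph so that the cut between the two \emph{current} opinion sets remains of size $\Theta(\phi_t d n)$, and then uses the drop bound of Lemma~\ref{claiminlower} together with the convex potential $g(x)=x^2/(2cd)$ and the optional stopping theorem. This adaptivity is not cosmetic: the paper explicitly separates the adaptive lower bound (Theorem~\ref{st:adaptivelowerbound}, matching $\tau''\sim n/\phi$) from the \emph{weaker} static lower bound (Observation~\ref{lem:lowerbound}), which requires either $d$ or $\phi$ to be constant and is proved via a subdivided-expander construction, not a dumbbell.

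There is a concrete gap in your argument, at the union-bound step. You bound the total coalescence probability of the two walkers by $\sum_t \Theta(\phi_t/n)=O(b)$, but the per-round meeting probability is $\Theta(\phi_t/n)$ only while the walkers are in \emph{opposite} cliques. Your own side-flip estimate shows each walker's side mixes after $\Theta(1/\phi_t)$ rounds, so over the horizon $\tau''=\Theta(n/\phi)$ the two walkers share a clique for $\Theta(n/\phi)$ rounds. While they share a clique of size $n/2$, the per-round meeting probability is $\Theta(1/n)$, not $\Theta(\phi/n)$, so over $\Theta(n/\phi)$ such rounds they coalesce with probability $1-o(1)$. Concretely, taking $\phi=\Theta(1/n)$ forces $k=\Theta(n)$ bridge edges, and the resulting dumbbell has effective resistance $\Theta(1/n)$ between the cliques, hence commute time $\Theta(n)$ and coalescing/consensus time $O(n\log n)\ll \tau''=\Theta(n^2)$. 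So the static dumbbell is genuinely \emph{not} a hard instance at the claimed scale, and the ``almost independent sides conditional on non-coalescence'' step cannot be repaired within this construction; the problem is the instance, not merely the coupling. To match the theorem's $\Omega(n/\phi)$ bound you need the adversary to keep reshaping the cut to follow the evolving opinion sets, which is exactly what the paper does.
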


\paragraph*{Biased Voter Model}
In the \emph{biased} voter model  we again assume that there are
$\kappa\le n$ distinct opinions initially. For $0\le i\le \kappa-1$,
opinion $i$ has popularity $\alpha_i$ and we assume that $\alpha_0=1 > \alpha_1\ge \alpha_2\ge \ldots \ge \alpha_{\kappa-1}$. We call opinion $0$ the \emph{preferred opinion}.
The process works as follows.
In every round, every node chooses a neighbour uniformly at random and adopts its opinion $i$ with probability $\alpha_i$.

We assume that the dynamic $d$-regular graphs $\mathcal{G}=G_1, G_2,\ldots$ are generated by an adversary. We assume that the sequence of $\phi_t$ is given in advance, where $\phi_i$ is a lower bound on the conductance of
$G_i$. The adversary is now allowed to create the sequence of graphs by redistributing the edges of the graph in every step.  The constraints are that each  graph $G_i$ has $n$ nodes  and has to have conductance at least $\phi_i$. Note that we assume that the sequence of the conductances is fixed and, hence,  it is not a random variable in the following.

The following result shows  that consensus is reached considerably faster in the biased voter model, as long as the bias $1-\alpha_1$ is bounded away from 0, and at least a logarithmic number of nodes have the preferred opinion initially.

%

\begin{theorem}
    \label{thm:ub-bvoter}

Consider the Biased Voter model  in the dynamic regular graph 
model. Assume $\kappa \leq n$  opinions are  arbitrarily distributed over the nodes of $G_1$.
Let $\phi_t$ be a lower bound on the conductance at time step $t$.
Assume that $\alpha_1 \leq 1-\epsilon$, for an arbitrary small constant $\epsilon > 0$.
Assume the initial number of nodes with the preferred opinion is at least $c\log n$, for some constant $c = c(\alpha_1)$.
Then the preferred opinion prevails w.h.p.\ in at most $\tau'''$ steps, where $\tau'''$ is the first round so that
$\sum_{t=1}^{\tau'''} \phi_t \geq b \log n$, for some constant $b$.
For static graphs ($G_{i+1}=G_i$ for all $i$), we have w.h.p.\ $T=O(\log n/\phi)$.
\end{theorem}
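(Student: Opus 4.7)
Write $S_t$ for the set of nodes holding the preferred opinion at time $t$, set $X_t=|S_t|$, and $Y_t=\min(X_t,\,n-X_t)$. A node $v\notin S_t$ becomes preferred in step $t{+}1$ iff it selects a preferred neighbour (adoption is certain because $\alpha_0=1$), while a node $v\in S_t$ becomes non-preferred with probability at most $(1-\epsilon)\cdot(d-|N(v)\cap S_t|)/d$, since every rival opinion has popularity at most $1-\epsilon$. Summing these independent Bernoullis, counting each cut-edge once on each side, and applying the conductance bound in the $d$-regular graph ($e(S_t,\overline{S_t})\ge \phi_t\, d\, Y_t$) gives
\[
\Exp[X_{t+1}-X_t\mid\mathcal{F}_t]\;\ge\;\epsilon\,\frac{e(S_t,\overline{S_t})}{d}\;\ge\;\epsilon\,\phi_t\,Y_t,\qquad \Var(X_{t+1}-X_t\mid\mathcal{F}_t)=O(Y_t).
\]

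\textbf{Phase 1 (escape from $c\log n$ up to $n/2$).} While $X_t\le n/2$ (so $Y_t=X_t$) I would track the log-potential $Z_t=\log X_t$. The Taylor estimate $\log(1+x)\ge x-x^2$ for $|x|\le 1/2$ together with the above bounds yields $\Exp[Z_{t+1}-Z_t\mid\mathcal{F}_t]\ge \epsilon\phi_t - O(1/X_t)$. Choosing the constant $c$ in $X_1\ge c\log n$ sufficiently large makes the additive correction dominated by half of the aggregated drift, so the process $Z_t - (\epsilon/2)\sum_{s<t}\phi_s$ behaves like a submartingale with predictable quadratic variation again proportional to $\phi_t$. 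A Freedman-type martingale inequality then shows $Z_t\ge \log(n/2)$ with high probability as soon as $\sum_{s<t}\phi_s\ge b\log n/\epsilon$, i.e.\ $X_t$ reaches $n/2$ within the claimed budget.

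\textbf{Phase 2 (drive $n-X_t$ to $0$).} Once $X_t\ge n/2$ the same drift computation reads $\Exp[n-X_{t+1}\mid\mathcal{F}_t]\le (1-\epsilon\phi_t)(n-X_t)$. Repeating the Phase 1 analysis on $\log(n-X_t)$ reduces $n-X_t$ to $O(\log n)$ within a further $O(\log n/\epsilon)$ units of cumulative conductance. Once only $O(\log n)$ non-preferred nodes remain, each has at least $d-O(\log n)$ preferred neighbours and therefore flips with probability $1-o(1)$ per round, so a union bound absorbs the remaining nodes w.h.p.\ in a constant number of further effective rounds.

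\textbf{Main obstacle.} The delicate point is the concentration in Phase 1 when $\phi_t$ can be arbitrarily small: a single-round Chernoff is useless because the per-round drift of $Z_t$ may be $o(1)$, and a naive Azuma loses too much because the worst-case increment of $Z_t$ is of order $1$. Freedman's inequality appears to be the right tool here, since both the conditional mean and the conditional variance of $Z_{t+1}-Z_t$ are controlled by the same quantity $e(S_t,\overline{S_t})/d$, giving a drift-to-deviation ratio of order $\epsilon$ uniformly in $t$. The $d$-regularity is essential: it converts the conductance hypothesis into a clean lower bound $e(S_t,\overline{S_t})/d\ge \phi_t Y_t$ and is ultimately what yields the linear dependence on $\sum\phi_t$ (in contrast to the $\sum\phi_t^2$ dependence of Theorem~\ref{thm:voter}). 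The $\Omega(\log n)$ initial mass plays a double role: it controls the $O(1/X_t)$ lower-order term in the log-potential, and it prevents early extinction of the preferred opinion via a standard gambler's-ruin argument applied to the same submartingale.
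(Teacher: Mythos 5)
Your route is genuinely different from the paper's. You work round-by-round with the log-potential $Z_t=\log X_t$ and propose a Freedman-type inequality, exploiting (correctly) that both the conditional mean and the conditional variance of $X_{t+1}-X_t$ scale with $e(S_t,\overline{S_t})/d$, so the drift-to-deviation ratio of $Z_t$ is $\Theta(\epsilon)$ uniformly. The paper instead refines each round into a sequence of \emph{steps}, one boundary node at a time, orders the steps so that the cumulative cut-degree considered from $S_t$ and from $\overline{S_t}$ stay within $d$ of each other (inequality~\eqref{gapLambda}), defines intervals $I_{i,k}$ by accumulated cut-flux rather than by rounds, and then applies a tailored Chernoff bound for dependent $0/1$ variables whose conditional success probabilities sum to a controlled quantity (Lemma~\ref{lem:cherhoff-like}). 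It then defines a ``good'' step sequence (Definition~\ref{def}), shows goodness holds w.h.p.\ by a union bound over all interval start/length pairs (Lemma~\ref{seqwhp}), and shows deterministically that goodness forces the preferred set to pass a logarithmic number of geometric checkpoints within cumulative conductance $O(\log n)$ (Lemma~\ref{seqwin}). The step-interval decomposition buys bounded increments and clean independence structure inside the Chernoff bound; your continuous-time-like drift argument is more elegant when it works, but it leaves two gaps.

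First, a concrete error in your Phase 2: ``once only $O(\log n)$ non-preferred nodes remain, each has at least $d-O(\log n)$ preferred neighbours and therefore flips with probability $1-o(1)$ per round.'' The theorem makes no lower bound assumption on $d$; for constant $d$ the remaining non-preferred nodes can form a tight cluster in which some node has $0$ preferred neighbours in a given round, so this union-bound absorption step is simply false. Conductance only gives a lower bound on the \emph{total} cut $e(S_t,\overline{S_t})\ge \phi_t\,d\,(n-X_t)$, not a per-node bound, and the adversary rewires edges each round. The paper's checkpoint argument (Lemma~\ref{seqwin}, the case $2\log n< j\le 4\log n$ and $j=j_{\max}$) handles exactly this: it halves $n-X_t$ per interval using the same interval Chernoff machinery and never needs per-node degree guarantees.

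Second, the Freedman step as sketched needs more care. The one-round increment of $Z_t=\log X_t$ is not almost surely bounded: $X_{t+1}$ can drop to $0$, and more generally, when $\phi_t$ is not small, $|X_{t+1}-X_t|/X_t$ can exceed $1/2$, breaking your Taylor estimate $\log(1+x)\ge x-x^2$. You would need either a truncation/stopping argument that couples with a modified process (the paper does exactly this: it artificially resets one node whenever the preferred opinion vanishes, and proves property (b)--(c) of Definition~\ref{def} to control downward excursions by $\ell=O(\log n)$), or a per-step decomposition that makes increments genuinely bounded, which is again what the step sequence provides. Your high-level drift inequality $\Exp[X_{t+1}-X_t\mid\mathcal F_t]\ge\epsilon\,e(S_t,\overline{S_t})/d\ge\epsilon\phi_t Y_t$ is correct and mirrors the paper's Lemma~\ref{lem:boundsY}; the missing work is precisely the bookkeeping that Lemma~\ref{seqwhp}/\ref{seqwin} carry out.
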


The assumption on the initial size of the preferred opinion is crucial for the time bound $T=O(\log n/\phi)$, in the sense that there are instances where the  expected consensus time is at least $T=\Omega(n/\phi)$ if the size of the preferred opinion is small.\footnote{Consider a $3$-regular graph and $n$ opinions where all other $\alpha_1 =\alpha_2=\dots=\alpha_{n-1}=1/2$. The preferred opinion vanishes with constant probability and the bound for the standard voter model of Observation~\ref{lem:lowerbound} applies.}

The rumor spreading process can be viewed as an instance of the biased voter model with two opinions having popularity $1$ and $0$, respectively.
However, the techniques used for the analysis of rumor spreading do not  extend to the voter model.
This is due to the fact that rumor spreading is a progressive process, where nodes can change their opinion only once, from ``uninformed'' to ``informed'', whereas they can change their opinions over and over again in the case of the voter model.
Note that the above bound is the same as the bound for rumor spreading of~\cite{G11} (although the latter bound holds for general graphs, rather than just for regular ones).
Hence, our above bound is tight for regular graphs with conductance $\phi$, since the rumor spreading lower bound of
$\Omega(\logn/\phi)$  is also a lower bound for biased voting in our model. 

\section{Analysis of the Voter Model}\label{standardvotermodel}
In this section we show the upper and lower bound for the standard voter model. We begin with some definitions.
Let $G = (V,E)$.
For a fixed set $S\subseteq V$ we define $cut(S,V\setminus S)$ to be the set of edges between the sets $S\subseteq V$ and $V\setminus S$ and let $\lambda_{u}$ be the number of  neighbours of $u$ in $V\setminus S$. Let
 $\vol(S)=\sum_{u\in S} \de_u.$
The \emph{conductance} of $G$ is defined as
$$\phi=\phi(G)=\min \left\{  \tfrac{\sum_{u \in U} \lambda_{u}
}{\vol(U)}\colon U\subset V \text{ with } 0 < \vol(U) \leq m\right\}.$$
We note $1/n^2 \leq \phi \leq 1$. 
We denote by $v_t^{(i)}$ the set of nodes that have opinion $i$ after the first $t$ rounds and $t\geq0$. If we refer to the random variable we use $V_t^{(i)}$ instead.

First we show  Theorem~\ref{thm:voter} for $\kappa=2$ (two opinions),  which we call $0$ and $1$ in the following.  Then we generalise the result to an arbitrary number of opinions.
We model the system with a Markov chain $M_{t\ge 0}=(V^{(0)}_t,V^{(1)}_t)_{t\ge 0}$.

Let $s
_t$ denote the set having the smaller volume, i.e., $s_t = v^{(0)}_t$ if $\vol(v^{(0)}_t)\leq \vol(v^{(1)}_t)$, and  $s_t = v^{(1)}_t$ otherwise. Note that we use $s_t, v^{(0)}_t$ and $v_t^{(1)}$ whenever the state at time $t$ is fixed, and  $S_t, V^{(0)}_t$ and $V_t^{(1)}$ for the corresponding random 
variables.
For $u\in v^{(0)}_t$, $\lambda_{u,t}$ is  the number of neighbours of $u$ in $V\setminus v^1(t)$ and for $u\in v^{(1)}_t$, $\lambda_{u,t}$ is the number of neighbours of $u$ in $V\setminus v^{(0)}_t$; $d_u$ is the degree of $u$ (the degrees do not change over  time).

To analyse  the process we  use a potential function.
Simply using the volume of nodes sharing the same opinion as the potential function will not work. It is easy to calculate that
the expected volume of nodes with a given opinion does not change in one step.   Instead, we use
a convex function on the number of nodes with the minority opinion.
We define $$\pot{t} = \sqrt{\vol(S_t)}.$$

In Lemma~\ref{lem:technical-st} we first calculate the one-step potential drop of $\pot{t}$.
Then we show that  every opinion either prevails or vanishes once the sum of conductances is proportional to the  volume of nodes having that opinion (see Lemma~\ref{lem:two-op-const-prob}), which we use later to prove Part 1 and 2  of Theorem~\ref{thm:voter}.

%
%

\medskip

\begin{lemma}
    \label{lem:technical-st}
Assume  $s_t\not= \emptyset$ and $\kappa=2$. 
 Then
    \begin{equation*}
        \label{eq:exp-psi}
        \Exp[\pot{t+1} \mid S_t=s_t] \leq \potfixed{t} - \frac{\sum_{u \in V} \lambda_{u,t}\cdot \de_u}{32\cdot (\potfixed{t})^3}.
    \end{equation*}
\end{lemma}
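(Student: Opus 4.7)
The plan is to reduce the problem to bounding $\Exp[\sqrt{X}]$ for $X:=\vol(V^{(0)}_{t+1})$, and then to prove a quantitative ``reverse Jensen'' inequality that extracts a drop of order $\Var[X]/\vol(s_t)^{3/2}$. Without loss of generality assume $s_t = v^{(0)}_t$ and write $x_0:=\vol(s_t)\le m$. Since the two opinion classes partition $V$ with total volume $2m$, one has $\vol(S_{t+1})=\min(X,2m-X)\le X$, so monotonicity of $\sqrt{\cdot}$ gives $\Exp[\sqrt{\vol(S_{t+1})}\mid S_t=s_t]\le\Exp[\sqrt{X}\mid S_t=s_t]$, and it suffices to bound the right-hand side.

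Next, decompose $X=x_0+\sum_{u\in V}\xi_u$, where the $\xi_u$ are mutually independent: for $u\in v^{(0)}_t$, $\xi_u=-d_u$ with probability $\lambda_{u,t}/(2d_u)$ and $0$ otherwise, and for $u\in v^{(1)}_t$, $\xi_u=+d_u$ with probability $\lambda_{u,t}/(2d_u)$ and $0$ otherwise (the factor $1/2$ is the Bernoulli adoption probability). By the cut symmetry $\sum_{u\in v^{(0)}_t}\lambda_{u,t}=\sum_{u\in v^{(1)}_t}\lambda_{u,t}$, a direct calculation gives $\Exp[X]=x_0$; and since each switching probability is at most $1/2$, $\Var[X]\ge\tfrac14\sum_{u\in V}d_u\lambda_{u,t}$.

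I would now obtain the drop from the algebraic identity (valid for $Y:=X-x_0\ge-x_0$)
\[
\sqrt{x_0+Y}=\sqrt{x_0}+\frac{Y}{2\sqrt{x_0}}-\frac{Y^2}{2\sqrt{x_0}\,(\sqrt{x_0+Y}+\sqrt{x_0})^2}.
\]
Taking expectation and using $\Exp[Y]=0$ yields
\[
\Exp[\sqrt{X}]=\sqrt{x_0}-\frac{1}{2\sqrt{x_0}}\,\Exp\!\left[\frac{Y^2}{(\sqrt{X}+\sqrt{x_0})^2}\right].
\]
On the event $\{X\le 2x_0\}$ the denominator is at most $8x_0$, so the integrand is at least $Y^2/(8x_0)$. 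Combining with a Chebyshev-style tail bound for $\{X>2x_0\}$ (using $\Pr[X>2x_0]\le \Var[X]/x_0^2$) shows that the truncated second moment still captures a constant fraction of $\Var[X]$, giving $\Exp[\sqrt{X}]\le \sqrt{x_0}-c\,\Var[X]/x_0^{3/2}$ for some absolute $c$; substituting the variance lower bound absorbs all constants into the factor $1/32$ in the statement.

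The main obstacle is the tail estimate just above: summands $\xi_u$ with $u\in v^{(1)}_t$ may individually have magnitude $d_u\gg x_0$, so $X$ is not uniformly bounded in terms of $x_0$, and a careless truncation at $2x_0$ could discard a constant fraction of $\Var[X]$ (since heavy vertices contribute to both the variance lower bound and to the tail mass). Handling this cleanly -- e.g.\ by splitting vertices into ``light'' ones with $d_u\le x_0$ and ``heavy'' ones with $d_u>x_0$ and analysing their variance contributions separately -- is the delicate step; everything else is bookkeeping.
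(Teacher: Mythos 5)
The central inequality you rely on---a ``reverse Jensen'' bound $\Exp[\sqrt{X}]\leq\sqrt{x_0}-c\,\Var[X]/x_0^{3/2}$ with an absolute constant $c>0$---is false, not merely delicate. Take a star on $n$ vertices with $v^{(0)}_t$ a single leaf, so $x_0=1$, and let $w$ be the centre with $d_w=n-1$ and $\lambda_{w,t}=1$. The contribution of $w$ alone to $\Var[X]$ is about $(n-1)/2$, so the claimed bound would force $\Exp[\sqrt{X}]\leq 1-cn/2<0$ for large $n$, which is impossible. What defeats you is exactly the phenomenon you flag at the end: rare large jumps from heavy vertices in $v^{(1)}_t$ inflate $\Var[X]$ without producing a comparable drop in $\Exp[\sqrt{X}]$, and no truncation or light/heavy split rescues an inequality that is already violated. (As a side remark, the $\sum_{u\in V}$ in the lemma statement appears to be a typo; the paper's own proof, and its downstream use in Lemma~\ref{lem:two-op-const-prob}, only ever establish and need the drop $\sum_{u\in v^{(0)}}\lambda_{u,t}d_u$, which excludes precisely the heavy-$v^{(1)}$ contributions.)

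The paper's device for removing these heavy jumps \emph{before} any expansion is Lemma~\ref{lem:replaceRV}: for each $u\in v^{(1)}_t$ it replaces $X_u\in\{0,d_u\}$ (large, rare) by $Y_u\in\{0,\lambda_u\}$ with probability $1/2$ (small, frequent), of equal mean, and proves this can only increase $\Exp[f(\sum_u X_u)]$ for concave $f$; after the swap the third central moment of $\Delta'$ is provably non-positive (this uses $\lambda_u\leq d_u$), and a third-order Taylor bound delivers the drop. If you want to stay closer to your decomposition, there is a fix that works: write $X=x_0+Y^-+Y^+$ with $Y^-=\sum_{u\in v^{(0)}_t}\xi_u$ and $Y^+=\sum_{u\in v^{(1)}_t}\xi_u$ independent, apply Jensen only in $Y^+$ to get $\Exp[\sqrt{X}]\leq\Exp\bigl[\sqrt{x_0+Y^--\Exp[Y^-]}\bigr]$, which is a variable supported in $[0,2x_0]$, and then your algebraic identity yields $\Exp[\sqrt{X}]\leq\sqrt{x_0}-\Var[Y^-]/(12\,x_0^{3/2})$ with no tail bound at all; finish with $\Var[Y^-]\geq\tfrac14\sum_{u\in v^{(0)}}\lambda_{u,t}d_u$. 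Either way, the quantity you can extract in the drop is the $v^{(0)}$ sum, not the full $\Var[X]$, and that is the essential idea your proposal is missing.
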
%
\begin{proof}
W.l.o.g.\ we assume that opinion $0$ is the minority opinion,
i.e. $0 < \vol(V^{(0)}_t)\leq \vol(V^{(1)}_t)$.
To simplify the notation we omit the index $t$ in this proof and write
$v^{(0)}$ instead of $v_t{(0)}$, $v^{(1)}$ for  $V\setminus v^{(0)}_t$, and
 $\lambda_u$ instead of $\lambda_{u,t}$.
Hence, $s_t = v^{(0)}$ and $\potfixed{t} = \sqrt{\vol(v^{(0)})}$.
Note that for $t=0$ we have $\vol(v^{(0)})=\potfixed{t}^2$.
Furthermore, we fix $S_t=s_t$ in the following (and condition on it). We define $m$ as the number of edges.
Then we have
\begin{align}\label{cat}
    \Exp[\pot{t+1} - \potfixed{t}\mid S_t=s_t]
    &=
    \Exp[\sqrt{\vol(S_{t+1})} - \sqrt{\vol(s_t)}]
  \notag \\&=
    \Exp\left[\sqrt{\min\left\{\vol(V^{(0)}_{t+1} ),m-\vol(V^{(0)}_{t+1} )\right\}  } - \sqrt{\vol(s_t)}\right]
\notag\\&\leq
    \Exp\left[\sqrt{\vol(V^{(0)}_{t+1})} - \sqrt{\vol(v^{(0)})}\right]
\end{align}

\noindent
Now we define
$$
  X_u=\left\{\begin{array}{ll} \de_u & w.p. \ \frac{\lambda_u}{2\cdot\de_u}\
   if\ u\in v^{(1)}\\
-\de_u &  w.p.\ \frac{\lambda_u}{2\cdot\de_u}\ if\ u\in v^{(0)}\\
         0 & otherwise\end{array}
\right. 
$$
and $\Delta=\sum_{u\in V}X_u$.
Note that we have  $\Delta=\vol(V^{(0)}_{t+1}) - \vol(v^{(0)}) $ and

\begin{align*}
    \Exp\left[\sqrt{\vol(V^{(0)}_{t+1})} - \sqrt{\vol(v^{(0)})}\right]
    &=\Exp\left[\sqrt{\vol(v^{(0)})+\Delta} - \sqrt{\vol(v^{(0)})}\right]\\
&=\Exp\left[\sqrt{\vol(v^{(0)})}\left(\sqrt{1+\tfrac{\Delta}{\vol(v^{(0)})}} - 1\right)\right]\\
    &=\potfixed{t}\cdot \Exp[ \sqrt{1+\Delta/\potfixed{t}^2} - 1].
\end{align*}

Unfortunately we cannot bound $\potfixed{t}\cdot \Exp[ \sqrt{1+\Delta/\potfixed{t}^2} - 1]$ directly.
Instead, we define a family of random variables which is closely related to
$X_u$.
$$
     Y_u=\left\{\begin{array}{ll} \lambda_u & w.p.\ \tfrac{1}{2}\mspace{37mu} if\ u\in v^{(1)} \\
-\de_u &  w.p.\ \tfrac{\lambda_u}{2\cdot\de_u}\ \mspace{10mu} if\ u\in v^{(0)}\\\
  0 & otherwise\end{array}\right.
$$
Similarly, we define $\Delta'= \sum_{u\in V} Y(u) $. Note that
$|E[Y_u]|=\lambda_u/2$
for both $u\in v^{(1)}$ and $u\in v^{(0)}$.
%
In  Lemma \ref{lem:replaceRV}, we  show that
$$\Exp[ \sqrt{1+\Delta/\potfixed{t}^2} ] \leq \Exp[ \sqrt{1+\Delta'/\potfixed{t}^2}].$$
which results in
$ \Exp[\pot{t+1} - \potfixed{t}\mid S_t=s_t] \leq     \potfixed{t}\cdot \Exp[ \sqrt{1+\Delta'/\potfixed{t}^2} - 1]$
From the Taylor expansion  $\sqrt{1+x} \leq 1+\tfrac x2-\tfrac{x^2}8+\tfrac{x^3}{16}$, $x\geq-1$ it follows that
\begin{eqnarray*}
    \Exp[\pot{t+1} - \potfixed{t}\mid S_t=s_t] &\leq &
    \potfixed{t}\cdot\Exp\big[\tfrac {\Delta'}{2\potfixed{t}^2} -\tfrac{(\Delta')^2}{8\potfixed{t}^4}+\tfrac{(\Delta')^3}{16\potfixed{t}^6}\big].
\end{eqnarray*}
It remains to bound $\Exp[\Delta']$, $\Exp[(\Delta')^2]$, and $\Exp[(\Delta')^3]$.

\begin{itemize}
\item $\Exp[\Delta']$:
We have
$
\Exp[\Delta']  =\sum_{u\in V} E[Y_u] =
 \sum_{u \in v^{(1)}}\frac{\lambda_u}{2} -
 \sum_{v \in v^{(0)}}\frac{\lambda_v}{2}=0,
$
where the last equality holds since $\sum_{u \in v^{(1)}}\lambda_u$
and $\sum_{u \in v^{(1)}}\lambda_u $ both count the number of edges crossing the cut between $v^{(0)}$ and $v^{(1)}$.

\item $\Exp[(\Delta')^2]$: since
$E[(Y_u)^2]=(\lambda_u)^2/2$ for   $u\in v^{(1)}$ and
$E[(Y_u)^2]=-d_u\cdot \lambda_u/2$ for   $u\in v^{(0)}$
we have
\begin{align}\label{squuuared}
    \Exp[(\Delta')^2]
    &=
    \sum_{u\in V}\Var[Y_u] +(\Exp[Y_u])^2=\sum_{u\in V}\Var[Y_u]+0
    =
    \sum_{u\in V}(\Exp[(Y_u)^2] - (\Exp[Y_u])^2)\notag\\
    &=\sum_{u\in v^{(0)}}(\Exp[(Y_u)^2] - (\Exp[Y_u])^2)+\sum_{u\in v^{(1)}}(\Exp[(Y_u)^2] - (\Exp[Y_u])^2)\notag\\
   & =
    \sum_{u \in v^{(0)}} \frac{\lambda_u d_u}{2}-\sum_{u \in v^{(0)}}  \frac{\lambda_u^2}{4}
+   \sum_{u \in v^{(1)}}  \frac{\lambda_u^2}{4}
\geq  \sum_{u \in v^{(0)}} \frac{\lambda_u d_u}{4}.
\end{align}

\item $\Exp[\Delta'^3]$:
In Lemma~\ref{lem:skewness} we  show that
 $$\Exp[\Delta'^3]= \sum_{u\in V}\big(\Exp[(Y_u)^3]-3\Exp[(Y_u)^2]\cdot\Exp[Y_u] + 2\Exp[Y_u]^3\big).$$
Note that
$E[(Y_u)^3]=\frac12(\lambda_u)^3$ for   $u\in v^{(1)}$ and
$E[(Y_u)^3]=-\frac12\lambda_u\cdot (d_u)^2$ for   $u\in v^{(0)}$.
Hence,
\begin{align}
\begin{split}
 \Exp[\Delta'^3] &= \sum_{u \in v^{(0)}}\left(-\frac12{\lambda_u}\cdot (d_u)^2 +\frac34{(\lambda_u)^2}\cdot\de_u -\frac14{\lambda^3_u}\right)  \\
& \phantom{00}+
 \sum_{u \in v^{(1)}}\left( \frac12 (\lambda_u)^3 -  \frac34 (\lambda_u)^3
+\frac14 (\lambda_u)^3\right)\leq 0,
\end{split}
\end{align}
where the first sum is bounded by 0 because  $\lambda_u \leq \de_u$.

\end{itemize}

Combining all the above estimations we get
\begin{align*}
    \Exp[\pot{t+1} - \potfixed{t}\mid S_t=s_t]
    &\leq
    \potfixed{t}\cdot \Exp\left[\frac {\Delta'}{2\potfixed{t}^2} -\frac{\Delta'^2}{8\potfixed{t}^4}+\frac{\Delta'^3}{16\potfixed{t}^6}\right]
\leq -\frac{\sum_{u \in v^{(0)}} \lambda_u\de_u}{32\potfixed{t}^3}.
\end{align*}
This completes the proof of Lemma~\ref{lem:technical-st}.
\end{proof}

\subsection{Part 1 of Theorem~\ref{thm:voter}. }
Using Lemma~\ref{lem:technical-st} we show that a given opinion either prevails or vanishes with constant probability as soon as the sum of $\phi_t$  is proportional to the volume of the nodes having that opinion.

\begin{lemma}
    \label{lem:two-op-const-prob}
    Assume that $s_{\hat t}$  is fixed for an arbitrary $(\hat{t}\ge 0)$  and $\kappa=2$.\\
 Let $\tau^*= \min\left\{t' : \sum_{i= \hat t}^{t'}  \phi_i \geq 129 \cdot \vol(s_{\hat t})/\dmin \right\}$.
Then
$\Pr\left(T \leq \tau^*+\hat t\right) \geq 1/2.$\\
In particular, if the graph is static with conductance $\phi$, then $\Pr\big(T \leq \frac{129 \cdot \mbox{vol}(s_{\hat t})}{\phi \cdot \dmin}+\hat t\big) \geq 1/2.$
\end{lemma}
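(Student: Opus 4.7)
The plan is to convert the one-step drift bound of Lemma~\ref{lem:technical-st} into a quantity involving $\dmin$ and $\phi_t$, and then combine a potential-integration argument with a truncation at the first moment the potential grows by a constant factor.

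First, I would derive the following consequence of Lemma~\ref{lem:technical-st}: whenever $S_t\neq\emptyset$,
\[
\Exp[\pot{t+1}\mid S_t] \;\leq\; \pot{t} \;-\; \frac{\dmin\,\phi_t}{16\,\pot{t}}.
\]
Indeed, since $\vol(S_t)\leq m$, the definition of conductance gives $\sum_{u\in V}\lambda_{u,t}=2\,|cut(S_t,V\setminus S_t)|\geq 2\,\phi_t\,\vol(S_t)$, and together with $d_u\geq\dmin$ this yields $\sum_{u\in V}\lambda_{u,t}\,d_u\geq 2\dmin\phi_t\,\vol(S_t) = 2\dmin\phi_t\,\pot{t}^2$. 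Dividing by $32\,\pot{t}^3$ as in Lemma~\ref{lem:technical-st} produces the displayed bound. A direct consequence is that $\pot{t}$ is a non-negative supermartingale, and moreover
\[
M_t \;:=\; \pot{t} \;+\; \sum_{i=\hat t}^{t-1}\frac{\dmin\,\phi_i}{16\,\pot{i}}\,\mathbf{1}\{S_i\neq\emptyset\}
\]
is a non-negative supermartingale with respect to the natural filtration.

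Next, I would introduce a truncation to control upward fluctuations of $\pot{t}$. Fix a constant $A>1$ to be chosen later, and let $\tau_A=\min\{t\geq\hat t : \pot{t}\geq A\sqrt{\vol(s_{\hat t})}\}$ and $\sigma=\min(T,\tau_A,\hat t+\tau^*)$. Optional stopping applied to $M_t$ at the bounded stopping time $\sigma$ gives
\[
\sqrt{\vol(s_{\hat t})} \;\geq\; \Exp\!\left[\sum_{i=\hat t}^{\sigma-1}\frac{\dmin\,\phi_i}{16\,\pot{i}}\,\mathbf{1}\{S_i\neq\emptyset\}\right] \;\geq\; \frac{\dmin}{16 A\sqrt{\vol(s_{\hat t})}}\;\Exp\!\left[\sum_{i=\hat t}^{\sigma-1}\phi_i\,\mathbf{1}\{S_i\neq\emptyset\}\right],
\]
where the second inequality uses $\pot{i}<A\sqrt{\vol(s_{\hat t})}$ for all $i<\sigma\leq\tau_A$. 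On the event $E=\{T>\hat t+\tau^*,\ \tau_A>\hat t+\tau^*\}$ we have $\sigma=\hat t+\tau^*$ and every indicator equals $1$, so the inner sum is at least $129\,\vol(s_{\hat t})/\dmin$ by the definition of $\tau^*$; substituting yields $\Pr[E]\leq 16A/129$. Separately, Doob's maximal inequality (equivalently, optional stopping of $\pot{t}$ at $\min(\tau_A,\hat t+\tau^*)$) gives $\Pr[\tau_A\leq\hat t+\tau^*]\leq 1/A$. Since $\{T>\hat t+\tau^*\}\subseteq E\cup\{\tau_A\leq\hat t+\tau^*\}$, a union bound gives
\[
\Pr[T>\hat t+\tau^*] \;\leq\; \frac{16A}{129}+\frac{1}{A},
\]
which for an appropriate choice of $A$ (close to $\sqrt{129/16}$, absorbing any slack into the constant $129$) is at most $1/2$. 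The static-graph claim follows by plugging $\phi_t\equiv\phi$ and reading off $\tau^*-\hat t=\lceil 129\,\vol(s_{\hat t})/(\phi\,\dmin)\rceil$.

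The main obstacle is that $\pot{t}$ is only a supermartingale, not monotonically decreasing, so the drift inequality alone is not enough: if $\pot{t}$ drifts upward then the $1/\pot{t}$ factor in the drift becomes too weak. The truncation at $\tau_A$ is the essential trick that resolves this: upward excursions past $A\sqrt{\vol(s_{\hat t})}$ occur with probability at most $1/A$ (Doob), and up to that moment the per-step drift is at least $\dmin\phi_t/(16A\sqrt{\vol(s_{\hat t})})$, which integrates against the hypothesis on $\sum_i\phi_i$ to force consensus by time $\hat t+\tau^*$ on the complement.
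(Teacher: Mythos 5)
Your argument is a genuinely different route from the paper's, but as written it does not recover the constant $129$ in the statement.

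Your approach (supermartingale $M_t = \pot{t} + \sum_{i<t}\frac{\dmin\phi_i}{16\pot{i}}\mathbf{1}\{S_i\neq\emptyset\}$, optional stopping at a truncated stopping time, and Doob's maximal inequality to control upward excursions of $\pot{t}$) is structurally sound. The paper instead proceeds via Jensen's inequality applied to $1/\pot{t}$ and, crucially, exploits that $\Exp[\pot{t}]$ is non-increasing, so that the drift denominator $\Exp[\pot{t}]$ can be replaced by the fixed initial value $\sqrt{\vol(s_{\hat t})}$ with no truncation at all; the only loss there is a factor $(\Pr(T>t))^2 \geq 1/4$ from conditioning, turning $32$ into $128$, and the final constant $129$ is chosen to leave one unit of slack for the last round (since $\phi_t\le 1$ and $\vol(s_{\hat t})\ge \dmin$). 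Your truncation, by contrast, pays two separate costs: $\Pr[E]\le 16A/129$ from optional stopping and $\Pr[\tau_A\le \hat t+\tau^*]\le 1/A$ from Doob. The sum $\frac{16A}{129}+\frac{1}{A}$ is minimized at $A=\sqrt{129/16}$ where it equals $\frac{8}{\sqrt{129}}\approx 0.70$, which is strictly greater than $1/2$. The phrase ``absorbing any slack into the constant $129$'' does not resolve this: $129$ is fixed by the lemma statement and cannot be increased, and for your argument one would need that constant to be at least $256$ (to make $\frac{8}{\sqrt{C}}\le \frac12$). So there is a genuine factor-of-$2$ gap. Everything else in your argument (the supermartingale property of $M_t$, the verification that $\{T>\hat t+\tau^*\}\subseteq E\cup\{\tau_A\le\hat t+\tau^*\}$, the use of optional stopping at a bounded stopping time, and the application of the maximal inequality) is correct, and the truncation idea is a legitimate alternative to the paper's monotonicity-of-$\Exp[\pot{t}]$ trick; it is simply lossier by a constant factor, and you should either replace the truncation by that monotonicity observation or accept a larger constant than $129$.
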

\begin{proof}
From the definition of  $\potfixed{t}$ and $\phi_t$ it follows for all $t$ that
$\potfixed{t}^2=\sum_{u \in v^{(0)}}  \de_u=\mbox{vol$(v^{(0)})$}$ and
$\phi_t\le \sum_{u \in v^{(0)}}\lambda_{u,t}/\mbox{vol$(v^{(0)})$}$.
Hence,
$ \potfixed{t}^2  \cdot \phi_t\cdot \dmin\leq \sum_{u \in v^{(0)}}\lambda_{u,t}\cdot \de_u.   $
Together with
Lemma~\ref{lem:technical-st}  we derive for $s_t\not= \emptyset$
\begin{equation}
        \label{eq:exp-psi2}
        \Exp[\pot{t+1} \mid S_t = s_t  ] \leq
        \potfixed{t} - \frac{\sum_{u \in V} \lambda_{u,t}\cdot \de_u}{32\cdot(\potfixed{t})^3}
        \leq \potfixed{t} - \frac{\dmin\cdot \phi_t}{32\cdot \potfixed{t}}.
    \end{equation}
Recall that $T=\min_t\{ S_t = \emptyset \}$.
In the following we use  the expression $T>t$ to denote the event $s_t\neq \emptyset$.
Using the law of total probability we get
$$\Exp[\pot{t+1} \rvert   T>t]=\Exp\left[\pot{t} - \frac{\dmin\cdot \phi_t}{32\cdot \pot{t}}\Big\rvert  T>t\right]$$
and using Jensen's inequality we get
\begin{align*}
    \Exp[\pot{t+1} \mid  T>t]
    &=\Exp[\pot{t} \mid T>t] - \Exp\left[\frac{\dmin\cdot \phi_t}{32\cdot \pot{t}}\mid T>t\right]
    \\&\leq
    \Exp[\pot{t} \mid T>t] - \frac{\dmin \cdot\phi_t}{32\cdot \Exp[\pot{t}\cdot \mid T>t]}.
\end{align*}
Since $\Exp[\pot{t} \mid  T\leq t]=0$ we have
\begin{eqnarray*}
\Exp[\pot{t}]&=&\Exp[\pot{t} \mid  T>t]\cdot \Pr[T>t]+
\Exp[\pot{t} \mid  T\le t]\cdot \Pr[T\le t]\\
&=& \Exp[\pot{t} \mid  T>t]\cdot \Pr[T>t]+0.
\end{eqnarray*}
Hence,
$$
   \frac{ \Exp[\pot{t+1}]}{\Pr{(T>t)}} \leq
   \frac{\Exp[\pot{t}]}{\Pr{(T>t)}}
 - \frac{\dmin \cdot \phi_t \cdot \Pr{(T>t)}}{32\Exp[\pot{t}]}
$$
and
$$
  \Exp[\pot{t+1}] \leq
 \Exp[\pot{t}]
 - \frac{\dmin \cdot \phi_t \cdot (\Pr{(T>t))^2}}{32\Exp[\pot{t}]}.
$$

Let $t^\ast = \min\{t\colon \Pr(T>t) < 1/2\}$. In the following we use contradiction to show
$$t^\ast \leq \max\{ t:  \sum_{\hat t\leq t<t^\ast}\phi_t \leq 128 \cdot \vol(s_{\hat t})/\dmin \}.$$
Assume the inequality is not satisfied.
With $t = t^\ast-1$ we get
\[
    \Exp[\pot{t^*}] \leq
    \Exp[\pot{t^*-1}] - \frac{\dmin \cdot \phi_t\cdot(\Pr(T>{t^\ast}-1))^2}{32\Exp[\pot{t^*-1}]}
    \leq
    \Exp[\pot{t^*-1}] - \frac{\dmin \cdot\phi_{t^*}
    \cdot(1/4)}{32\Exp[\pot{t^*-1}]}.
\]
Applying this equation iteratively, we obtain
\begin{align}\label{thecontra}
    \Exp[\pot{t^\ast}]
    \leq
    \Exp[\pot{\hat t}] - \sum_{\hat t\leq t<t^\ast}\frac{\dmin \cdot\phi_t\cdot 1/4 }{32\Exp[\pot{t}]}
    \leq
    \Exp[\pot{\hat t}] -  \frac{\dmin \cdot\sum_{\hat t\leq t<t^\ast}\phi_t}{128\Exp[\pot{\hat t}]}.
\end{align}
Using the definition of $\Exp[\pot{\hat t}] = \sqrt{\vol(s_{\hat t})}$ and the definition of $t^\ast$ we get
$$  \Exp[\pot{t^\ast}] <  \sqrt{\vol(s_{\hat t})} -  \frac{\dmin \cdot  128 \cdot \vol(s_{\hat t})}
{128 \cdot \dmin \cdot\sqrt{\vol(s_{\hat t})}}=
\sqrt{\vol(s_{\hat t})} -  \frac{ \vol(s_{\hat t})}
{\sqrt{\vol(s_{\hat t})}} =0.$$
This is a contradiction since  $\Exp[\pot{t^\ast}] $ is non-negative.

From the definition of $t^\ast$, we obtain $\Pr\big(T >  \tau^*+ \hat t \big) < 1/2$,
completing the proof of Lemma~\ref{lem:two-op-const-prob}.
\end{proof}

Now we are ready to show the first part  of the theorem.

\begin{proof}[Proof of Part 1 of Theorem~\ref{thm:voter}]
We  divide the $\tau$ rounds into phases. Phase $i$ starts at time  $\tau_i=\min \{ t: \sum_{j=1}^t \phi_j \geq 2i \} $ for $i\geq 0$ and ends at $\tau_{i+1}-1$. Since $\phi_j \leq 1$ for all $j\geq 0$ we have $\tau_0 < \tau_1 < \dots$ and $\sum_{j=\tau_i}^{\tau_{i+1}} \phi_j \geq 1$ for $i\geq 0$.  Let $\ell_t$ be the number of distinct opinions at the beginning of phase $t$. Hence, $\ell_0=\kappa$.

We  show in Lemma~\ref{fraction} below that the expected number of phases before the number of  opinions drops by a factor of $5/6$ is bounded by
$6c \cdot \vol(V)/(\ell_t\cdot  \dmin)$.
For $i\ge 1$ let $T_i$ be the number of phases needed so that the number of opinions drops to $(5/6)^i\cdot \ell_0$. Then only one opinion remains after
$\log_{6/5}\kappa$ many of these meta-phases.
Then, for a suitably chosen constant $b$,
\begin{eqnarray*}
    \Exp[T]
   & =& \sum_{j=1}^{\log_{6/5}\kappa} E[T_j]\leq
    \sum_{j=1}^{\log_{6/5}\kappa-1} \frac{6c \cdot \vol(V)}{\ell_j \cdot \dmin}
    \leq
    \sum_{j=1}^{\log_{6/5}\kappa}\frac{6c\vol(V)}{(5/6)^{j}\cdot \ell_0\cdot \dmin}=\frac{b \cdot m}{4\cdot \dmin }.
\end{eqnarray*}
By Markov inequality, consensus is reached w.p.\ at least $1/2$ after $b\cdot m/ (2\dmin)$ phases.
By definition of $\tau$ and the definition of the phases, we have that the number of phases up to time step $\tau$ is at least $b\cdot m/ (2\dmin)$. Thus, consensus is reached w.p.\ at least $1/2$ after $\tau$ time steps, which finishes the proof.
\end{proof}

\begin{lemma}\label{fraction}
Fix a phase $t$ and assume $c=129$ and $\ell_t > 1$.  The expected number of phases before the number of  opinions drops to  $5/6\cdot \ell_t$ is bounded by
$6c \cdot \vol(V)/(\ell_t\cdot  \dmin)$.
\end{lemma}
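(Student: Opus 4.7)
The plan is to reduce the multi-opinion setting to the two-opinion case and then apply Lemma~\ref{lem:two-op-const-prob} once per opinion. The key observation is that if one tracks only whether a node holds opinion $i$ or not, then the evolution of $V^{(i)}$ obeys exactly the same transition rule as in the two-opinion process: a node changes its membership in $V^{(i)}$ iff it picks a random neighbour with the opposite status and the fair coin comes up heads. In particular, the potential machinery of Lemma~\ref{lem:technical-st} and the conclusion of Lemma~\ref{lem:two-op-const-prob} carry over to the indicator process of any single opinion $i$ inside a $\kappa$-opinion process.

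The first step is to split the opinions at the start of phase $t$ into \emph{small} and \emph{large}: opinion $i$ is small if $\vol(V^{(i)}_{\tau_t}) \le 2\vol(V)/\ell_t$. Since $\sum_i \vol(V^{(i)}_{\tau_t})=\vol(V)$, by pigeonhole at least $\ell_t/2$ opinions are small; call this set $S$. For each $i\in S$, Lemma~\ref{lem:two-op-const-prob} applied to the two-opinion process ``opinion $i$ versus the rest'' tells us that once the accumulated conductance is at least $c\cdot 2\vol(V)/(\ell_t\,\dmin) = 2c\vol(V)/(\ell_t\,\dmin)$, opinion $i$ has either vanished or taken over the entire graph with probability at least $1/2$. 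By definition of phases, each phase contributes at least a constant amount to the running sum of conductances, so some number $Z=O\bigl(c\vol(V)/(\ell_t\,\dmin)\bigr)$ of phases suffices for this event to have probability $\ge 1/2$ simultaneously for every small $i$.

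Let $D$ denote the number of small opinions that have been decided after these $Z$ phases. Linearity gives $\Exp[D] \ge |S|/2 \ge \ell_t/4$, and since at most one opinion can take over the whole graph, at least $D-1$ small opinions must have vanished. A one-sided Markov bound on the non-negative variable $|S|-D$ yields a constant probability $\alpha>0$ that $D \ge \ell_t/6$, in which event the number of remaining opinions is at most $5\ell_t/6$ (if consensus happens to be reached, the target is trivially satisfied). A standard geometric restart over blocks of length $Z$ then bounds the expected number of phases to reduce the opinion count to $5\ell_t/6$ by $Z/\alpha$, and tuning the block length and the ``small'' threshold delivers the claimed constant $6c$.

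The main obstacle is the two-opinion reduction: one must verify carefully that the indicator dynamics of opinion $i$ inside a multi-opinion process agrees in law with the two-opinion voter model analysed in Lemma~\ref{lem:technical-st}, so that the potential estimate used in Lemma~\ref{lem:two-op-const-prob} still applies to $\vol(V^{(i)}_t)$ alone. Once that is done the remainder is routine: a pigeonhole over opinion volumes, Lemma~\ref{lem:two-op-const-prob} applied in parallel to the small opinions, a one-sided Markov bound, and a geometric restart. The only genuine nuisance is bookkeeping the constants so that they fit into $6c\vol(V)/(\ell_t\,\dmin)$, which can always be absorbed by enlarging the ``small'' threshold and shrinking $\alpha$ accordingly.
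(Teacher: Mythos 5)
Your proposal follows the same route as the paper: reduce to the two-opinion process by tracking the indicator of a single opinion, identify by pigeonhole a linear (in $\ell_t$) number of small-volume opinions, apply Lemma~\ref{lem:two-op-const-prob} to each, combine with linearity and Markov, and finish with a geometric restart. The one place you are more careful than the paper is in noting that at most one opinion can prevail (so ``decided'' essentially means ``vanished''); otherwise the structure, the constants' role, and the appeal to Lemma~\ref{lem:two-op-const-prob} all match the paper's argument.
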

\begin{proof}
Consider a point when there are $\ell'$ opinions left, with $5/6 \cdot \ell< \ell' \leq \ell$.
Among those $\ell'$ opinions, there are at least $\ell'- \ell/3$ opinions $i$ such that the volume of nodes with opinion $i$ is at most $3\cdot \vol(V)/\ell$. Let $S$ denote the set of these opinions and
let $Z_i$ be an indicator variable which is 1 if opinion $i\in S$ vanished after
$s = 3 c\cdot \vol(V)/(\ell\cdot \dmin)$ phases and $Z_i=0$ if it prevails.
To estimate $Z_i$ we consider the process where we have two opinions only.
All nodes with opinion $i$ retain their opinion and all other nodes have opinion $0$. It is easy to see that in both processes the set of nodes with
opinion $i$ remains exactly the same.
Hence, we can use Lemma \ref{lem:two-op-const-prob}
to show that  with probability at least $1/2$, after $s$ phases opinion $i$ either vanishes or prevails.
Hence,  $$\Exp\left[\Sigma_{j\in S}Z_j\right]= \Sigma_{j\in S}\Exp[Z_j]\geq |S|/2 \geq (\ell'- \ell/3)/2 .$$

Using Markov's inequality we get that  with probability $1/2$ at least $(\ell'- \ell/3)/4$ opinions vanish within $s$ phases, and the number of opinions remaining is at most
$\ell' - (\ell'- \ell/3)/4 = 3/4\cdot \ell'+ \ell/12 \leq 5/6\cdot \ell$.
The expected number of phases until $5/6\cdot \ell$ opinions can be bounded by
$\sum_{i=1}^{\infty} 2^{-i}\cdot s\le 2s= \frac{6 c \cdot\vol(V)}{\ell\cdot \dmin}.$
\end{proof}

\subsection{Part 2 of Theorem~\ref{thm:voter}}
The following lemma is similar to Lemma~\ref{lem:two-op-const-prob} in the last section: We first bound the expected potential drop in round $t+1$, i.e., we bound $\Exp[\pot{t+1} -\potfixed{t} \mid S_t=s_t]$.
This time however, we express the drop as a function which is linear in  $\potfixed{t}$. This allows  us to bound the expected size of the potential at time $\tau'$, i.e., $\Exp[\pot{\tau'}]$, directly. From the expected size of the potential at time $\tau'$ we derive the desired bound on  $Pr\left(T \leq \tau' \right)$.

\begin{lemma}
    \label{lem:forsecondpart}
    Assume $\kappa=2$. We have
$\Pr\left(T \leq \tau' \right) \geq 1/n^2.$
In particular, if the graph is static with conductance $\phi$, then $\Pr\big(T \leq \frac{96\cdot n\log n}{\phi^2 }\big) \geq 1-1/n^2$.
\end{lemma}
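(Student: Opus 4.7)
The plan is to sharpen the one-step drift bound of Lemma~\ref{lem:technical-st} so that it becomes purely multiplicative in $\potfixed{t}$, with a shrinking factor of order $\phi_t^{2}/n$. This is strictly stronger than the additive drift of order $\dmin\phi_t/\potfixed{t}$ used in Part~1, and it lets us iterate directly without the Jensen manoeuvre used in Lemma~\ref{lem:two-op-const-prob}. The extra factor of $\phi_t$ will come from Cauchy--Schwarz: with $a_u=\sqrt{\lambda_{u,t}\de_u}$ and $b_u=\sqrt{\lambda_{u,t}/\de_u}$,
\[
\Big(\sum_{u\in V}\lambda_{u,t}\Big)^{2} \;\le\; \Big(\sum_{u\in V}\lambda_{u,t}\de_u\Big)\Big(\sum_{u\in V}\lambda_{u,t}/\de_u\Big) \;\le\; n\sum_{u\in V}\lambda_{u,t}\de_u,
\]
since $\lambda_{u,t}/\de_u\le 1$. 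Coupled with the conductance bound $\sum_{u\in V}\lambda_{u,t}=2\,|\mathrm{cut}(s_t,V\setminus s_t)|\ge 2\phi_t\vol(s_t)=2\phi_t\potfixed{t}^{2}$, this yields $\sum_{u\in V}\lambda_{u,t}\de_u \ge 4\phi_t^{2}\potfixed{t}^{4}/n$. Substituting into Lemma~\ref{lem:technical-st} gives
\[
\Exp[\pot{t+1}\mid S_t=s_t]\;\le\;\potfixed{t}\Big(1-\tfrac{\phi_t^{2}}{8n}\Big).
\]

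Since the right-hand side is linear in $\potfixed{t}$, the tower property gives $\Exp[\pot{t+1}]\le \Exp[\pot{t}]\,(1-\phi_t^{2}/(8n))$ without needing to condition on $T>t$. Iterating from $t=0$ yields
\[
\Exp[\pot{\tau'}]\;\le\;\potfixed{0}\cdot\exp\!\Big(-\tfrac{1}{8n}\sum_{t=1}^{\tau'}\phi_t^{2}\Big)\;\le\;\sqrt{m}\cdot\exp\!\Big(-\tfrac{1}{8n}\sum_{t=1}^{\tau'}\phi_t^{2}\Big),
\]
using $\potfixed{0}=\sqrt{\vol(s_0)}\le\sqrt{m}\le n$. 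Whenever $T>t$ one has $\vol(s_t)\ge \dmin\ge 1$, so $\pot{t}\ge 1$, and Markov's inequality gives $\Pr(T>\tau')\le\Exp[\pot{\tau'}]$. Choosing the constant $b$ in the definition $\sum_{t=1}^{\tau'}\phi_t^{2}\ge bn\log n$ sufficiently large forces this below $1/n^{2}$, giving the claimed high-probability bound; for the static specialisation the constants line up to produce $\tau'=96 n\log n/\phi^{2}$.

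The only nontrivial step is the Cauchy--Schwarz manipulation, and it is essentially forced: the unmodified bound of Lemma~\ref{lem:technical-st} hides only one factor of $\phi_t$ inside $\sum_u\lambda_{u,t}\de_u$, and simply using $\potfixed{t}^{2}\le m$ to convert the additive drop into a multiplicative one would merely reproduce the $m/(\dmin\phi)$ bound of Part~1. Weighting the cut contribution $\lambda_{u,t}$ by the degree $\de_u$ through Cauchy--Schwarz is what extracts the second factor of $\phi_t$; once this is in place, the tower-iteration together with Markov's inequality closes the argument with no further work.
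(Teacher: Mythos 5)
Your proposal is correct and follows essentially the same route as the paper: both convert the additive drift of Lemma~\ref{lem:technical-st} into a multiplicative one of the form $\Exp[\pot{t+1}\mid S_t=s_t]\le \potfixed{t}(1-\Theta(\phi_t^2/n))$ via Cauchy--Schwarz combined with the conductance bound, then iterate by the tower property and finish with Markov. The only cosmetic difference is that you apply Cauchy--Schwarz directly to $\sum_u\lambda_{u,t}\de_u$ (splitting $\lambda_{u,t}=\sqrt{\lambda_{u,t}\de_u}\cdot\sqrt{\lambda_{u,t}/\de_u}$), whereas the paper first drops $\de_u$ to $\lambda_{u,t}$ via $\lambda_{u,t}\le\de_u$ and then applies Cauchy--Schwarz to $\sum_u\lambda_{u,t}^2$; both give the same $\phi_t^2/n$ factor.
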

\begin{proof}
In the following we fix a point in time $t$ and use $\lambda_u$ instead of $\lambda_{u,t}$.
From Lemma~\ref{lem:technical-st} and  the observation $\lambda_u \leq \de_u$ we obtain for $s_t\not=\emptyset $
$$\Exp[\pot{t+1} -\potfixed{t} \mid S_t=s_t] \le
 - \frac{\sum_{u \in V} \lambda_{u}\cdot \de_u}{32\cdot(\potfixed{t})^3}
         \leq - \frac{\sum_{v\in v^{(0)}} (\lambda_u)^2 }{32(\potfixed{t})^3}.$$
We have, by Cauchy-Schwarz inequality,
$\left(\sum_{v\in v^{(0)}} \lambda_u\right)^2= \left(\sum_{v\in v^{(0)}} \lambda_u\cdot 1 \right)^2 \leq \sum_{v\in v^{(0)}} (\lambda_u)^2 \cdot n. $
Hence,
$$\sum_{v\in V} (\lambda_u)^2 \geq \sum_{v\in v^{(0)}} (\lambda_u)^2 \geq \frac{\left(\sum_{v\in v^{(0)}} \lambda_u\right)^2}{n} \geq
\frac{(\mbox{vol}(v^0))^2\cdot(\phi_t)^2}{ n}
\geq \frac{ \potfixed{t}^4 \cdot (\phi_t)^2}{n},$$ where the third inequality follows by definition of $\phi_t$.
Hence, for $s_t\not=\emptyset $ we have
$$\Exp[\pot{t+1} -\potfixed{t} \mid  S_t=s_t] \leq - \frac{ \potfixed{t} \cdot(\phi_t)^2}{32n}.$$
Note that $\Exp[\pot{t+1} \mid  S_t=\emptyset ]  = 0=(1- \frac{ (\phi_t)^2}{32n})\cdot \Exp[\Psi(\emptyset)].$
Hence, for all $t\geq 1$ we get $$\Exp[\pot{t+1}]=\Exp[\Exp[\pot{t+1} \mid  S_t=s_t ]] \leq \left(1- \frac{ (\phi_t)^2}{32n}\right)\cdot E[\potfixed{t}].$$
Applying this recursively yields
\begin{align*} \Exp[\pot{t+1}]&\leq \pot{0} \cdot\prod_{i=0}^t (1- \frac{ (\phi_i)^2}{32n}) 
\leq  \pot{0} \cdot\left(1- \frac{1}{t+1}\sum_{i\leq t} \frac{ (\phi_i)^2}{ 32n}\right)^{t+1}
\leq   \pot{0} \cdot \exp\left(\sum_{i\leq t} \frac{ (\phi_i)^2}{ 32n}\right),\\&
\end{align*}
where the second inequality follows from the Inequality of arithmetic and geometric means.

By definition of $\tau'$, and from the observation $\pot{0}\leq n$ we get that $\Exp[\pot{\tau'}]\leq n^{-2}.$


We derive
\begin{align}\label{blue}
n^{-2 }\geq  \Exp[\pot{\tau'}] \geq 0\cdot\Pr(\pot{\tau'}=0) + 1 \cdot (1-\Pr(\pot{\tau'}=0)),
\end{align}
where we used that $\min \{\Psi(S) : S\subseteq V: S \not=\emptyset\} \geq 1$.
Solving \eqref{blue} for $\Pr(\pot{\tau'}=0$ gives
 $\Pr(\pot{\tau'}=0)\geq 1-1/n^2$.

Since $\kappa=2$, it follows that $\Pr( T\leq \tau' )=\Pr(\pot{\tau'}=0)\geq 1-1/n^2$, which yields the claim.
\end{proof}

We now prove Part 2 of Theorem~\ref{thm:voter} which generalises to $\kappa > 2$.
\begin{proof}[Proof of Part 2 of Theorem~\ref{thm:voter}]
We define a parameterized version of the consensus time $T$. We define $T(\kappa) =\min\{t\colon \pot{t} = 0 : \text{the number of different opinions at time $t$ is $\kappa$}\}$ for $\kappa \leq n$.
We want to show that $\Pr( T(\kappa)\leq \tau' )\geq 1-1/n$.
From Lemma~\ref{lem:forsecondpart} we have that, that $\Pr( T(2)\leq \tau' )\geq 1-1/n^2$.
We define the 0/1 random variable $Z_i$ to be one if opinion $i$ vanishes or is the only remaining
opinion after $\tau'$ rounds and $Z_i=0$ otherwise. We have that $\Pr(Z_i = 1 )\geq 1-1/n^2$ for all $i\leq \kappa$.
We derive $\Pr( T(\kappa)\leq \tau' ) = \Pr( \land_{i\leq \kappa} Z_i) \geq 1-1/n$, by union bound. This yields the claim.
\end{proof}

\subsection{Lower Bounds}

%
%


In this section, we give the intuition behind the proof of Theorem~\ref{st:adaptivelowerbound} and state two additional observations.
Recall that Theorem~\ref{st:adaptivelowerbound}  shows that
our bound for regular graphs is tight for the  adaptive adversary, even for $k=2$.
The first observation shows that the expected consensus time can be super-exponential
 if the adversary is allowed to change the degree sequence.
The second observation can be regarded as a (weaker)  counter part of  Theorem~\ref{st:adaptivelowerbound} showing a lower bound of $\Omega(n/\phi)$ for static graphs, assuming that either $d$ or $\phi$ is constant. 

We now give the intuition behind the proof of Theorem~\ref{st:adaptivelowerbound}. 
The high level approach is as follows.
For every step $t$ we define an adaptive adversary that  chooses $G_{t+1}$ after observing $V_{t}^{(0)}$ and $V_{t}^{(1)}$.
The adversary chooses  $G_{t+1}$ such that the cut between $V_{t}^{(0)}$ and $V_{t}^{(1)}$  is of order of $\Theta(\phi_t \cdot d n)$.  We show that such a graph exists when the number of nodes in both  $V_{t}^{(0)}$ and $V_{t}^{(1)}$ is at least of linear size (in $n$).
By this choice the adversary ensures that the expected potential drop of $\pot{t+1}$ at most $-c\phi_t d/\potfixed{t} $ for some constant $c$.
Then we use the expected potential drop, together with the optional stopping theorem, to derive our lower bound.

We proceed by giving a lower bound on the potential drop assuming a cut-size of $\Theta(\phi_t \cdot d\cdot n)$.

\begin{lemma}\label{claiminlower}
Assume $|cut(s_t, V\setminus s_t)|
\leq c\phi_t \cdot d n$ for some constant $c$. 
   Then we have
    \begin{equation}\label{eq:potdroplower}
        \Exp[\pot{t+1} \mid S_t = s_t] \geq \potfixed{t} -\frac{c\cdot \phi_t \cdot d}{\potfixed{t}} .
    \end{equation}
\end{lemma}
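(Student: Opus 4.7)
My plan is to mirror the proof of Lemma~\ref{lem:technical-st}, but with the key Taylor inequalities reversed to produce a lower bound. As in that proof, I would define $\Delta := \sum_{u \in V} X_u$ using the same signed-contribution random variables $X_u \in \{-d,0,+d\}$, so that $\vol(V^{(0)}_{t+1}) = \vol(v^{(0)}) + \Delta$. The computations already carried out there yield $\Exp[\Delta] = 0$ and $\Exp[\Delta^2] = \sum_u \Var[X_u] \leq d \cdot |cut(s_t, V\setminus s_t)|$, which by the hypothesis of the lemma is at most $c\phi_t d^2 n$.

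The first technical step is to unpack the $\min$ in $\pot{t+1}^2 = \min(\vol(V^{(0)}_{t+1}), m - \vol(V^{(0)}_{t+1}))$ via the elementary identity $\min(a, m-a) = a - 2(a - m/2)^+$, which gives $\pot{t+1}^2 = \potfixed{t}^2 + \Delta - 2R$ with $R := (\vol(V^{(0)}_{t+1}) - m/2)^+$. The random variable $R$ vanishes outside the event that the minority flips in a single step, and Markov's inequality gives $\Exp[R] \leq \Exp[\Delta^2]/(m/2 - \vol(v^{(0)}))$.

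The second step is to pass from $\pot{t+1}^2$ to $\pot{t+1}$ using the pointwise lower bound $\sqrt{1+x} \geq 1 + x/2 - x^2/2$, valid for every $x \geq -1$ (verifiable by elementary calculus: the difference has zeros at $x=-1$ and $x=0$, is convex near $0$ with vanishing first derivative there, and a short case analysis shows non-negativity throughout $[-1,\infty)$). Applied with $x = (\Delta - 2R)/\potfixed{t}^2 \geq -1$ and taking expectation, and using that $(\Delta - 2R)^2 \leq 9\Delta^2$ because $R \leq \Delta^+$, this yields
\begin{equation*}
\Exp[\pot{t+1} \mid S_t = s_t] \;\geq\; \potfixed{t} \;-\; O\!\left(\tfrac{\Exp[\Delta^2]}{\potfixed{t}^3}\right) \;-\; O\!\left(\tfrac{\Exp[\Delta^2]}{(m/2 - \vol(v^{(0)}))\,\potfixed{t}}\right).
\end{equation*}
Both error terms collapse to the desired $O(\phi_t d/\potfixed{t})$ after substituting $\Exp[\Delta^2] = O(\phi_t d^2 n)$, provided that $\potfixed{t}^2 = \vol(s_t)$ and $m/2 - \vol(v^{(0)})$ are both $\Omega(dn)$. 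These two conditions are furnished by the ambient adversarial construction: a cut of size $\Theta(\phi_t d n)$ can only be realized when both sides of the partition have linear size in $n$, which is the implicit regime where the lemma is applied.

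The main obstacle is the correction from the minority-flip event $\{R > 0\}$ when $\vol(v^{(0)})$ is close to $m/2$, since the Markov bound on $\Exp[R]$ then degrades. In that regime I would exploit the near-symmetry of the two opinions: for $\vol(v^{(0)}) \approx m/2$ a direct second-moment calculation gives $\Exp[\pot{t+1}^2] \geq \potfixed{t}^2 - O(\Exp[\Delta^2])$, from which concavity of the square root yields a lower bound of the same order as in the asymmetric regime, allowing the two cases to be absorbed into a single constant $c$.
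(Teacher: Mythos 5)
Your decomposition $\min(a,m-a) = a - 2(a-m/2)^+$, followed by a Markov bound on the excess $R$, is a genuinely different route from the paper's. The paper lower-bounds $\min\{\sqrt{\vol+\Delta},\sqrt{m-\vol-\Delta}\}$ by $\sqrt{\vol-|\Delta|}$ and then applies a Taylor bound to $\sqrt{1-|\Delta|/\vol}$; your decomposition has the structural advantage of preserving $\Exp[\Delta]=0$, so the linear Taylor term vanishes exactly and only $\Exp[R]$ survives, whereas the paper's reduction introduces an $\Exp[|\Delta|]$ term that it never actually disposes of (there is a sign slip in \eqref{huhu} that silently drops it). For the asymmetric regime $m-\vol(s_t)=\Omega(dn)$ your argument is sound and yields the stated bound.

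There are, however, two linked gaps. First, you assert that both $\vol(s_t)=\Omega(dn)$ and $m/2-\vol(v^{(0)})=\Omega(dn)$ are ``furnished by the ambient adversarial construction'' because a cut of size $\Theta(\phi_t dn)$ forces both sides to have linear cardinality. That argument gives the first condition but not the second: having $|s_t|, |V\setminus s_t| = \Omega(n)$ does not bound $\vol(s_t)$ away from $m$. Indeed, the adversarial construction in the paper \emph{starts} at $|s_0| = n/2$, where $m-\vol(s_0)=0$, so the lemma must already cover the balanced state. Second, your fallback for that regime is invalid: concavity of $\sqrt{\cdot}$ gives $\Exp[\pot{t+1}]\le\sqrt{\Exp[\pot{t+1}^2]}$, i.e.\ Jensen yields an \emph{upper} bound on $\Exp[\pot{t+1}]$ from a lower bound on the second moment, which is the wrong direction. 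Moreover the second-moment identity is exact: $\Exp[\pot{t+1}^2]=\potfixed{t}^2-2\Exp[R]$, and at $b=m-\vol(s_t)=0$ one has $R=\Delta^+$ with $\Exp[R]=\Theta\bigl(\sqrt{\Exp[\Delta^2]}\bigr)=\Theta(d\sqrt{\phi_t n})$, which is far larger than the $O(d\phi_t)$ required by the lemma's conclusion. So the near-balanced case genuinely cannot be absorbed into the constant $c$ and needs a different argument — a difficulty that the paper's own proof does not resolve either.
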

\begin{proof}
The proof is similar to the proof of Lemma \ref{lem:technical-st}.
\begin{align*}
  \Exp&\left[\pot{t+1} -\potfixed{t} \mid S_t = s_t\right] =\\
&=\Exp[\sqrt{\vol(S_{t+1})} - \sqrt{\vol(s_t)}]
   \\&=
  \Exp\left[\sqrt{\min\left\{\vol(V^{(0)}_{t+1} ),m-\vol(V^{(0)}_{t+1} )\right\}  } - \sqrt{\vol(s_t)}\right]\\
&=
\Exp\left[\min \left\{\sqrt{\vol(v^{(0)}_{t} ) + \Delta } ,\sqrt{m-(\vol(v^{(0)}_{t} )+\Delta)}\right\}-   \sqrt{\vol(v^{(0)}_{t} )} \right]\\
&\geq
\Exp\left[\min \left\{\sqrt{\vol(v^{(0)}_{t} ) + \Delta } ,\sqrt{\vol(v^{(0)}_{t} )-\Delta}\right\}-   \sqrt{\vol(v^{(0)}_{t} )} \right]\\
 &=
\Exp\left[\sqrt{\vol(v^{(0)}_{t} ) - |\Delta|} -  \sqrt{\vol(v^{(0)}_{t} )}\right]\\
 &=
\Exp\left[ \sqrt{\vol(v^{(0)}_{t} )}\left(\sqrt{1 - \tfrac{|\Delta|}{ \vol(v^{(0)}_{t} )}} -  1\right) \right]\\
 &\geq
\potfixed{t}\cdot \Exp\left[ \frac{|\Delta|}{2\potfixed{t}^2}-\frac{|\Delta|^2}{\potfixed{t}^4} \right] \numberthis\label{huhu}
\end{align*}
where the last inequality comes from the Taylor expansion inequality $\sqrt{1+x} \geq 1+\frac x2-x^2$, $x\geq-1$.

\noindent
Similar to \eqref{squuuared} we get
\begin{align*}
    \Exp[(\Delta)^2]
    &=
    \sum_{u\in V}\Var[X_u] +(\Exp[X_u])^2=\sum_{u\in V}\Var[X_u]+0
    =
    \sum_{u\in V}(\Exp[X_u^2] - (\Exp[X_u])^2)\\
 &\leq
    \sum_{u\in V}\Exp[X_u^2]
    =
    \sum_{u \in V} \frac{\lambda_u \cdot d}{2} = d\cdot |\mbox{cut}(s_t, V\setminus s_t)|  \leq c \cdot \phi_t \cdot d\cdot \vol(s_t)= c \cdot\phi_t\cdot d\potfixed{t}^2.
\end{align*}
From \eqref{huhu} we derive now
\begin{align*}
  \Exp\left[\pot{t+1} -\potfixed{t} \mid S_t=s_t\right]
 &\geq
\potfixed{t} \cdot \left( -\frac{c\cdot \phi_t \cdot d}{\potfixed{t}^2} \right)\geq -\frac{c\cdot \phi_t \cdot d}{\potfixed{t}}.
\end{align*}

\end{proof}


%

We now describe the adversary for the graphs that we use in our lower bound. We assume that we have initially two opinions and each of the two opinion is on $n/2$ nodes.Recall that we can assume that in round $t+1$ the adversary knows the graph $G_t$ as well as the distribution of the opinions over the nodes.
The adversary generates $G_{t+1}$ as follows. $c$ is a constant which is defined in Lemma~\ref{lem:givemecut}).
\begin{itemize}
\item If $|s_t|\geq \gamma \cdot n$, the adversary creates a $d$-regular graph $G_{t+1}$ with two subsets $s_t$ and $V\setminus s_t$ such that the conductance of the $cut(s_t, V\setminus s_t)$ is at most $c\cdot \phi_t$;
According to  Lemma~\ref{lem:givemecut} such a graph always exist.
\item If $|S_t|= |s_t|< \gamma \cdot n$, the adversary does not change the graph and sets $G_{t+1}=G_{t}$.
\end{itemize}

\medskip

To show Theorem~\ref{st:adaptivelowerbound} we  first define a new potential function $g$ and
bound the one step potential drop.  For $x\geq 0$ we
define $g(x)=   \frac{x^2}{2c d}.$
Since $g(\cdot)$ is convex we obtain from Lemma~\ref{claiminlower}, together with  by Jensen's inequality that

\begin{align*}
  \Exp\left[g\left(\pot{t+1}\right) -g\left(\potfixed{t}\right) \mid S_t=s_t\right] &\geq g\left(  \Exp\left[\pot{t+1}\mid S_t = s_t\right]\right) - g\left(\potfixed{t}\right)\\
&\geq  \frac{1}{2c d} \cdot \left(\left(\potfixed{t} - \frac{c\cdot \phi_t \cdot d}{\potfixed{t}}\right)^2-(\potfixed{t})^2\right) \\
&\geq -\phi_t
.
\end{align*}
In the following lemma  we  use  some Martingale arguments to show that with a probability of at least one half $ \vol(S_{\tau''}) \geq \gamma d\cdot n/2$. This implies that no opinion vanished after $\tau''$ w.p. $1/2$, which yields Theorem~\ref{st:adaptivelowerbound}.

\begin{lemma} Let $|s_0|=n/2$. Fix some constant $\gamma < 1/4$.
Assume $G_1,G_2,\ldots$ is s sequence of graphs generated by the adversary defined above. Then
$$P\left(g(\pot{\tau''})>  \frac{2\gamma\cdot n}{4c}\right)\ge\frac12.$$
\end{lemma}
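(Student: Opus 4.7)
My plan is to combine the single-step drift estimate on $g(\pot{t})$ established just above with an optional-stopping argument on a submartingale that absorbs the cumulative conductances.

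First I would introduce the stopping time $\sigma = \min\{t \geq 0 : |S_t| < \gamma n\}$ and set $\tau = \sigma \wedge \tau''$. Since each $G_t$ is $d$-regular, $g(\pot{t}) = \vol(S_t)/(2cd) = |S_t|/(2c)$, and in particular $g(\pot{0}) = n/(4c)$ because $|s_0| = n/2$. The adversary enforces $|cut(s_t, V \setminus s_t)| \leq c\,\phi_t\, d\, n$ exactly when $|s_t| \geq \gamma n$, i.e.\ at every step $t < \sigma$, so the bound $\Exp[g(\pot{t+1}) - g(\potfixed{t}) \mid S_t = s_t] \geq -\phi_t$ displayed immediately before the lemma is valid whenever $t < \sigma$. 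Consequently
\[
M_t \;=\; g(\pot{t \wedge \tau}) \;+\; \sum_{i < t \wedge \tau} \phi_i
\]
is a submartingale: the $+\phi_t$ correction cancels the drift estimate while $t < \tau$, and $M_t$ is frozen thereafter.

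Next, since the paper notes $\phi_t \geq 1/n^2$, the stopping time $\tau \leq \tau''$ is deterministically bounded by $b n^3$, so optional stopping applies without integrability issues and yields $\Exp[M_\tau] \geq M_0 = n/(4c)$. By the definition of $\tau''$, $\sum_{i < \tau} \phi_i < bn$, hence $\Exp[g(\pot{\tau})] \geq n/(4c) - bn$. Writing $p = \Pr(\sigma > \tau'')$, the case split is immediate: on $\{\sigma \leq \tau''\}$ one has $\tau = \sigma$ and $|S_\sigma| < \gamma n$, so $g(\pot{\tau}) < \gamma n/(2c)$; on $\{\sigma > \tau''\}$ one has $\tau = \tau''$ and $g(\pot{\tau}) \leq n/(4c)$ since $|S_t| \leq n/2$ throughout. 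Combining,
\[
\tfrac{n}{4c} - bn \;\leq\; \tfrac{\gamma n}{2c}(1-p) \;+\; \tfrac{n}{4c}\,p,
\]
which rearranges to $p \geq (1 - 4bc - 2\gamma)/(1 - 2\gamma)$. Because $\gamma < 1/4$ is fixed, choosing the constant $b$ small enough in terms of $c$ and $\gamma$ forces $p \geq 1/2$; on $\{\sigma > \tau''\}$ we have $|S_{\tau''}| \geq \gamma n$, i.e.\ $g(\pot{\tau''}) \geq 2\gamma n/(4c)$. The strict inequality in the statement is recovered by an infinitesimal enlargement of $\gamma$ or by appealing to the integrality of $|S_{\tau''}|$.

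The only real obstacle I anticipate is placing the cut-off $\sigma$ correctly: the adversary's cut-size guarantee -- and therefore the drift bound on $g(\pot{t})$ -- is promised only while $|S_t| \geq \gamma n$, so any attempt to run optional stopping with $\tau''$ alone would fail once the minority shrinks below that threshold. Inserting $\sigma$ into the stopping time makes the submartingale argument go through, and the rest is short linear bookkeeping to choose $b$ in terms of $c$ and $\gamma$.
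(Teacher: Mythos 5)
Your proposal is correct and follows essentially the same route as the paper: you stop the process at the minimum of $\tau''$ and the first time the minority drops to size $\approx\gamma n$, build the submartingale $g(\pot{t\wedge\tau}) + \sum\phi_i$ using the drift bound derived from Lemma~\ref{claiminlower}, apply optional stopping, and finish with the same case split and linear bookkeeping that forces $p\ge 1/2$ for $b$ small enough relative to $c$ and $\gamma<1/4$. One small improvement over the paper: you justify the boundedness of the stopping time directly via $\phi_t\ge 1/n^2$ (so $\tau\le\tau''\le bn^3$ deterministically), whereas the paper's proof cites Theorem~\ref{st:adaptivelowerbound} for $T'<\infty$ even though that theorem is itself a consequence of this lemma; your version removes that circular-looking reference.
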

\begin{proof}

We define $T'=\min \{ \tau'', \min \{  t: vol(S_t) \leq 2\gamma d n/2\}\}$, where we recall that
 $\tau''$ is the first rounds so that $\sum_{t=1}^{\tau''} \phi_t \geq b n.
$. We define  $$Z_t=g(\pot{t})  + \sum_{i\leq t}\phi_i$$ and show it the following that $Z_{t\land T' }$ is a sub-martingale with respect to the sequence $S_1, S_2, \dots$, where $t\land T' = \min \{ t,T' \}$.

\paragraph*{\bf Case $t < T'$:}
For any $t < T'$ we have
\begin{align}\label{martin}
E[Z_{t+1\land T' }\ |\ S_{t},\dots, S_1] &=
E[Z_{t+1}\ |\ S_{t},\dots, S_1] \notag\\&= \Exp\left[g(\pot{t+1})
+ \sum_{i\leq t+1}\phi_i\  \middle |\  S_{t},\dots, S_1\right]\notag\\
&= \Exp\left[ g(\pot{t+1})-g(\potfixed{t}) \   \middle|\ S_{t},\dots, S_1\right] + g(\potfixed{t}) + \sum_{i\leq t+1}\phi_i\notag\\
&\geq  -\phi_{t+1}+ \left( g(\potfixed{t}) +  \sum_{i\leq t}\phi_i  \right) + \phi_{t+1}  \notag\\
&= Z_{t}=Z_{t\land T'},
\end{align}
where the last equality follows since $t< T'$.
%

\paragraph*{\bf Case $t  \geq  T'$:}
For $t  \geq  T'$   and we have $$E[Z_{(t+1)\land T'}\ |\ S_{t},\dots, S_1] = E[Z_{T'}\ |\ S_{t},\dots, S_1]  =Z_{T'}=Z_{t\land T'},$$
where the last equality follows since $t \geq T'$.
Both cases together show that  $Z_{t\land T' }$ is a sub-martingale. %
According to Theorem~\ref{st:adaptivelowerbound}  we have $T' < \infty$.
Hence we can apply the optional stopping-time Theorem (c.f. Theorem~\ref{optional}), which results in  $$E[Z_{t\land T' }]\geq E[Z_0]=\frac{n}{4c}.$$
\noindent
We define $$p=P\left(g(\pot{T'})\leq \frac{2\gamma\cdot n}{4 c}\right).$$
By definition of $S_t$, we have $|S_t| \leq s_0$ and thus
$$g(\pot{T'}) \leq g(\potfixed{0})=\frac{n}{4c}.$$
Thus $E[Z_{\tau''}]\leq \frac{n}{4c} +  \sum_{i\leq \tau''}\phi_i$.
Hence we derive using $T'\leq \tau''$ that
\begin{align*}
\frac{n}{4c}&=E[Z_0]\leq E[Z_{T'}] \\&\leq p \cdot\left( \frac{2\gamma\cdot n}{ 4c} +  \sum_{i\leq T'}\phi_i  \right)+ (1-p)\cdot \left( \frac{n}{4c} +  \sum_{i\leq \tau''}\phi_i  \right)  \notag\\
&\leq 2 p\cdot  \gamma \cdot \frac{n}{4c} + (1-p)\cdot \frac{n}{4c}  +  \sum_{i\leq \tau''}\phi_i \notag\\
&\leq    2 p\cdot \gamma\cdot \frac{n}{4c}+ (1-p)\cdot\frac{n}{4c}  +  b \cdot n+1,
\end{align*}

where the last inequality follows from the definition of $\tau''$ together with the fact that $\phi_i\leq 1$ for all $i$.
Hence
$$ 0\leq 2p \cdot \gamma  \cdot \frac{n}{4c} -p\cdot \frac{n}{4c}  +  b \cdot n+1 $$ which equals
$$p\cdot ( 1-2\gamma) \cdot \frac{n}{4c} \leq b\cdot n + 1.$$
Thus
 for $\gamma<1/4$ and $b<1/(18 c)$ we get $p\le 1/2$ and thus we have $P\left(g(\pot{T'})\leq \frac{2\gamma\cdot n}{4 c}\right)=p\le 1/2$ which yields the claim.
\end{proof}

In the following we observe that if the adversary is allowed to change the degrees, then
the expected consensus time is super-exponential.

\begin{observation}[super exponential runtime]\label{dontlettheadversarychangethedegrees}
Let $G_1=(V,E_1),\ G_2=(V,E_2),\ \dots$ be a sequence of graphs with  $n$ nodes,  where the edges $E_1, E_2, \dots$ are distributed by an adaptive adversary,
then the expected consensus time  is at least $\Omega((\sfrac{n}{c})^{\sfrac{n}{c}})$ for some constant $c$.
\end{observation}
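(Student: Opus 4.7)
The plan is to construct an explicit adaptive adversarial graph sequence realising the claimed super-exponential lower bound. I would fix an initial configuration with two opinions, placing opinion $1$ on a set $S_0$ with $|S_0|=\Theta(n)$ and opinion $0$ on its complement. Because the adversary is adaptive and, crucially, no longer constrained to preserve a degree sequence across rounds, it has enough freedom to construct each $G_{t+1}$ so that the probability of any node changing opinion in round $t+1$ is extremely small.

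The adversary's per-round move will be to build a connected graph whose cut between the two opinion classes consists of a single edge, chosen so that its two endpoints have as-large-as-possible degrees within their own opinion side (close to $|S_t|$ and $n-|S_t|$ respectively). Every non-endpoint vertex has all its neighbours inside its own opinion side, and therefore cannot change opinion at all; only the two bridge endpoints have any neighbours of the opposite opinion, and each such endpoint $u$ has change probability $\lambda_{u,t}/(2d_u)=1/(2d_u)$, which the adversary can drive down to $O(1/n)$ by choosing $d_u$ as large as possible. After any flip, the adversary reacts in the next round by rebuilding the bridge at a new position (and possibly with even larger degrees) so that consensus still requires many further such rare flips.

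The analysis would parallel the proof of Theorem~\ref{st:adaptivelowerbound}, but with the extra degrees-of-freedom harnessed. Concretely, I would introduce a convex potential $f(\vol(S_t))$ (for instance a quadratic in $\sqrt{\vol(S_t)}$ analogous to $g(\cdot)$ in the regular-graph lower bound above) whose expected per-round change, under the bridge construction, is bounded by a tiny quantity $-\alpha_t$ that the adversary can prescribe via its degree choice. An optional-stopping argument then yields $\Pr(T\le \tau)\le 1/2$ whenever $\sum_{t\le\tau}\alpha_t$ stays below a threshold proportional to the initial potential; tuning the $\alpha_t$ sequence appropriately translates this into the desired lower bound $\Exp[T]=\Omega((n/c)^{n/c})$.

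The main obstacle I expect is matching the quantitative factor $(n/c)^{n/c}$ rather than merely $e^{\Omega(n)}$ or $n^{\Omega(n)}$. This requires (i) showing that the successive bridge rebuilds never accelerate consensus by more than a controlled factor, which in turn forces a careful choice of how the endpoint degrees grow with $t$ and with $|S_t|$, and (ii) verifying that every prescribed cut size and degree profile is realisable as a simple graph on $n$ vertices at every round, a combinatorial check analogous to Lemma~\ref{lem:givemecut} but without the regularity constraint. Once these points are settled, the martingale/optional-stopping argument carries through directly and delivers the super-exponential bound.
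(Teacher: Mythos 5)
The paper's construction and yours differ in a way that actually breaks your bound. Your adversary makes \emph{both} bridge endpoints high-degree (``close to $|S_t|$ and $n-|S_t|$ respectively''), so both flip probabilities are driven down to $O(1/n)$. That makes transitions \emph{rare} but \emph{unbiased}: the minority size performs, up to a time change, a roughly symmetric nearest-neighbour random walk, and a symmetric walk started at $\Theta(n)$ reaches $0$ in $O(n^2)$ transitions, i.e.\ $O(n^3)$ rounds. The paper instead only makes the \emph{smaller} opinion a clique (so the bridge endpoint on the minority side has degree $\approx n/2$ and flip probability $O(1/n)$), while the majority side is kept thin so its bridge endpoint has \emph{constant} degree and flip probability $\Omega(1)$. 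That asymmetry is essential: the minority set shrinks with probability $O(1/n)$ but grows with probability $\Omega(1)$, so its size is a heavily biased random walk drifting back toward $n/2$. Only this bias gives the claimed $\Omega((n/c)^{n/c})$, via a gambler's-ruin coupling.

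Your proposed analysis (a convex potential $f(\vol(S_t))$ with an additive per-round drift $-\alpha_t$ plus optional stopping, ``parallel to Theorem~\ref{st:adaptivelowerbound}'') cannot produce a super-exponential bound even with the right construction. An additive-drift/optional-stopping argument yields $T \gtrsim \Psi_0/\alpha$, and with any polynomially bounded potential $\Psi_0$ this would force $\alpha_t$ to be super-exponentially small; but as long as there is at least one cut edge, the expected one-step change of a polynomial potential is $\Omega(1/n)$ in magnitude, so additive drift alone tops out at polynomial bounds. Super-exponential hitting times for a biased birth--death chain are not certified by a polynomial Lyapunov function; one needs either the explicit gambler's-ruin formula (as the paper uses, by coupling the minority size with a biased walk on a line of $\Theta(n)$ states with forward probability $O(1/n)$ and backward probability $\Omega(1)$) or an exponential-type potential. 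So both the construction (the bias is missing) and the analytic tool (additive drift instead of a gambler's-ruin estimate) need to change.
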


\begin{proof}
The idea is the following. The initial network is a line and there are two opinions $0$ and $1$ distributed on first $n/2$ and last $n/2$ nodes respectively. Whenever one opinion $i$ has fewer nodes than the other, the adversary creates a graph where only one edge is crossing the cut between both opinions and the smaller opinions forms a clique.
Hence, the probability for the smaller opinion to decrease is $O(1/n)$ and the probability for the bigger opinion to decrease is $\Omega(1)$. This can be coupled with  a biased random walk on a line of $n/4$ nodes with the a transition probability  at most $ 1/(n/4)=4/n$ in one direction and at least $1/2$ in the other. Consequently, there exists a constant $c$ such that expected consensus time is $\Omega((\sfrac{n}{c})^{\sfrac{n}{c}})$.
\end{proof}

The following observation shows that the bound of Theorem \ref{thm:voter} for static regular graphs of $O(n/\phi)$ is tight for regular graphs if either the degree or the conductance is constant. 

\begin{observation}[lower bound static graph]\label{lem:lowerbound}
For every $n$, $d\geq 3$, and constant $\phi$, there exists
a $d$-regular graph $G$ with $n$ nodes and a constant conductance such that the expected consensus time on $G$ is $\Omega(n ) $.
Furthermore, for every even $n$, $\phi>1/n$, and constant $d$, there exists a (static) $d$-regular graph $G$ with $\Theta(n)$ nodes and a conductance of $\Theta(\phi)$ such that the expected consensus time on $G$ is $\Omega(n/\phi ) $.
\end{observation}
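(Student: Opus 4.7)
The observation splits into two independent claims, which I treat separately.

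For the first claim (constant $\phi$, arbitrary $d \ge 3$), I would take $G$ to be any $d$-regular graph of constant conductance---a random $d$-regular graph (which has $\phi = \Theta(1)$ w.h.p.) or any explicit $d$-regular expander---and initialise the voter model with every vertex carrying a distinct opinion. By the voter/coalescing random walk duality recalled in the introduction, the expected consensus time then equals the expected coalescing time on $G$, which is at least the expected meeting time of two independent random walks on $G$ started at any prescribed pair of vertices. Since $G$ is $d$-regular with conductance $\Theta(1)$, Cheeger's inequality gives spectral gap $\Theta(1)$ and hence mixing time $O(\log n)$; once both walks have mixed they are within total variation $o(1)$ of the uniform distribution, so the per-step meeting probability is $\sum_v \pi(v)^2 = 1/n$. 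A standard calculation shows that the two walks fail to meet in the first $n/4$ steps with probability at least $1/2$, giving expected meeting time $\Omega(n)$, hence expected consensus time $\Omega(n)$.

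For the second claim (constant $d$, conductance $\phi>1/n$), I would build a dumbbell graph as follows: take two $d$-regular expanders $H_1, H_2$ on $n/2$ vertices each with constant internal conductance, pick a matching of size $k := \lfloor \phi d n / 4 \rfloor$ inside each $H_j$, delete those two matchings, and reconnect the $2k$ freed endpoints by a perfect matching across $(V(H_1), V(H_2))$. The resulting graph $G$ is $d$-regular; its bisection cut carries exactly $k = \Theta(\phi d n)$ edges, giving bisection ratio $\Theta(\phi)$, while every other cut inherits expansion $\Omega(1)$ from $H_1, H_2$ because only a $\Theta(\phi)$ fraction of interior edges is removed. Hence $\phi(G) = \Theta(\phi)$. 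I would initialise the voter model with opinion $0$ on all of $V(H_1)$ and opinion $1$ on all of $V(H_2)$, and set $X_t := \vol(V_t^{(0)})$. The computation driving Lemma~\ref{lem:technical-st} shows that $X_t$ is a martingale with per-step conditional variance $\Var[X_{t+1} - X_t \mid S_t] = \Theta(d \cdot |E(V_t^{(0)}, V_t^{(1)})|)$. Applying Doob's decomposition to $X_t^2$ and optional stopping at $T$ yields
\begin{equation*}
\Exp\Bigl[\sum_{t<T} \Var[X_{t+1}-X_t \mid S_t]\Bigr] \;=\; \Exp[X_T^2]-X_0^2 \;=\; \tfrac{(dn)^2}{2}-\tfrac{(dn)^2}{4} \;=\; \tfrac{d^2 n^2}{4}.
\end{equation*}
If one can show that the number of disagreeing edges $|E(V_t^{(0)}, V_t^{(1)})|$ stays $O(\phi n)$ throughout (the key technical claim below), then each per-step variance is $O(d^2\phi n)$, and the displayed identity forces $\Exp[T] = \Omega(n/\phi)$.

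The entire second claim hinges on the uniform bound on the disagreement count, and this is the main obstacle. The intuition is that each side $H_j$ is an expander so the voter model restricted to it drives rapidly towards local consensus, while the bottleneck of $\Theta(\phi n)$ cross edges injects the opposite opinion at rate $\Theta(\phi)$; the resulting quasi-stationary configuration keeps only a $\Theta(\phi)$ fraction of each side at the ``wrong'' opinion, spread almost uniformly because $H_j$ expands well, which produces only $\Theta(\phi n)$ disagreeing edges in total. Turning this intuition into a bound that holds uniformly in $t$ requires a concentration argument coupling the within-side voter dynamics to the cross-cut injection and ruling out transient excursions that would create substantially more disagreement. Once that claim is in hand, the martingale identity above yields the $\Omega(n/\phi)$ lower bound immediately.
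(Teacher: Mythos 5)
Your treatment of the first claim follows essentially the same route as the paper: use the voter/coalescing duality and lower-bound the meeting time of two walks on a regular expander by $\Omega(n)$ (the paper simply cites~\cite{AF14} here, while you sketch the mixing-time argument, but the strategy is the same).

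Your approach to the second claim, however, does not work, and the problem is the construction itself, not merely the unproved technical claim. On the $d$-regular dumbbell (two constant-conductance expanders on $n/2$ vertices joined by $\Theta(\phi n)$ cross edges, with $d$ constant), a random walk reaches the stationary distribution in $O(\log n)$ steps and thereafter is at a boundary vertex with probability $\Theta(\phi)$ per step, so it crosses the cut in expected time $\Theta(d/\phi)=\Theta(1/\phi)$. Since $\phi > 1/n$, this is $O(n)$, and the coalescing time (hence consensus time) of the dumbbell is $O(n)$, not $\Omega(n/\phi)$. Equivalently, the uniform disagreement bound $|E(V_t^{(0)}, V_t^{(1)})| = O(\phi n)$ that your argument ``hinges on'' is simply false: your own optional-stopping identity forces the time-averaged cut size to be $\Theta(dn)$ once $T=O(n)$, and indeed the voter model on each expander side does not reach local consensus fast enough to damp the injected disagreement (consensus within a side itself takes $\Theta(n)$ rounds, comparable to the whole process). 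The intuition that the cut ``injects at rate $\Theta(\phi)$'' while the interior ``rapidly locally consents'' reverses the actual order of the two time scales.

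The paper's construction is qualitatively different and is the ingredient you are missing: it starts from a constant-conductance $d$-regular expander $G'$ on $n'=\Theta(\phi n)$ vertices and replaces every edge of $G'$ by a path of length $\ell=\Theta(1/\phi)$ (padding paths with extra vertices to restore $d$-regularity). This graph has $\Theta(n)$ vertices and conductance $\Theta(\phi)$, and---crucially---a random walk needs $\Theta(\ell^2)=\Theta(1/\phi^2)$ rounds to traverse each path, so the walk projected onto $G'$ is slowed by a quadratic factor. The meeting time therefore is $\Omega(n'\cdot\ell^2)=\Omega(\phi n \cdot \phi^{-2})=\Omega(n/\phi)$, and the consensus time bound again follows by duality. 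Your dumbbell has the right conductance but lacks this path-traversal bottleneck, which is exactly why it cannot realise the $\Omega(n/\phi)$ lower bound.
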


\begin{proof}

For the first part of the claim we argue that
the meeting time of a $d$-regular graphs is $\Omega(n)$~\cite{AF14}. The claim follows from the duality between coalescing random walks and the voter model.
For the second part of the claim we construct a random graph $G'$ with $n'=\Theta(n\cdot \phi)$ nodes and a constant conductance (See, e.g.,~\cite{ASS08}).
We obtain $G$ by replacing every edge $(u,v)$ of $G'$ with a path connecting $u$ and $v$ of length
 $\ell=\Theta(1/\phi)$ and $\ell \mod d = 0$.
Additionally, we make $G$ $d$-regular by adding  $\ell(d-2)/d$ nodes to every path in such a way that the distance between $u$ and $v$ is maximised.
We note that the obtained graph $G$  has  conductance  $\Theta(\phi)$, $\Theta(n)$ nodes, and is $d$-regular.
The meeting time of $G'$ is $\Omega(n')$~\cite{AF14}.
Taking into account that traversing every path in $G$ takes in expectation $\ell^2=\Theta(1/\phi^2)$ rounds, the expected meeting time in $G$  is at least $\Omega(n' \cdot \ell^2)=\Omega(n/\phi)$.
This yields the claim.
\end{proof}

%
%

\section{Analysis of the Biased Voter Model}\label{biasedvotermodel}

In this section, we prove Theorem~\ref{thm:ub-bvoter}.
We show that the set $S_t$ of nodes with the  preferred opinion grows roughly at a rate of $1+\Theta(\phi_t)$, as long as
$S_t$ or $S'_t$  has at least logarithmic size.
For the analysis we break each round down into several steps, where exactly one node which  has at least one neighbour in the opposite set is considered.
Instead of analysing the  growth of $S_t$   for every round
we consider larger time {\em intervals}
consisting of  a suitably chosen number of steps. We change the process slightly by assuming that there is always one node with the preferred opinion.
If in some round  the preferred opinions vanishes totally, Node $1$ is set back to the preferred opinion. Symmetrically, if all other opinions vanishes, then Node $1$ is set to Opinion $1$. Note that this will only increase the runtime of the process.

\medskip

The proof unfolds in the following way.
First, we define formally the \emph{step sequence} $\mathcal{S}$.
Second, we define (Definition~\ref{def})  a step sequence $\mathcal{S}$ to be \emph{good} if, intuitively speaking,
 the preferred opinion grows quickly enough in any sufficiently large subsequence of $\mathcal{S}$.
Afterward, we show that if $\mathcal{S}$ is a good step sequence, then the preferred opinion prevails in at most $\tau'''$ rounds (Lemma~\ref{seqwhp}).
Finally, we show that that  $\mathcal{S}$ is indeed a good step sequence w.h.p. (Lemma~\ref{seqwin}).

We now give some definitions.
Again, we denote by $S_t$ the random set of nodes that have the preferred opinion right after the first $t$ rounds, and let $S'_t = V\setminus S_t$. For a fixed time step $t$ we  write $s_t$ and $s'_t$.
We define the {\em boundary} $\partial s_t $  as the subset  of nodes in $s'_t$ which are adjacent to at least one node from $s_t$. We use the symmetric definition for $\partial s'_t$.
For each $u\in V$, let $\lambda_{u,t}$ be the number of edges incident with $u$ crossing the cut $cut(s_t,s'_t)$, or equivalently, the number of $u$'s neighbours that have a different opinion than $u$'s before round $t$.

We divide each round $t$ into $|s_t| + |s'_t|$ \emph{steps}, in every step a single node $v$ from either $\partial s_t$ or $\partial s'_t$
randomly chooses a neighbour $u$ and  adopts its opinion with the corresponding bias.
Note that we assume that $v$ sees $u$'s opinion referring to \emph{beginning} of the round, even if
was considered before $v$ and changed its opinion in the meantime.
It is convenient to label the steps independently of the round
in which they take place. Hence, step $i$  denotes the $i$-th step counted from the \emph{beginning} of the first round.
Also $u_i$ refers to the node which considered in  step $i$ and $\lambda_i = \lambda_{u_i,t}$. We define
the indicator variable $o_i$ with  $o_i = 1$ if $u_i$ has the preferred opinion and $o_i = 0$ otherwise.
Let
$$\Lambda(i) = \sum_{j=1}^i (1-o_i)\cdot \lambda_i\qquad  {\mbox and}\qquad
\Lambda'(i) = \sum_{j=1}^i o_i\cdot \lambda_i.$$

Unfortunately, the order in which the nodes are considered in a round is important for our analysis and cannot be arbitrarily.
Intuitively, we  order the nodes  in $s_t$ and $s'_t$ such that sum of the degrees of nodes
which are already considered from $s_t$ and the sum of the degrees of nodes already considered from $s'_t$ differs
by at most $d$, i.e.,
\begin{align}
\label{gapLambda}
\phantom{why the hell does the style move it so far }|\Lambda_i-\Lambda'_i|\leq d.
\end{align}

The following rule determines the node to be considered in step $j+1$:
if $\Lambda(j)\leq \Lambda'(j)$, then the (not jet considered) node $v\in \partial s_t$ is with smallest identifier is considered. Otherwise the node $v\in \partial s_t$ is with smallest identifier is considered. Note that at the first step $i$ of any round we have $\Lambda_i=\Lambda'_i$.
This guarantees that \eqref{gapLambda} holds. The \emph{step sequence} $\mathcal{S}$ is now defined as a sequence of tuples, i.e.,  $\mathcal{S}= (u_1, Z_1),\ 
(u_2, Z_2),\ \dots $,  where   $Z_j=1$  if $u_j$ changed its opinion in step $j$ and $Z_j=0$ otherwise for all $j\geq 1$.
Observe that when  given the initial assignment and  the sequence up to step $i$, then we know the \emph{configuration} $\mathcal{C}_i$ of the system, i.e., the opinions of all nodes  at step $i$ and in which round step $i$ occurred.
\medskip

In our analysis we  consider the increase in the number of nodes with the preferred opinion
in time intervals which contain a sufficiently large number of steps, instead of considering one
round after the other. The following definitions define these intervals.

For all $i, k \geq 0$ where $\mathcal{C}_i$ is fixed, we define the random variable $S_{i,k} := \min\{j\colon \Lambda_j - \Lambda_i \geq k\}$, which is the first time step such that nodes with a degree-sum  of at least $k$ were considered. Let $I_{i,k}= [i+1, S_{i,k}]$ be the corresponding interval where we note that the length is a random variable.
We proceed by showing an easy observation.

\begin{observation}\label{Ishort}
The number of steps in the interval $I_{i,k}$ is at most $2k+2d$, i.e., $|I_{i,k}|\le 2k+2d$.
Furthermore,
$\Lambda'(S_{i,k})-\Lambda'(i) \leq  k+2d.$
\end{observation}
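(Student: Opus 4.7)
The plan is to derive both bounds simultaneously from a short invariant on the step--ordering rule. Write $S := S_{i,k}$ and $\delta_j := \Lambda(j) - \Lambda'(j)$.

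\emph{Step 1: the invariant $|\delta_j|\le d$.} First I would show by induction on $j$ that $|\delta_j|\le d$ throughout. The base case is immediate since $\delta_0 = 0$. For the inductive step, the selection rule increments whichever of $\Lambda,\Lambda'$ is no larger by the corresponding $\lambda_{j+1}\le d$ (using regularity); a short case analysis on the sign of $\delta_j$ confirms that the gap stays in $[-d,d]$. This also justifies the earlier remark that $\Lambda_i = \Lambda'_i$ at the first step of each round: inside a single round the total contributions to $\Lambda$ and to $\Lambda'$ both equal $|cut(s_t,s'_t)|$ (each cut-edge is counted once by its $s_t$-endpoint and once by its $s'_t$-endpoint), so $\delta$ is preserved across round boundaries.

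\emph{Step 2: bound on $\Lambda'(S)-\Lambda'(i)$.} By the minimality of $S$, $\Lambda(S-1)-\Lambda(i)<k$, and step $S$ itself must increment $\Lambda$, which the ordering rule permits only when $\Lambda(S-1)\le \Lambda'(S-1)$. Combining this with the invariant $\Lambda'(S-1)\le \Lambda(S-1)+d$ yields
\[
\Lambda'(S) \;=\; \Lambda'(S-1) \;\le\; \Lambda(S-1) + d \;<\; \Lambda(i) + k + d.
\]
Subtracting the invariant $\Lambda'(i) \ge \Lambda(i) - d$ then gives $\Lambda'(S)-\Lambda'(i)\le k + 2d$, which is the second claim of the observation.

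\emph{Step 3: the length bound.} Every considered node lies in $\partial s_t$ or $\partial s'_t$, so $\lambda_j\ge 1$ at each step $j$. Hence
\[
|I_{i,k}| \;=\; S - i \;\le\; \sum_{j=i+1}^{S}\lambda_j \;=\; \bigl(\Lambda(S)-\Lambda(i)\bigr) + \bigl(\Lambda'(S) - \Lambda'(i)\bigr).
\]
The first summand is bounded by $k+d$ by a parallel use of minimality ($\Lambda(S-1)-\Lambda(i)<k$ and $\lambda_S\le d$), and the second by $k+2d$ from Step 2; a short tightening using the shared $\lambda_S$ contribution on the last step yields the stated $2k+2d$.

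\emph{Main obstacle.} The only genuinely delicate point is Step 1: the invariant rests both on the tie-breaking rule inside each round and on the boundary identity $\sum_{v\in\partial s_t}\lambda_{v,t}=\sum_{v\in\partial s'_t}\lambda_{v,t}=|cut(s_t,s'_t)|$, which is what keeps $\delta$ from drifting across rounds. Once the invariant is in place, both parts of the observation reduce to the minimality of $S$ together with the regularity bound $\lambda_j\le d$ and the boundary bound $\lambda_j\ge 1$.
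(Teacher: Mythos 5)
Your proposal is correct in substance but takes a genuinely different route from the paper: you derive both bounds \emph{directly} from the invariant $|\Lambda(j)-\Lambda'(j)|\le d$ together with the sign information about the selection rule at step $S$, whereas the paper proves both parts \emph{by contradiction} (assume the bound fails, propagate the total $\ge 1$ per-step contribution, and contradict $|\Lambda-\Lambda'|\le d$). Your Step~2 is a nice improvement on the paper's argument for the second claim: the paper's chain silently assumes $\Lambda(S_{i,k})\le\Lambda(i)+k$, which can fail because the threshold may be overshot by up to $\lambda_S\le d$ at step $S_{i,k}$; you sidestep this cleanly by observing that step $S$ increments $\Lambda$ (not $\Lambda'$), so $\Lambda'(S)=\Lambda'(S-1)\le\Lambda(S-1)+d<\Lambda(i)+k+d$, and the bound follows. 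Two remarks. First, your Step~1 re-proves the invariant; in the paper this is equation~\eqref{gapLambda} and is simply asserted, so this is extra but harmless work. Second, your Step~3 as stated gives only $(k+d)+(k+2d)=2k+3d$, and the ``shared $\lambda_S$ contribution'' hand-wave does not by itself close the gap to $2k+2d$ --- there is no shared term, since $\lambda_S$ enters only $\Lambda(S)-\Lambda(i)$. The correct tightening is to peel off step $S$ \emph{before} passing to the $\lambda$-sum: write $|I_{i,k}| = (S-1-i) + 1 \le \sum_{j=i+1}^{S-1}\lambda_j + 1 = (\Lambda(S-1)-\Lambda(i)) + (\Lambda'(S-1)-\Lambda'(i)) + 1 \le (k-1) + (k-1+2d) + 1 = 2k+2d-1$. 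With that replacement, your direct approach is fully rigorous and slightly sharper than what's claimed.
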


\begin{proof}

Assume, for the sake of contradiction, that
$|I_{i,k}| > 2k+2d$. This implies, by definition of $I_{i,k}=[i+1,S_{i,k}]$,  that
 the number of edges of the preferred opinion considered in $I=[i+1,2k+2d]$ is strictly less than $k$, i.e., $\Lambda(i+2k+2d)-\Lambda(i) <  k.$
We have
\begin{align}
\begin{split}
  \Lambda'(i+2k+2d)-\Lambda(i+2k+2d)&>\Lambda'(i+2k+2d) - (\Lambda(i)+k) \\
&\geq \Lambda'(i)+ 2k+2d -k - (\Lambda(i)+k)\\
&=\Lambda'(i)-\Lambda(i)+2d\\
&\geq -d + 2d \geq d,
\end{split}
\end{align}
 which contradicts \eqref{gapLambda}.
Hence $|I_{i,k}|\le 2k+2d.$

We now prove the second part of the lemma.
Assume, for the sake of contradiction, $\Lambda'(S_{i,k})-\Lambda'(i) >  k+2d.$ This implies that
\begin{align}
\begin{split}
  \Lambda'(S_{i,k})-\Lambda(S_{i,k})
&> \Lambda'(i)+ k+2d -\Lambda(S_{i,k})\\
&\geq \Lambda'(i)+ k+2d - (\Lambda(i)+k)\\
&=\Lambda'(i)-\Lambda(i)+2d\\
&\geq -d + 2d \geq d,
\end{split}
\end{align}
 which contradicts \eqref{gapLambda}.
Hence $\Lambda'(S_{i,k})-\Lambda'(i) \leq k + 2d.$
\end{proof}


Fix $\mathcal{C}_i$ and let $X_{i,k}$ be the total number of times during interval $I_{i,k}$ that a switch from a non-preferred opinion to the preferred one occurs; and define $X'_{i,k}$ similarly for the reverse switches.
Finally, we define $Y_{i,k} = X_{i,k} - X'_{i,k}$; thus $Y_{i,k}$ is the increase in number of nodes that have the preferred opinion during the time interval $I_{i,k}$.

Define $\ell= \frac{132\beta\log n}{(1-\alpha_1)^2}$ and
$\beta'= \frac{600d}{\alpha_1\cdot (1-\alpha_1)^2}$.
In the following we define a \emph{good} sequence.

\begin{definition}\label{def}
We call the sequence $\mathcal{S}$ of steps \emph{good} if it has all
of the following properties for all $i\leq T'=  2\beta' \cdot n$.
Consider the first $T'$ steps of $\mathcal{S}$ (fix $\mathcal{C}_{T'}$). Then,
\begin{enumerate}[(a)]
\item  $Y_{0,T'} \geq 2n$. (The preferred opinion prevails in at most $T'$ steps)
\item   $Y_{0,i} +|S_0|  > 1$. (The preferred opinion never vanishes)
\item For any $1\leq k \leq T'$ we have $Y_{i,k}\geq  -\ell$. (\# nodes of the pref. opinion never drops by  $\ell$)
\item For any $\ell \leq \gamma \leq T'$,  we have
 $Y_{i,k} >\gamma$, where $k=\gamma \cdot \beta' $. (\# nodes of the pref. opinion increases)

\end{enumerate}

\end{definition}

This definition allows us to prove in a convenient way that a  step sequence $\mathcal{S}$ is w.h.p.\ good: For each property, we simply consider each (sufficiently large) subsequence $S$ separately and we show that  w.h.p. $S$ has the desired property.  We achieve this by using a  concentration bound on $Y_{i,k}$ which we establish in Lemma~\ref{lem:boundsY}. Afterward, we take union bound over all of these subsequences and properties. Using the union bound  allows us to show the desired properties in all subsequences in spite of the emerging dependencies. This is done in Lemma~\ref{seqwhp}.

We now show the concentration bounds on $Y_{i,k}$.
These bounds rely on the Chernoff-type bound  established in Lemma~\ref{lem:cherhoff-like}. This Chernoff-type bound  shows concentration for  variables having the property that the sum of the conditional probabilities of the variables, given all previous variables, is always  bounded (from above or below) by some $b$. The bound might be of general interest.
\begin{lemma}
    \label{lem:boundsY}
 Fix  configuration $\mathcal{C}_i$. Then,
        \begin{enumerate}[(a)]
      \item For
      $k= \gamma \frac{256d}{\alpha_1\cdot (1-\alpha_1)^2}$  with $\gamma \geq 1$  it holds that
        $$\Pr\left(Y_{i,k} < \gamma \right) \leq
\exp\left(-\gamma \right).$$
%
\item   For $k\ge 0$, any
      $b'= \alpha_1\cdot(k+2d)/d$, and  any $ \delta > 0$ it holds that
      $$\Pr\left(Y_{i,k} < -(1+\delta)b'\right) \leq \exp
\Big(\tfrac{e^\delta}{(1+\delta)^{1+\delta}}\Big)^{b'}.$$
    \end{enumerate}
\end{lemma}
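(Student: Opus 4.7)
The plan is to write $X_{i,k} = \sum_{j \in I_{i,k}} Z_j$ and $X'_{i,k} = \sum_{j \in I_{i,k}} Z'_j$, where $Z_j$ and $Z'_j$ are $\{0,1\}$-valued indicators measurable with respect to the configuration $\mathcal{C}_j$, and then apply the conditional Chernoff-type bound of Lemma~\ref{lem:cherhoff-like} separately to each sum. For a step $j \in I_{i,k}$, condition on $\mathcal{C}_{j-1}$, so that $u_j$ and $\lambda_j$ are determined by the tie-breaking rule. If $u_j \in \partial s'$, let $Z_j := \mathbf{1}\{u_j \text{ adopts the preferred opinion in step } j\}$; since opinion $0$ has popularity $\alpha_0 = 1$, we have $\Pr(Z_j = 1 \mid \mathcal{C}_{j-1}) = \lambda_j/d$. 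If $u_j \in \partial s$, let $Z'_j := \mathbf{1}\{u_j \text{ switches to any non-preferred opinion}\}$; since every non-preferred opinion has popularity at most $\alpha_1$, $\Pr(Z'_j = 1 \mid \mathcal{C}_{j-1}) \leq \alpha_1 \lambda_j / d$. Summing across $I_{i,k}$ and invoking Observation~\ref{Ishort},
\[
\sum_{j\in I_{i,k}}\Exp[Z_j \mid \mathcal{C}_{j-1}] \;\geq\; \tfrac{\Lambda(S_{i,k})-\Lambda(i)}{d} \;\geq\; \tfrac{k}{d}, \qquad \sum_{j\in I_{i,k}}\Exp[Z'_j \mid \mathcal{C}_{j-1}] \;\leq\; \tfrac{\alpha_1(\Lambda'(S_{i,k})-\Lambda'(i))}{d} \;\leq\; \tfrac{\alpha_1(k+2d)}{d} = b'.
\]

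For part (b), since $Y_{i,k} \geq -X'_{i,k}$, we have $\Pr(Y_{i,k} < -(1+\delta)b') \leq \Pr(X'_{i,k} > (1+\delta)b')$, and the upper-tail direction of Lemma~\ref{lem:cherhoff-like} applied to the sum $\sum_j Z'_j$ immediately yields the claimed bound $\bigl(e^\delta/(1+\delta)^{1+\delta}\bigr)^{b'}$.

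For part (a), I would fix $\delta := (1-\alpha_1)/4$ and union-bound two Chernoff events. The lower-tail direction of Lemma~\ref{lem:cherhoff-like} applied to $X_{i,k}$ gives $\Pr(X_{i,k} < (1-\delta)k/d) \leq \exp(-\delta^2 (k/d)/2)$, and the upper-tail direction applied to $X'_{i,k}$ gives $\Pr(X'_{i,k} > (1+\delta)b') \leq \exp(-\delta^2 b'/3)$. Conditioned on neither bad event occurring,
\[
Y_{i,k} \;\geq\; (1-\delta)\tfrac{k}{d} - (1+\delta)b' \;=\; \tfrac{(1-\delta)k - \alpha_1(1+\delta)(k+2d)}{d},
\]
which, substituting $k = \gamma \cdot 256d/(\alpha_1(1-\alpha_1)^2)$, simplifies by routine algebra to a quantity at least $\gamma$ for every $\gamma \geq 1$; both Chernoff exponents are $\Omega(\delta^2 \gamma/(\alpha_1(1-\alpha_1)^2)) = \Omega(\gamma/\alpha_1)$, so after absorbing constants the union-bounded failure probability is at most $e^{-\gamma}$.

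The main obstacle is not the high-level structure, which is a textbook two-sided Chernoff argument, but the fact that the indicators $Z_j, Z'_j$ are adaptively dependent: which node $u_j$ is considered in step $j$, and with what $\lambda_j$, depends on all earlier switches through the tie-breaking rule \eqref{gapLambda}. This blocks a direct application of the classical independent-sums Chernoff bound and is precisely why Lemma~\ref{lem:cherhoff-like} --- whose hypothesis involves only the running sum of conditional expectations, not independence --- is the right tool. A secondary piece of book-keeping is that the constant $256/(\alpha_1(1-\alpha_1)^2)$ in the definition of $k$ has to be large enough to simultaneously absorb the Chernoff relative error $\delta$, the $+2d$ additive slack inherited from Observation~\ref{Ishort}, and the factor of $2$ in the Chernoff lower-tail exponent, while still leaving an expected drift that dominates $\gamma$.
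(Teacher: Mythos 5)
Your proposal is correct and follows essentially the same route as the paper: decompose $Y_{i,k}=X_{i,k}-X'_{i,k}$, apply the two tails of the dependent Chernoff bound (Lemma~\ref{lem:cherhoff-like}) with $B\geq k/d$ for the gains and $B'\leq\alpha_1(k+2d)/d$ for the losses via Observation~\ref{Ishort}, then union-bound the two bad events; the paper uses $\delta=(1-\alpha_1)/8$ rather than your $(1-\alpha_1)/4$, but either choice works given the slack built into the constant $256/(\alpha_1(1-\alpha_1)^2)$. Part~(b) is handled identically in both.
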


\begin{proof}

First we show  a lower bound for $X_{i,k}$ and an upper bound for $X'_{i,k}$.


\medskip

For $1\leq j\leq 2k+2d$,  let
$A_j $ be the event that $u_{i+j}$ switches its opinion from any opinion other than the preferred opinion to the preferred one.
Furthermore, for $1\leq j\leq 2k+2d$, let
$$
Z_j =  \begin{cases}
1 & \text{ if  $i+j\leq S_{i,k}$ and $A_j$ }\\
0 & \text{ otherwise }
\end{cases}
$$
\noindent
Define
\begin{align*}
P_j &=\begin{cases} \Pr(Z_j = 1 \mid Z_1,\ldots,Z_{j-1}) =\lambda_{i+j}/d  \phantom{0} & \text{  $i+j\leq S_{i,k}$ and $A_j$ }\\
0 &\text{ otherwise}
 \end{cases}
\end{align*}
\noindent
It follows that $X_{i,k} = \sum_j Z_j$.
We set $B=\sum_j P_i$. We derive $$B=\sum_j P_i = \sum_{s\in I_{i,k}}(1-o_s)\lambda_s/d = (\Lambda({S_{i,k}})-\Lambda(i))/d \geq k/d,$$
by definition of $I_{i,k}$.
From Lemma~\ref{lem:cherhoff-like}(b) with $b^*=k/d\leq B$ we obtain for any $0<\delta<1$, that 
\begin{align}\label{parta}
\Pr(X_{i,k} < (1-\delta)b^*) < e^{-b^*\cdot \delta^2/2} 
\end{align}

We now use a similar reasoning to bound $X'_{i,k}$.
For $1\leq j\leq 2k+2d$ let $A'_j $ be the event that $u_{i+j}$ switches its opinion from the preferred opinion to any other opinion.
Define for $1\leq j\leq 2k+2d$ the variables $Z'_j$ similar to $Z_j$ but using $A'_j$ instead of $A_j$. And similarly,
define for $1\leq j\leq 2k+2d$ the variables $P'_j$ similar as $P_j$ but using $Z'_j,\dots, Z'_1$ instead of $Z_j,\dots, Z_1$.
In the same spirit  as before one can define $B'=\sum_j P'_i$. We derive $$B'=\sum_j P'_i \leq \sum_{s\in I_{i,k}}o_s\cdot \alpha_1\cdot \lambda_s/d = \alpha_1\cdot (\Lambda'({S_{i,k}})-\Lambda'(i))/d \leq \alpha_1\cdot 	( k + 2d)/d,$$
where the last inequality follows from Observation~\ref{Ishort}.

Using a similar reasoning for $X'_{i,k}$ and applying Lemma~\ref{lem:cherhoff-like}(a) with $b' = \tfrac{\alpha_1\cdot (k+2d)}{d}\geq B'$, yields for $\delta>0$,
\begin{align}\label{partb}
\Pr\big(X'_{i,k} > (1+\delta)b'\big)
< \Big(\tfrac{e^\delta}{(1+\delta)^{1+\delta}}\Big)^{b'}
\end{align}
\noindent
To prove part~(a) we now combine \eqref{parta} and \eqref{partb}
 as follows.
Since  $k\geq 16d/(1-\alpha_1)$,  for $\delta=(1-\alpha_1)/8$,
we have that 
w.p. at least $1-e^{-b^*\cdot \delta^2/2} - \Big(\tfrac{e^\delta}{(1+\delta)^{1+\delta}}\Big)^{b'}$ that 

\begin{align}\label{asdf}
\begin{split}
Y_{i,k} &= X_{i,k} -X'_{i,k} \\
&\geq (1-\delta)b^* - (1+\delta)b' \\
&\geq \frac{k}{d}\left(1-\delta - (1+\delta)\alpha_1-(1+\delta)\frac{2d}{k} \right)  \\
&\geq \frac{k}{d}\left(1-\alpha_1 -2\delta -(1+\delta)\frac{2d}{k} \right)  \\
&\geq \frac{k}{d}\left(1-\alpha_1 -(1-\alpha_1)/4 -(1+\delta)\frac{2d}{k} \right)  \\
&\geq \frac{k}{d}\left(1-\alpha_1 -(1-\alpha_1)/4-(1-\alpha_1)/4\right)\\
&\geq (1-\alpha_1)\frac{k}{2d} \\
&= \gamma \tfrac{32}{\alpha_1\cdot (1-\alpha_1)}\\
&> \gamma.
\end{split}
\end{align}
\noindent
We proceed by bounding $1-e^{-b^*\cdot \delta^2/2} - \Big(\tfrac{e^\delta}{(1+\delta)^{1+\delta}}\Big)^{b'}$.
Let $p=e^{-b^*\cdot \delta^2/2} +\Big(\tfrac{e^\delta}{(1+\delta)^{1+\delta}}\Big)^{b'}$. We have

\begin{align}\label{asdf2}
\begin{split}
 p&\leq  e^{-b^*\cdot \delta^2/2} + \Big(\tfrac{e^\delta}{(1+\delta)^{1+\delta}}\Big)^{b'} \\
&\leq e^{-b^*\cdot \delta^2/2} + e^{-b'\cdot \delta^2/2}\\
&= e^{-b^*\cdot (1-\alpha_1)^2/128} + e^{-b'\cdot  (1-\alpha_1)^2/128}\\
&\leq 2e^{-\alpha_1 \cdot (1-\alpha_1)^2\cdot k/(128d)}\\
&\leq e^{-\alpha_1 \cdot (1-\alpha_1)^2\cdot k/(266d) }\\
&= e^{-\gamma },
\end{split}
\end{align}
where we used that $k\geq \frac{256d}{\alpha_1\cdot (1-\alpha_1)^2}$.
Part~(a) follows from   \eqref{asdf} and \eqref{asdf2}.

\medskip
We now prove part~(b). Since we used that $Y_{i,k} \geq -X'_{i,k}$, the bound from \eqref{partb} implies  \begin{align*}
\Pr(Y_{i,k} < -(1+\delta)b') &\leq \Pr(X'_{i,k} > (1+\delta)b')\
< \Big(\tfrac{e^\delta}{(1+\delta)^{1+\delta}}\Big)^{b'}\\
\end{align*}
Hence, part~(b) of the lemma follows.
This completes the proof of Lemma~\ref{lem:boundsY}.

\end{proof}

The following two lemmas imply the theorem.
\begin{lemma}\label{seqwhp}
Let $\mathcal{S}$ be a step sequence. Then $\mathcal{S}$ is good with high probability.
\end{lemma}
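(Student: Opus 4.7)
The plan is to show that each of the four properties (a)--(d) in Definition~\ref{def} fails with probability at most $1/n^{\Omega(1)}$, and then to union-bound over them. Throughout, Lemma~\ref{lem:boundsY} will be invoked conditionally on a fixed configuration $\mathcal{C}_i$; because its bounds hold uniformly in $\mathcal{C}_i$, they also hold unconditionally, and a union bound over starting points $i$ remains valid despite the dependencies inside $\mathcal{S}$.

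For property~(d), I would fix a pair $(i,\gamma)$ with $0\le i\le T'$ and $\ell\le\gamma\le T'$, set $k=\gamma\beta'$, and apply Lemma~\ref{lem:boundsY}(a). Since $\beta'\ge\tfrac{256d}{\alpha_1(1-\alpha_1)^2}$ by construction, the hypothesis of part~(a) is satisfied and the failure probability is at most $\exp(-\gamma)\le\exp(-\ell)=n^{-\Omega(\beta)}$. A union bound over the $(T')^2=\mathrm{poly}(n)$ pairs $(i,\gamma)$ still leaves a $1/n^{\Omega(1)}$ failure probability provided $\beta$ is chosen large enough. Property~(a) then reduces to essentially the $\gamma=\Theta(n)$ case of the same bound and contributes an exponentially small failure probability.

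For property~(c), I would fix $i\le T'$ and $1\le k\le T'$, write $b'=\alpha_1(k+2d)/d$, and apply Lemma~\ref{lem:boundsY}(b) with $\delta$ chosen so that $(1+\delta)b'=\ell$. In the regime where $\delta$ is small (large $k$) and in the regime where $\delta$ is large (small $k$), the Chernoff quantity $(e^\delta/(1+\delta)^{1+\delta})^{b'}$ can be bounded by $\exp(-\Omega(\ell))=n^{-\Omega(\beta)}$, which again survives the $\mathrm{poly}(n)$ union bound over $(i,k)$. Property~(b) is then an immediate corollary of~(c) and the hypothesis $|S_0|\ge c\log n$: choosing $c$ so that $c\log n>\ell+1$ forces $Y_{0,i}+|S_0|\ge -\ell+c\log n>1$ for every $i\le T'$.

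The main obstacle I expect is bookkeeping rather than conceptual. Specifically, the constants $\beta$, $\beta'$, and $c$ must be chosen consistently so that (i)~the hypothesis $k\ge\gamma\cdot 256d/(\alpha_1(1-\alpha_1)^2)$ of Lemma~\ref{lem:boundsY}(a) really does hold for $k=\gamma\beta'$, (ii)~the $\delta$ arising from $(1+\delta)b'=\ell$ is large enough for the Chernoff quantity to be genuinely $\exp(-\Omega(\ell))$ across the whole range of $k$, and (iii)~the exponent $\Omega(\beta)$ gained in each failure probability dominates the $O(\log n)$ factor from union-bounding over $\mathrm{poly}(n)$ pairs. A secondary subtlety is that the intervals $I_{i,k}$ have random length, but this is already absorbed into Lemma~\ref{lem:boundsY} through Observation~\ref{Ishort} and the coupled indicator variables $Z_j,Z'_j$ used there.
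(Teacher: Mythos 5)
Your overall plan is the paper's plan: apply Lemma~\ref{lem:boundsY}(a) to properties (a) and (d), apply Lemma~\ref{lem:boundsY}(b) to property (c), derive (b) from (c) plus the hypothesis on $|S_0|$, and union-bound over the $\mathrm{poly}(n)$ many events, noting that the Chernoff-type bounds are uniform in the fixed configuration $\mathcal{C}_i$. Your treatment of (a), (b), and (d) matches the paper and is fine.

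There is, however, a genuine gap in your treatment of property~(c). You propose to apply Lemma~\ref{lem:boundsY}(b) for \emph{every} $1\le k\le T'$, setting $\delta$ so that $(1+\delta)b'=\ell$, where $b'=\alpha_1(k+2d)/d$. This requires $\delta=\ell/b'-1>0$, which fails as soon as $b'\ge\ell$, i.e.\ as soon as $k$ is larger than roughly $d\ell/\alpha_1$; for such $k$ there is no admissible $\delta$ and Lemma~\ref{lem:boundsY}(b) cannot be invoked this way. Moreover, even when $b'<\ell$ but $b'$ is close to $\ell$, the resulting $\delta=(\ell-b')/b'$ is small and the Chernoff quantity is only about $\exp\bigl(-(\ell-b')^2/(3b')\bigr)$, which is not $n^{-\Omega(\beta)}$ when $\ell-b'=o(\sqrt{\ell\log n})$. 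So the regime ``$\delta$ small (large $k$)'' in your sketch is exactly where the argument breaks.

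The paper sidesteps both issues by case-splitting on $k$. For $k\le\beta\log n\cdot\beta'$ (so $b'$ bounded away from $\ell$), it does not solve $(1+\delta)b'=\ell$ but instead picks $\delta=\sqrt{3\beta\log n/b'}$ when this is $<1$ and $\delta=4\beta\log n/b'$ otherwise; this directly yields a failure probability $n^{-\beta}$, after which one \emph{verifies} $(1+\delta)b'\le\ell$ so that $\Pr(Y_{i,k}<-\ell)\le\Pr\bigl(Y_{i,k}<-(1+\delta)b'\bigr)$. For $k>\beta\log n\cdot\beta'$ it abandons part~(b) entirely and uses part~(a) of Lemma~\ref{lem:boundsY}, which in that regime already gives $\Pr(Y_{i,k}<\beta\log n)\le n^{-\beta}$ and hence $\Pr(Y_{i,k}<-\ell)\le n^{-\beta}$ a fortiori. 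To make your proof go through you need the same case split (or an equivalent mechanism): use part~(b) only when $b'$ is comfortably below $\ell$, with a $\delta$ chosen to nail the probability rather than to make $(1+\delta)b'$ hit $\ell$ exactly, and switch to part~(a) for the larger $k$.
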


\begin{proof}
We show for every property of Definition~\ref{def} that it holds with high probability. Fix an arbitrary $i\leq T'$.
\begin{enumerate}[(a)]
\item We derive from Lemma~\ref{lem:boundsY}(a) with  $k=2n \cdot\beta'$  that
        $\Pr(Y_{0,k} <  2n) \leq e^{-2n} 
 \leq n^{-\beta}.$
\item Follows from $(c)$.
\item

Fix an arbitrary $k \leq T'$. We distinguish between two cases
\begin{enumerate}[(i)]
\item Case $k\leq \beta\cdot \log n\cdot \beta'$.
We derive from Lemma~\ref{lem:boundsY}(b) the following.
  For $k\ge 0$, any
      $b'= \alpha_1\cdot(k+2d)/d \leq \frac{300\beta\log n}{(1-\alpha_1)^2}=\ell/2$, and  any $\delta > 0$ it holds that
      $$ \Pr\left(Y_{i,k} < -(1+\delta)b'\right) \leq \exp
\Big(\tfrac{e^\delta}{(1+\delta)^{1+\delta}}\Big)^{b'}.$$

We distinguish once more between two cases.

\begin{itemize}
\item If $ \sqrt{\frac{3\beta \log n}{b'}} < 1$ set $\delta = \sqrt{\frac{3\beta \log n}{b'}}<1$. We have
\begin{align*}
 \Pr\left(Y_{i,k} <  -\ell \right) &\leq \Pr\left(Y_{i,k} < -2b'\right) \leq \Pr\left(Y_{i,k} < -(1+\delta)b'\right)\\
&\leq \exp
\Big(\tfrac{e^\delta}{(1+\delta)^{1+\delta}}\Big)^{b'}\leq \exp\left(-\delta^2 b'/3\right)\leq
n^{-\beta}.
\end{align*}
\item Otherwise, we have $b' \leq 3\beta\log n < 4\beta \log n$. Set $\delta = \frac{4\beta \log n}{b'}>1$.
We have
$$ Pr\left(Y_{i,k} <-\ell \right) \leq \Pr\left(Y_{i,k} < -(1+\delta)b'\right) \leq \exp
\Big(\tfrac{e^\delta}{(1+\delta)^{1+\delta}}\Big)^{b'}\leq \exp\left(-\delta b'/3\right)\leq
n^{-\beta}.$$
\end{itemize}

Thus, the claim follows.

\item Case $k> \beta\cdot \log n\cdot \beta'$.
We derive from Lemma~\ref{lem:boundsY}(a)  that
   $\Pr(Y_{i,k} <  -\ell ) \leq\Pr(Y_{i,k} <  \beta\log n) < 
 n^{-\beta}.$

\end{enumerate}
Hence, in all cases we have $\Pr(Y_{i,k} <  -\ell)\leq n^{-\beta}$.
\item
We derive from Lemma~\ref{lem:boundsY}(a) with  $k= z \cdot \beta'$  that
 $\Pr(Y_{i,k} < z ) < \exp(-z)\leq n^{-\beta}.$

\end{enumerate}
Since $0<\alpha_1<1$, we have that $T'\leq n^3$.
The number of events in Definition~\ref{def} is bounded by
$5 T'^2 \leq n^7$.
Thus choosing, $\beta \geq 8$, and taking union bound over all these events yields the claim.
\end{proof}

\begin{lemma}\label{seqwin}
If $\mathcal{S}$ is a good  step sequence, then in at most $T'$ time steps, the preferred opinion prevails and the $T'$ time steps occur before round  $\tau'''$.
\end{lemma}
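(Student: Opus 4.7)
The plan is to prove the two claims separately. The first claim---that the preferred opinion prevails within $T'$ steps---follows almost directly from Property~(a). Since $|s_t|$ always lies in $[1,n]$ (the lower bound coming from Property~(b)), the inequality $Y_{0,T'}\ge 2n$ on the net growth of $|s|$ is only consistent with the step sequence if, at some step within the interval $I_{0,T'}$, the set $s_t$ has filled out to all $n$ nodes, i.e., the preferred opinion has prevailed; the continued step sequence past that point merely records the artificial Node-$1$ reset dynamics.

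For the second claim---that these $T'$ steps fit into at most $\tau'''$ rounds---I would track the cumulative degree sum $D(\tau):=\sum_{t\le\tau}2\cdot|cut(s_t,s'_t)|$ accumulated in the first $\tau$ rounds. Since each $G_t$ is $d$-regular with conductance at least $\phi_t$, the standard conductance bound gives
\[
D(\tau)-D(\tau-1)\;\ge\;2d\,\phi_t\,\min\bigl(|s_t|,\,n-|s_t|\bigr).
\]
Property~(d) of a good sequence implies that $|s|$ grows by at least $1/\beta'$ per unit of accumulated degree sum (whenever the growth is at least $\ell$). Combining these two estimates, in the regime $|s_t|\le n/2$, yields a geometric recurrence of the form $|s_{t+1}|\ge|s_t|\bigl(1+2d\phi_t/\beta'\bigr)$.

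Call this first regime \emph{Phase~1}. The recurrence shows that $|s_t|$ doubles over $O(1)$ units of $\Phi(\tau):=\sum_{t\le\tau}\phi_t$, hence it grows from its initial value $|s_0|\ge c\log n$ all the way up to $n/2$ within $\Phi=O(\log n)$. A symmetric argument applied to $n-|s_t|$ shows that in \emph{Phase~2} (where $|s_t|>n/2$) the minority size $|s'_t|$ decays geometrically in $\Phi$, so it drops from $n/2$ to a small constant within another $O(\log n)$ of $\Phi$. Summing gives total $\Phi\le b\log n$ for a suitably large constant $b$, meaning that at most $\tau'''$ rounds are needed. Since the cumulative degree sum reached by that point exceeds $T'$, Property~(a) guarantees that consensus has in fact been reached.

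The main obstacle is staying inside the regime where Property~(d) applies (it requires $\gamma\ge\ell$). In Phase~1 this is handled by Property~(c): one has $|s_t|\ge|s_0|-\ell\ge c\log n-\ell\ge\ell$ provided $c$ is chosen large enough that $c\log n\ge 2\ell$. The endgame of Phase~2, where $|s'_t|$ drops below $\ell$, is more delicate; here I would argue that Property~(a) directly supplies the last $O(\ell)$ steps needed to drive the process to actual consensus, and that these steps fit into $O(1)$ additional units of $\Phi$.
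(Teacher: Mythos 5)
Your plan matches the paper's at a high level: prevailing within $T'$ steps follows from Properties~(a) and~(b); embedding the $T'$ steps into rounds is done by lower-bounding the degree-sum accumulated per round via the conductance, using Property~(d) to convert accumulated degree-sum into growth of $|s_t|$, and running a doubling argument up to $n/2$ and a symmetric halving argument for the minority down to $O(\ell)$. This is exactly the paper's strategy, implemented there via explicit checkpoints $t_j = \tau(\cdot)$ and target sizes $\zeta(j)$.

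Two steps in your sketch are not quite justified as written. First, the per-round recurrence $|s_{t+1}|\ge|s_t|\bigl(1+2d\phi_t/\beta'\bigr)$ does not follow from Property~(d): that property is an \emph{interval} guarantee and only applies once $\gamma\ge\ell$, i.e.\ once the accumulated degree-sum between two checkpoints is at least $\ell\beta'$. A single round can make $|s_t|$ decrease, and a single round typically contributes far less than $\ell\beta'$. The paper's fix is a checkpoint induction: it fixes $t_0<t_1<\cdots$ at prescribed cumulative-conductance levels, uses Property~(c) to show that $|s_t|$ stays inside $[\zeta(j-1)-\ell,\zeta(j)+\ell]$ throughout $(t_{j-1},t_j]$ (so the cut is lower-bounded uniformly over the interval), accumulates at least $3\zeta(j-1)\beta'$ degree-sum, and only then invokes Property~(d). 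Second, the endgame claim that ``Property~(a) directly supplies the last $O(\ell)$ steps'' does not work: Property~(a) is a statement about the entire interval $[0,T']$, not about its tail, and gives no per-interval rate in the regime $|s'_t|<\ell$ where Property~(d) cannot be used. The paper closes the argument differently: by Property~(b) the minority is always non-empty, so the cut has at least $\phi_t d$ edges in every non-consensus round; this accumulates $3\ell\beta'$ degree-sum in $O(\ell\beta'/d)$ units of conductance and Property~(d) (now with $\gamma=3\ell\ge\ell$) then delivers the last $3\ell$ of growth, from $n-2\ell$ to $n$.
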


\begin{proof}

Recall, that  we assume that if in some round  the preferred opinions vanishes, then after this round, the opinion of some fixed node switches spontaneously.
Similarly, if in some round  the preferred opinion prevails, then after this round, the opinion of some fixed node switches spontaneously to an arbitrary other opinion.the preferred opinion never vanishes.
This process $P'$ diverges from the original process $P$ only after the first step where either the preferred opinion prevails or vanishes.
From $(a)$ and $(b)$ of the definition of a good sequence
it follows that the preferred opinion prevails in $P'$ after $T'$ steps.
It is easy to couple both process so that the good opinion also prevails in the original process $P$.

 \medskip
It remains to argue that the $T'$ time steps happen before round $\tau'''$. Using the definition of the conductance, we can lower bound the number of steps in any round $t$ by $|cut(S_t,S'_t)| \geq d\cdot \min\{|S_t|, |S'_i|\}\cdot \phi_t$.
We then consider intervals of sufficient length in which the size of the preferred doubles as long as its size is below $n/2$. Afterward, one can argue that size of all the non preferred opinions halves every interval.
We now give some intuition for the remainder of the proof.
Consider the following toy case example of a static graphs  with $\alpha_1 =0$ (rumor spreading). The length of an interval required for the preferred $S$ with $|s| \leq n/3$ to double in expectation  is bounded by
$1/\phi$. In our setting, we need to handle  two main difference w.r.t. the toy case example.
First, the number of nodes with the preferred opinion can  reduce by up to $\beta \log n$ (Definition~\ref{def}(c)). Since $\beta$ is constant, this can be easy compensated by slightly longer intervals.
Second, the graph is dynamic as opposed to static. To address this we 'discretize', similarly as before, rounds into consecutive phases which
ensure that sum of the $\phi_t$ for rounds $t$ in this phase is at least $1$. Thus, in our toy example one requires  $1 $ phases in expectation.
\medskip

We proceed by discretizing the  rounds into phases.
 Phase $i$ starts at round  $\tau(i)=\min \{ t: \sum_{j=1}^t \phi_j \geq 2i \} $ for $i\geq 0$ and it ends at round $\tau({i+1})-1$. Since $\phi_j \leq 1$ for all $j\geq 0$ we have $\tau(0) < \tau(1) < \dots$ and $\sum_{j=\tau_i}^{\tau({i+1})-1} \phi(j) \geq 1$ for $i\geq 0$.
We now map the steps of $\mathcal{S}$ to rounds. For this we define the \emph{check point} $t_j$ to be the following round for $j\leq j_{max}= 4\logn+1$.

$$t_j=
\begin{cases}
0 & \text{ if $j=0$  }\\
\tau\left( 12\ell \cdot \beta'/d \right) & \text{ if $j=1$  }\\
\tau\left( 12\ell \cdot\beta'/d  + (j-1)\cdot 24\beta'/d \right)  & \text{ if $2\leq j \leq 4 \log n$  }\\
\tau\left(  24\ell \cdot \beta'/d  + (j_{max}-1) \cdot 24\beta'/d\right)  & \text{ if $j=j_{max}$  }\\
\end{cases}
$$

Given any good sequence $\mathcal{S}$, we show by induction over $j$  that the following lower bounds on the size of the preferred opinion at these check points. More concretely, define for all $j\leq j_{max}$  that

\begin{align*}
\zeta(j) =
\begin{cases}
0  & \text{ if $j=0$ }\\%
2\ell  & \text{ if $j=1$ }\\
\min\{2\ell \cdot  2^{j-2}, n/2 \} & \text{ if $2\leq j \leq 2\log  n$  }\\ 
\min\{ n - 2^{\log (n/2)-(j-2\log n)}, n-2\ell\} & \text{ if $2\log n  < j \leq 4\log n$ }\\ 
n & \text{ if $j = j_{max}$} 
\end{cases}
\end{align*}
We now show for all $j \leq j_{max}$ that 
\begin{align}\label{checkpoints}
\text{ $|s_{t_j}| \geq \zeta(j),$  }
\end{align}
We consider each of the cases depending of the size of $j$ w.r.t.  \eqref{checkpoints}.
The induction hypothesis $j=0$ holds trivially. Suppose the claim holds for $j-1$ for $j\geq 1$.

We assume w.l.o.g. the following about $\mathcal{S}$:
there is no step $t$ before round $ t_j$ with $|s_{t}|\geq \zeta(j)+\ell$ since by Definition~\ref{def}(c) this implies that $|s_{t_j}| \geq \zeta(j)$ which yields the inductive step.
This  assumption, implies that for all $t\leq t_j$ we have $|s_{t}| < |\zeta(j)|+\ell$.
On the other hand, by induction hypothesis and Definition~\ref{def}(c) we have $ \zeta({j-1})-\ell\leq |s_{t}|$.
Thus, we assume in the following
\begin{align}\label{tamedsize}
|s_t| \in [\zeta({j-1})-\ell, \zeta(j)+\ell]
\end{align}

\noindent
We now distinguish between the following cases based on $j$.

\begin{itemize}
\item $j=1:$ In every step $t\in (0, t_1]$ the number of edges  crossing the cut is at least  $\phi_t\cdot  d$ and hence the  number of edges  crossing the cut in the interval  $(0, t_1]$ is at least $\sum_{i=1}^{t_1} \phi_i\cdot d \geq 12\ell \cdot \beta' $. Let $k=3\ell \cdot \beta' $. Definition~\ref{def}(d) implies that $ Y_{1,k} >3\ell$. Hence, by Definition~\ref{def}(c) we have $|s_{t_1}|\geq \zeta(1)$ as desired.

\item $2\leq j \leq 2\log n:$ In every step $t\in (t_{j-1}, t_j]$
we have, by \eqref{tamedsize}, that the number of edges crossing the cut is at least $$  \phi \cdot d \cdot (\min\{|s_t|,|s'_t|\}-\ell) \geq \phi \cdot d \cdot \min\{\zeta(j-1)-\ell , n/2 - \ell\} \geq \phi \cdot d\cdot (\zeta(j-1)-\ell)\geq \phi_t \cdot d\cdot \zeta(j-1)/2.$$ Hence the number of edges  crossing the cut in the interval  $(t_{j-1}, t_j]$ is at least $ 12\zeta(j-1)\cdot \beta' $. Let $k=3\zeta(j-1)\cdot \beta' $. Definition~\ref{def}(d) implies that $ Y_{t_{j-1},k} \geq 3\zeta(j-1)\geq \zeta(j)+\ell$. Hence, by Definition~\ref{def}(c) we have $|s_{t_{j}}| \geq \zeta({j})$.

\item $2\log n  < j \leq 2\log n:$ In every step $t\in (t_{j-1}, t_j]$
we have, by \eqref{tamedsize}, that the number of edges  crossing the cut is at least $$\phi_t \cdot d \cdot (\min\{|s_t|,|s'_t|\}-\ell) \geq \phi \cdot d\cdot (n-\zeta(j)-\ell)\geq\phi \cdot  d\cdot (n-\zeta(j))/2.$$ Hence the number of edges  crossing the cut in the interval  $(t_{j-1}, t_j]$ is at least $\sum_{i=t_{j-1}+1}^{t_{j}} \phi_i\cdot d\cdot (n-\zeta(j))/2 \geq 12(n-\zeta(j))\cdot \beta' $. Let $k=3(n-\zeta(j))\cdot \beta' $. Definition~\ref{def}(d) implies that $ Y_{t_{j-1},k} \geq 3(n-\zeta(j))\geq (n-\zeta(j))+\ell$. Hence, by Definition~\ref{def}(c) we have $|s_{t_{j}}| \geq \min\{\zeta(j-1)+(n-\zeta(j)),n-2\ell\} \geq \zeta(j)$.

\item $j_{max}:$ In every step $t\in (t_{j-1}, t_j]$
we have, by \eqref{tamedsize}, that the number of edges crossing the cut at any time step is at least $\phi_t  \cdot d.$ Hence the number of edges  crossing the cut in the interval  $(t_{j-1}, t_j]$ is at least $ d \geq 12\ell\cdot \beta' $. Let $k=3\ell\cdot \beta' $. Definition~\ref{def}(d) implies that $ Y_{t_{j_{max}-1},k} \geq 3\ell $. Hence, by Definition~\ref{def}(c) we have $|s_{t_{j_{max}}}| \geq n = \zeta({j_{max}})$.

\end{itemize}
\noindent
This completes the proof of \eqref{checkpoints}. We have $$t_{j_{max}}= \tau\left( 24\ell \cdot \beta' + j_{max} \cdot 24\beta'/d \right) \leq 4\left(24\ell \cdot \beta' + (j_{max}-1) \cdot 24\beta'/d\right)\leq \tau''', $$
which yields the proof.

\end{proof}

\begin{proof}[Proof of Theorem~\ref{thm:ub-bvoter}]
The claim follows from Lemma~\ref{seqwhp} together with Lemma~\ref{seqwin}.
\end{proof}
\newpage

\bibliographystyle{abbrv}
\bibliography{biblio}

\begin{thebibliography}{10}

\bibitem{AD15}
M.~A. Abdullah and M.~Draief.
\newblock Global majority consensus by local majority polling on graphs of a
  given degree sequence.
\newblock {\em Discrete Applied Mathematics}, 180:1--10, 2015.

\bibitem{AF14}
D.~Aldous and J.~Fill.
\newblock Reversible markov chains and random walks on graphs, 2002.
\newblock Unpublished,
  \url{http://www.stat.berkeley.edu/~aldous/RWG/book.html}.

\bibitem{ASS08}
N.~Alon, O.~Schwartz, and A.~Shapira.
\newblock An elementary construction of constant-degree expanders.
\newblock {\em Comb. Probab. Comput.}, 17(3):319--327, May 2008.

\bibitem{AKL08}
C.~Avin, M.~Kouck{\'{y}}, and Z.~Lotker.
\newblock How to explore a fast-changing world ({C}over time of a simple random
  walk on evolving graphs).
\newblock In {\em Proceedings of the 35th International Colloquium on Automata,
  Languages and Programming (ICALP)}, pages 121--132, 2008.

\bibitem{ABKU99}
Y.~Azar, A.~Broder, A.~Karlin, and E.~Upfal.
\newblock Balanced allocations.
\newblock {\em SIAM J. Comput.}, 29(1):180--200, 1999.

\bibitem{CEOR12}
C.~Cooper, R.~Els\"{a}sser, H.~Ono, and T.~Radzik.
\newblock Coalescing random walks and voting on connected graphs.
\newblock {\em {SIAM} J. Discrete Math.}, 27(4):1748--1758, 2013.

\bibitem{CER14}
C.~Cooper, R.~Els{\"a}sser, and T.~Radzik.
\newblock The power of two choices in distributed voting.
\newblock In {\em Proceedings of the 41st International Colloquium on Automata,
  Languages and Programming (ICALP)}, pages 435--446, 2014.

\bibitem{CERRS15}
C.~Cooper, R.~Elsässer, T.~Radzik, N.~Rivera, and T.~Shiraga.
\newblock Fast consensus for voting on general expander graphs.
\newblock In {\em Proceedings of the 29th International Symposium on
  Distributed Computing (DISC)}, pages 248--262, 2015.

\bibitem{CG14}
J.~Cruise and A.~Ganesh.
\newblock Probabilistic consensus via polling and majority rules.
\newblock {\em Queueing Systems}, 78(2):99--120, 2014.

\bibitem{DGMRSS14}
J.~D{\'{\i}}az, L.~Goldberg, G.~Mertzios, D.~Richerby, M.~Serna, and
  P.~Spirakis.
\newblock Approximating fixation probabilities in the generalized moran
  process.
\newblock {\em Algorithmica}, 69(1):78--91, 2014.

\bibitem{DW83}
P.~Donnelly and D.~Welsh.
\newblock Finite particle systems and infection models.
\newblock {\em Mathematical Proceedings of the Cambridge Philosophical
  Society}, 94:167--182, 1983.

\bibitem{EFKMT16}
R.~{Els{\"a}sser}, T.~{Friedetzky}, D.~{Kaaser}, F.~{Mallmann-Trenn}, and
  H.~{Trinker}.
\newblock Efficient $k$-party voting with two choices.
\newblock {\em ArXiv e-prints}, Feb. 2016.

\bibitem{G11}
G.~Giakkoupis.
\newblock Tight bounds for rumor spreading in graphs of a given conductance.
\newblock In {\em Proceedings of the 28th International Symposium on
  Theoretical Aspects of Computer Science (STACS)}, pages 57--68, 2011.

\bibitem{GSS14}
G.~Giakkoupis, T.~Sauerwald, and A.~Stauffer.
\newblock Randomized rumor spreading in dynamic graphs.
\newblock In {\em Proceedings of the 41st International Colloquium on Automata,
  Languages and Programming (ICALP)}, pages 495--507, 2014.

\bibitem{HP01}
Y.~Hassin and D.~Peleg.
\newblock Distributed probabilistic polling and applications to proportionate
  agreement.
\newblock {\em Information and Computation}, 171(2):248--268, 2001.

\bibitem{HL75}
R.~Holley and T.~Liggett.
\newblock Ergodic theorems for weakly interacting infinite systems and the
  voter model.
\newblock {\em The Annals of Probability}, 3(4):643--663, 1975.

\bibitem{HV11}
B.~Houchmandzadeh and M.~Vallade.
\newblock The fixation probability of a beneficial mutation in a geographically
  structured population.
\newblock {\em New Journal of Physics}, 13(7):073020, 2011.

\bibitem{KT08}
M.~Kearns and J.~Tan.
\newblock Biased voting and the democratic primary problem.
\newblock In {\em Proceedings of the 4th International Workshop on Internet and
  Network Economics (WINE)}, pages 639--652, 2008.

\bibitem{KLO10}
F.~Kuhn, N.~Lynch, and R.~Oshman.
\newblock Distributed computation in dynamic networks.
\newblock In {\em Proceedings of the 42nd {ACM} Symposium on Theory of
  Computing (STOC)}, pages 513--522, 2010.

\bibitem{LLMSW12}
H.~Lam, Z.~Liu, M.~Mitzenmacher, X.~Sun, and Y.~Wang.
\newblock Information dissemination via random walks in $d$-dimensional space.
\newblock In {\em Proceedings of the 23rd Annual ACM-SIAM Symposium on Discrete
  Algorithms (SODA)}, pages 1612--1622, 2012.

\bibitem{LN07}
N.~Lanchier and C.~Neuhauser.
\newblock Voter model and biased voter model in heterogeneous environments.
\newblock {\em Journal of Applied Probability}, 44(3):770--787, 2007.

\bibitem{L85}
T.~Liggett.
\newblock {\em Interacting Particle Systems}.
\newblock Springer Berlin Heidelberg, 1985.

\bibitem{MS13}
G.~Mertzios and P.~Spirakis.
\newblock Strong bounds for evolution in networks.
\newblock In {\em Proceedings of the 40th International Colloquium on Automata,
  Languages and Programming (ICALP)}, pages 669--680, 2013.

\bibitem{MU05}
M.~Mitzenmacher and E.~Upfal.
\newblock {\em Probability and Computing: Randomized Algorithms and
  Probabilistic Analysis}.
\newblock Cambridge Univ.\ Press, 2005.

\bibitem{MP95}
R.~Motwani and P.~Raghavan.
\newblock {\em Randomized Algorithms}.
\newblock Cambridge University Press, 1995.

\bibitem{NIY00}
T.~Nakata, H.~Imahayashi, and M.~Yamashita.
\newblock A probabilistic local majority polling game on weighted directed
  graphs with an application to the distributed agreement problem.
\newblock {\em Networks}, 35(4):266--273, 2000.

\bibitem{O12}
R.~Oliveira.
\newblock On the coalescence time of reversible random walks.
\newblock {\em Transactions of the American Mathematical Society},
  364(4):2109--2128, 2012.

\bibitem{PSSS13}
Y.~Peres, A.~Sinclair, P.~Sousi, and A.~Stauffer.
\newblock Mobile geometric graphs: {D}etection, coverage and percolation.
\newblock {\em Probability Theory and Related Fields}, 156(1-2):273--305, 2013.

\end{thebibliography}

\clearpage

\appendix

\section{Auxiliary Claims}
\begin{lemma}\label{lem:replaceRV}
Let $X_i$ and $Y_i$ be the random variables defined in the proof of Lemma~\ref{lem:technical-st}.
Let $f(\cdot)$ be a concave and continuous function.
We have \\$E\left[f\left(\sum_i X_i\right)\right]\leq E\left[f\left(\sum_i Y_i\right)\right]. $
\end{lemma}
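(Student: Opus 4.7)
The plan is to exhibit a coupling under which each $Y_u$ appears as a conditional expectation of $X_u$, after which the lemma falls out of the conditional Jensen inequality. The key observation is that $X_u$ and $Y_u$ have the same mean, but $Y_u$ is a ``smoothed'' version of $X_u$: for $u\in v^{(0)}$ we have $X_u = Y_u$ in distribution, and for $u\in v^{(1)}$, $X_u$ takes the value $d_u$ with probability $\lambda_u/(2 d_u)$, while $Y_u$ takes the smaller value $\lambda_u \le d_u$ with the larger probability $1/2$, both with mean $\lambda_u/2$.

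Concretely, first I would construct, for each $u\in v^{(1)}$, independent auxiliary Bernoulli$(1/2)$ variables $Z_u$ together with independent Bernoulli$(\lambda_u/d_u)$ variables $B_u$, and define $X_u := d_u \cdot B_u \cdot Z_u$. A routine check shows that $X_u$ has the stated distribution, and
\[
\Exp[X_u \mid Z_u] \;=\; d_u \cdot \tfrac{\lambda_u}{d_u} \cdot Z_u \;=\; \lambda_u Z_u,
\]
which is distributed exactly as $Y_u$. For $u\in v^{(0)}$, I simply set $Z_u := X_u = Y_u$. Making the collection $\{(X_u, Z_u)\}_{u\in V}$ independent across $u$ ensures that $\Exp[X_u \mid (Z_v)_{v\in V}] = \Exp[X_u \mid Z_u]$.

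With this coupling in hand, I would apply the conditional Jensen inequality to the concave function $f$, conditioning on the $\sigma$-algebra generated by $(Z_v)_{v\in V}$:
\[
\Exp\!\left[ f\!\left(\textstyle\sum_u X_u\right) \,\Big|\, (Z_v)_{v\in V} \right]
\;\le\;
f\!\left( \Exp\!\left[\textstyle\sum_u X_u \,\Big|\, (Z_v)_{v\in V} \right] \right)
\;=\;
f\!\left( \textstyle\sum_u \Exp[X_u \mid Z_u] \right).
\]
Since the right-hand side has the same distribution as $f(\sum_u Y_u)$ under this coupling, taking expectations and using the tower property yields the claimed inequality.

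The only subtle step is identifying the coupling; once $Y_u$ is recognised as $\Exp[X_u \mid Z_u]$, everything else is a direct application of Jensen together with independence across $u$. I do not anticipate any genuine obstacle, since continuity of $f$ just ensures the relevant expectations are well defined (indeed the sums here are finite), and concavity of $f$ is exactly what makes conditional Jensen go in the required direction.
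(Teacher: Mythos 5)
Your proof is correct, and it takes a genuinely different route from the paper's. The paper proves the inequality by swapping one variable at a time: it shows $E[f(Y_1+\cdots+Y_{i-1}+X_i+\cdots+X_k)]\le E[f(Y_1+\cdots+Y_i+X_{i+1}+\cdots+X_k)]$ for each $i$, with the key step being the monotonicity of secant slopes $a\mapsto\bigl(f(z+a)-f(z)\bigr)/a$ for concave $f$ (applied with $a=d_i$ versus $a=\lambda_i\le d_i$), and chains these $k$ swaps together. You instead construct a coupling in which each $Y_u$ (for $u\in v^{(1)}$) is realised as $\Exp[X_u\mid Z_u]$ for an auxiliary Bernoulli$(1/2)$ variable $Z_u$, and apply conditional Jensen once. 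This is essentially a Blackwell–Sherman–Stein / dilation argument: $X_u$ is a mean-preserving spread of $Y_u$, so concave functionals of the sum decrease. Your version is cleaner and gets the result in a single invocation of Jensen plus the tower property, at the cost of introducing the probabilistic machinery of conditional expectation and an explicit coupling; the paper's version is more elementary (nothing beyond the definition of concavity and linearity of expectation) but requires the inductive telescoping. Both are complete and correct; the one small point worth making explicit in your write-up is that $\Exp[B_u\mid Z_u]=\lambda_u/d_u$ because $B_u$ and $Z_u$ are independent, and that independence across $u$ is what makes $\sum_u\Exp[X_u\mid Z_u]$ equal in distribution to $\sum_u Y_u$.
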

\begin{proof}
We show by induction over $i$ that the variables $$E[f(Y_1 + \cdots + Y_{i-1} + X_{i} + \cdots + X_k)] \leq E[f(Y_1 + \cdots + Y_{i} + X_{i+1} + \cdots + X_k)].$$
Let $Z = Y_1 + \cdots + Y_{i-1} + X_{i+1} + \cdots + X_k$.
In step $i\rightarrow i+1$ we have
\begin{align*}
E[&f(Y_1 + \cdots + Y_{i-1} + X_{i} + \cdots + X_k)] \\
&=E[f(Z + X_{i})] \\
&= \frac{\lambda_i}{\de_{i}} E[f(Z+ d_{i} ]  + \left(1-\frac{\lambda_i}{ \de_{i}}\right)E[f(Z)] \\
&= \lambda_i E\left[\frac{f(Z + d_{i})}{\de_{i}}\right]  + \left(1-\frac{\lambda_i}{ \de_{i}}\right)E\left[f(Z)\right]  \\
&=\lambda_i E\left[\frac{f(Z+ d_{i} )  - f(Z)
}{\de_{i}}\right]  + E\left[f(Z)\right]  \\
&\leq  \lambda_i E\left[\frac{f(Z+ \lambda_{i} )  - f(Z)
}{\lambda_{i}}\right] + E\left[f(Z)\right] \\
&= E[f(Y_1 + \cdots + Y_{i}+X_{i+1} + \cdots + X_k)],
\end{align*}
where the last inequality follows from the concavity of $f(\cdot)$.
\end{proof}

\begin{lemma}
    \label{lem:skewness}
    Let $Z_1,\ldots,Z_n$ be independent random variables and $Z = \sum_i Z_i$.
    If $\Exp[Z]=0$, then $\Exp[Z^3] = \sum_i\big(\Exp[Z_i^3]-3\Exp[Z_i^2]\cdot\Exp[Z_i] + 2\Exp[Z_i]^3\big)$.
\end{lemma}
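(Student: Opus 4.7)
My plan is to prove this directly by expanding $Z^3$ as a triple sum over indices, applying independence to factor the expectations, and then using the assumption $\sum_i \Exp[Z_i]=0$ to collect terms into the claimed form.

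First I would write
\[
\Exp[Z^3] = \sum_{i,j,k} \Exp[Z_i Z_j Z_k],
\]
and split the sum according to how many of the indices $i,j,k$ coincide. Writing $\mu_i := \Exp[Z_i]$ and $\sigma_i := \Exp[Z_i^2]$, independence gives three types of contributions: the ``all equal'' terms contribute $\sum_i \Exp[Z_i^3]$; the ``exactly two equal'' terms contribute $3\sum_{i\neq j} \sigma_i \mu_j$ (the factor $3$ accounts for which of the three positions holds the repeated index); and the ``all distinct'' terms contribute $\sum_{i,j,k\text{ distinct}} \mu_i \mu_j \mu_k$.

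Next I would use the hypothesis $\sum_i \mu_i = 0$ to simplify each of the last two sums. For the ``two equal'' piece,
\[
\sum_{i\neq j}\sigma_i \mu_j = \sum_i \sigma_i\bigl(\textstyle\sum_j \mu_j - \mu_i\bigr) = -\sum_i \sigma_i \mu_i.
\]
For the ``all distinct'' piece, I would start from $0 = (\sum_i \mu_i)^3$ and expand by the same case analysis:
\[
0 = \sum_i \mu_i^3 + 3\sum_{i\neq j} \mu_i^2 \mu_j + \sum_{i,j,k\text{ distinct}} \mu_i \mu_j \mu_k.
\]
Applying the same manipulation as above, $\sum_{i\neq j}\mu_i^2\mu_j = -\sum_i \mu_i^3$, so the identity rearranges to $\sum_{i,j,k\text{ distinct}}\mu_i\mu_j\mu_k = 2\sum_i \mu_i^3$.

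Finally, combining the three pieces gives
\[
\Exp[Z^3] = \sum_i \Exp[Z_i^3] - 3\sum_i \sigma_i \mu_i + 2\sum_i \mu_i^3 = \sum_i\bigl(\Exp[Z_i^3] - 3\Exp[Z_i^2]\Exp[Z_i] + 2\Exp[Z_i]^3\bigr),
\]
which is the claim. There is no real obstacle here: the proof is a mechanical multinomial expansion, and the only thing to be careful about is getting the combinatorial multiplicities right (the factor of $3$ for the ``two-equal'' terms and the factor of $6$ or, equivalently, the unordered versus ordered count for the ``all-distinct'' terms). A sanity check with $n=2$, where $Z_2=-Z_1+c$ type examples reduce the identity to the standard third-central-moment formula, is worth doing to confirm the constants.
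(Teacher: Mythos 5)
Your proof is correct and is essentially the same calculation as the paper's: both expand $\Exp[Z^3]=\sum_{i,j,k}\Exp[Z_iZ_jZ_k]$, classify the terms by which indices coincide, and use $\sum_i\Exp[Z_i]=0$ to collapse the cross terms. The only cosmetic difference is that you handle the all-distinct piece via the auxiliary identity $0=\bigl(\sum_i\mu_i\bigr)^3$, whereas the paper substitutes the zero-sum constraint directly inside the nested sums; this is an organizational choice, not a different argument.
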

\begin{proof}
In the following we make use of $\Exp[\sum_i Z_i] = 0$. We derive
\begingroup
\allowdisplaybreaks
\begin{align*}
 \Exp\left[ Z ^3\right] &=  \Exp\left[ \sum_{i,j,k}Z_i Z_j Z_k \right]
\\&=
\sum_{i}\Exp\left[ Z_i\sum_{j,k} Z_j Z_k \right]
\\&=
\sum_{i}\Exp\left[Z^3_i\right] 
+3\sum_{i}\Exp\left[Z^2_i \right]\Exp\left[\sum_{j,\ j\not= i}Z_j \right] \\&\mspace{15mu}
+ \sum_{i}\Exp\left[Z_i \right] \sum_{j,\ j\not= i}\Exp\left[Z_j \right]   \Exp\left[   \sum_{k,\ k\not= i,j}     Z_k\right]  \\&= 
\sum_{i}\Exp\left[Z^3_i\right] 
+  3\sum_{i}\Exp\left[Z^2_i\right]( 0-\Exp[Z_i]   )  \\&\mspace{15mu}
+
 \sum_{i}\Exp\left[Z_i\right] \sum_{j,\ j\not= i}\Exp\left[Z_j\right]  (0-\Exp[Z_i]-\Exp[Z_j] )\\&= 
\sum_{i}\Exp\left[Z^3_i\right] 
- 3\sum_{i}\Exp\left[Z^2_i \right]\Exp\left[Z_i \right] 
+ \sum_{i}\Exp\left[Z_i\right](-\Exp[Z_i]) \sum_{j\ j\not= i}\Exp\left[Z_j\right] \\&\mspace{15mu}+
 \sum_{i}\Exp\left[Z_i\right] \sum_{j\ j\not= i}\Exp\left[Z_j\right](-\Exp[Z_j]) \\&=
\sum_{i}\Exp\left[Z^3_i\right] 
- 3\sum_{i}\Exp\left[Z^2_i \right]\Exp\left[Z_i \right] 
+
 2\sum_{i}\Exp\left[Z_i\right](-\Exp[Z_i]) (-\Exp[Z_i])  
 \\&=
\sum_{i}\Exp\left[Z^3_i\right] -
 3\sum_{i}\Exp\left[Z^2_i \right]\Exp\left[Z_i \right] +
2 \sum_{i}\Exp[Z _i]^3.
\end{align*}%
\endgroup
\end{proof}

We now show $d$-regular graphs with a cut of size $\Theta(\phi_t d n)$ exist indeed.
\begin{lemma}\label{lem:givemecut}
Let $\tfrac1{nd}\leq\phi\leq 1$. Let $0 <\gamma < 1$ be some constant.
Let $d\geq 6$ be an even integer.
For any integer $n'\in [\gamma n, n/2]$ there exists a $d$-regular graph $G=(V,E)$ with $n$ nodes and the following property.
There is a set $S\subset V$ with $|S|=n'$ such that  $|cut(S, V\setminus S)|
=\Theta(\phi d n )$.
Moreover, there are at least $n'/2$ nodes without any edges in $cut(S, V\setminus S)$.
\end{lemma}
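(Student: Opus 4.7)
The plan is to construct $G$ explicitly by gluing together three $d$-regular subgraphs. Fix any partition $V = S \sqcup T$ with $|S|=n'$, and take the target cut size $K$ to be a multiple of $d$ lying in the interval $[c_1 \phi d n,\ c_2 \phi d n]$ for constants $c_1,c_2$ depending only on $\gamma$; since $n'\ge \gamma n$ and $\phi \le 1$, choosing $c_2 \le \gamma/2$ ensures $K \le d n'/2$, which is the slack we need at the end.

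In the main regime $K \ge d^2$, set $m = K/d \ge d$ and refine the partition into $S=S_1\sqcup S_2$, $T=T_1\sqcup T_2$ with $|S_1|=|T_1|=m$. Define $G$ as the edge-disjoint union of: (i) a $d$-regular simple bipartite graph between $S_1$ and $T_1$, which exists since $m\ge d$ (take the union of $d$ shifted perfect matchings on the $2m$ vertices); (ii) a $d$-regular graph on $S_2$; and (iii) a $d$-regular graph on $T_2$. Parts (ii)--(iii) exist because $d$ is even (so $d\cdot|S_2|$ and $d\cdot|T_2|$ are automatically even) and $|S_2|,|T_2|\ge (\gamma/2)n \ge d+1$ once $n$ exceeds a constant, so, e.g., circulant constructions apply. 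The resulting graph is $d$-regular, has exactly $md = \Theta(\phi d n)$ cut edges (all from part~(i)), and the $|S_2|=n'-m\ge n'/2$ vertices of $S_2$ carry no incident cut edge, verifying the final ``moreover'' clause.

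In the sparse regime $K < d^2$ the bipartite ingredient fails because a simple $d$-regular bipartite graph between two sides of fewer than $d$ vertices does not exist. Here I would start with any $d$-regular graphs $H_S$ on $S$ and $H_T$ on $T$ and perform $K/2$ disjoint \emph{edge swaps}: pick vertex-disjoint matchings of size $K/2$ in each of $H_S, H_T$ (trivially possible since $K<d^2 \ll n'$), and for each matched pair of edges $\{a,b\},\{c,d\}$ delete both and insert $\{a,c\},\{b,d\}$. Each swap contributes exactly two edges to the cut while preserving all degrees, so after $K/2$ swaps $|cut(S,T)| = K$, and only $K<d^2\le n'/2$ vertices of $S$ are touched, leaving at least $n'-K\ge n'/2$ vertices of $S$ with no incident cut edge.

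I expect the main obstacle to be bookkeeping rather than any conceptual hurdle: picking $c_1,c_2$ coherently across the two regimes and massaging the divisibility conditions ($K$ a multiple of $d$; $|S_2|,|T_2|$ compatible with $d$-regularity) without pushing $K$ outside its $\Theta$-band. Since $d$ is even and $n$ can be taken sufficiently large, the required slack is easily available and these issues remain purely mechanical.
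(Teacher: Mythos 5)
Your construction takes a genuinely different route from the paper's: where the paper builds two circulant graphs, deletes $k-1$ path edges in each, and reroutes the freed endpoints across the partition, you glue together a $d$-regular bipartite gadget $(S_1,T_1)$ with two internal $d$-regular pieces $S_2, T_2$, falling back to edge swaps when the cut is too small for the bipartite gadget. The conceptual idea is fine, but there is one gap that is not mere bookkeeping, plus one parameter issue.

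The real gap is connectedness. In your main regime the graph $G$ is the \emph{disjoint} union of three vertex-disjoint pieces: the bipartite graph on $S_1 \cup T_1$, the $d$-regular graph on $S_2$, and the $d$-regular graph on $T_2$. There are no edges joining $S_1$ to $S_2$ or $T_1$ to $T_2$, so $G$ has three components. This lemma is invoked by the adversary in the proof of Theorem~\ref{st:adaptivelowerbound}, which must produce, at every step, a graph with conductance at least $\phi_t$; a disconnected graph has conductance $0$, so the construction must be connected. The paper's proof explicitly checks this (the two modified circulants are connected, and the $(2k-2)$ cross edges merge them into one component), whereas your main regime gives no mechanism for connecting the three pieces. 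The fix --- e.g.\ sacrificing a few internal edges of $S_2, T_2$ and a few bipartite edges and rerouting them to stitch $S_1$ to $S_2$ and $T_1$ to $T_2$ while keeping degrees at $d$ and the cut size unchanged --- requires a concrete patch, not just tuning constants. (Your sparse regime is closer to being safe on this front because the cross edges created by each swap reconnect any fragments of $H_S$ and $H_T$ caused by removing a matching, but this too should be argued rather than left implicit.)

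A secondary issue: requiring $K$ to be a multiple of $d$ \emph{globally} is too restrictive. When $\phi$ is near its lower bound $1/(nd)$, the quantity $\phi d n$ is $\Theta(1)$, and the smallest positive multiple of $d$ is $d$ itself, which is not $\Theta(\phi d n)$ unless $d = O(1)$ --- but $d$ is a free parameter here. You only need divisibility by $d$ in the main regime (to form $m = K/d$); in the sparse regime you only need $K$ even. Relatedly, the paper reduces w.l.o.g.\ to $\phi \le 1/d$ before choosing $k$, which is what keeps $k < n'/2$ achievable with absolute constants; your constants $c_1, c_2$ are claimed to depend only on $\gamma$, but without that reduction the constraint $K \le dn'/2$ (main regime) together with the need to land a multiple of $d$ in $[c_1\phi dn, c_2\phi dn]$ cannot both be met with $d$-independent constants across the full range $1/(nd) \le \phi \le 1$.

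Both proofs ultimately lean on some unstated parameter assumptions (e.g.\ $n$ large relative to $d$), so I would not hold the rough edges of the constant-chasing against you; but the disconnectedness of the main-regime graph is a missing idea, not bookkeeping.
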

\begin{proof}
In the following we create two $d$-regular graphs $G'$ and $G''$ and connect them to a $d$-regular graph $G$ such that the cut size is $\Theta(\phi d n)$.
Let $G'=(V',E')$ be the circulant graph $C_{n'}^{\floor*{d/2}}$ with $V'=\{ v'_1, \dots, v'_{n'} \}$.
Let $G''=(V'',E'')$ be the circulant graph $C_{n-n'}^{\floor*{d/2}}$ with $V'=\{ v''_1, \dots, v''_{n-n'} \}$.
We now connect $G'$ and $G''$. W.l.o.g. $\phi  \leq 1/d$. The case $\phi  > 1/d$ is analogue.
We choose $k$ such that we have $2\leq k < n'/2$ and $k=\Theta(\phi d n')$.
Let $S'=\{v'_1, \kdots, v'_k \}$ and let $S''=\{ v''_1, \dots, v''_k \}$.
Now we remove all edges $(v'_{i},v'_{i+1})$ with $1\leq i\leq k-1$ and $(v''_{i},v''_{i+1})$. Note that $G'$($G''$ respectively) is still connected.
Furthermore, the vertices $v'_1$, $v'_k$, $v''_1$, and $v''_k$ have degree $d-1$ and all other vertices of $S'\cup S''$ have degree $d-2$.
One can easily add $(2k-2)$ edges  such that
$(i)$ one endpoint of every edge is in $V'$ and one in $V''$, and $(ii)$ all vertices in $G'$ and $G''$ have degree $d$.

Note that the obtained graph $G$ is connected and the cut $|cut(V', V\setminus V')|$ contains $(2k-2)=\Theta(\phi d n' )$ edges. The claim follows directly.

\end{proof}

\begin{theorem}[{\cite[Theorem 12.2]{MU05}}]\label{optional}
If $Z_0, Z_1, \dots$ is a martingale with respect to $X_1, X_2, \dots$ and if $T$ is a stopping time for $X_1, X_2, \dots$, then
$$E[Z_t]=E[Z_0] $$
whenever one of the following holds:
\begin{itemize}
\item The $Z_i$ are bounded, so there is a constant for all $i$, $|Z_i| \leq c$;
\item $T$ is bounded;
\item $E[T] \leq \infty$, and there is a constant $c$ such that $E[Z_{t+1} - Z_t | X_1, \dots X_t] < c$;
\end{itemize}
\end{theorem}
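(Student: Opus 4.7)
The plan is to first establish that the stopped process $(Z_{t\wedge T})_{t\geq 0}$ is itself a martingale with respect to $X_1,X_2,\ldots$, which gives $\Exp[Z_{t\wedge T}] = \Exp[Z_0]$ for every finite $t$, and then pass to the limit $t\to\infty$ to conclude $\Exp[Z_T] = \Exp[Z_0]$. For the stopped-martingale step I would write
$$Z_{(t+1)\wedge T} - Z_{t\wedge T} = \mathbf{1}\{T>t\}\cdot (Z_{t+1}-Z_t);$$
since $T$ is a stopping time, the event $\{T>t\}$ is determined by $X_1,\ldots,X_t$, so taking conditional expectations gives
$$\Exp\bigl[Z_{(t+1)\wedge T} - Z_{t\wedge T}\,\big|\,X_1,\ldots,X_t\bigr] = \mathbf{1}\{T>t\}\cdot \Exp[Z_{t+1}-Z_t\mid X_1,\ldots,X_t] = 0$$
by the martingale property of $(Z_t)$. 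Iterating and taking unconditional expectations yields $\Exp[Z_{t\wedge T}]=\Exp[Z_0]$ for every $t$.

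Next I would pass from $\Exp[Z_{t\wedge T}]$ to $\Exp[Z_T]$ as $t\to\infty$. In all three cases $Z_{t\wedge T}\to Z_T$ almost surely on $\{T<\infty\}$, which holds with probability one under each hypothesis. Case (ii) with $T$ deterministically bounded is immediate: once $t$ exceeds that bound, $Z_{t\wedge T}=Z_T$ pointwise, and no limit argument is needed. Case (i) follows from the Bounded Convergence Theorem applied to the uniform dominator $|Z_{t\wedge T}|\leq c$.

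Case (iii) is more delicate, since the hypothesis only bounds the \emph{conditional} expectation of increments. My approach is to telescope
$$Z_{t\wedge T} - Z_0 \;=\; \sum_{s=0}^{\infty} \mathbf{1}\{s<T\}\cdot (Z_{s+1}-Z_s),$$
and exhibit a dominating integrable random variable $W = \sum_{s=0}^{\infty} \mathbf{1}\{s<T\}\cdot |Z_{s+1}-Z_s|$. Combining the conditional bound on the increments with a Fubini-type interchange and the assumption $\Exp[T]<\infty$ gives $\Exp[W] \leq c\cdot \Exp[T] < \infty$, so that the Dominated Convergence Theorem lets us send $t\to\infty$ inside the expectation and obtain $\Exp[Z_T]=\Exp[Z_0]$.

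The main obstacle is reconciling the bound on \emph{conditional} increments in case (iii) with the requirement of an almost-sure dominator: strictly speaking, the version of the hypothesis stated in the excerpt would need to be read as controlling $\Exp[|Z_{t+1}-Z_t|\mid X_1,\ldots,X_t]$ (or at least as accompanied by a uniform bound on $|Z_{t+1}-Z_t|$) for the Fubini/DCT step to go through cleanly. Since the theorem is cited verbatim from \cite{MU05}, I would defer to that reference for the precise measure-theoretic details and use the sketch above only as the conceptual roadmap.
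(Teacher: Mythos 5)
The paper does not prove this statement; it is cited verbatim as Theorem~12.2 of \cite{MU05} (an auxiliary textbook result), so there is no in-paper proof to compare against. Your sketch is the standard textbook argument for the Optional Stopping Theorem: show that the stopped process $(Z_{t\wedge T})_{t\ge 0}$ is a martingale via the decomposition $Z_{(t+1)\wedge T}-Z_{t\wedge T}=\mathbf 1\{T>t\}(Z_{t+1}-Z_t)$ and the measurability of $\{T>t\}$ with respect to $X_1,\ldots,X_t$, then pass to the limit $t\to\infty$ under each of the three hypotheses (pointwise equality for bounded $T$, Bounded Convergence for uniformly bounded $Z_i$, and Dominated Convergence with dominator $\sum_s \mathbf 1\{s<T\}|Z_{s+1}-Z_s|$ under the third hypothesis), so it is conceptually the right proof. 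You also correctly flag that condition (iii) as reproduced in the paper contains a typo: the displayed bound $E[Z_{t+1}-Z_t\mid X_1,\ldots,X_t]<c$ is vacuous for a martingale (the conditional expectation is identically zero), and what \cite{MU05} actually requires is $E[|Z_{t+1}-Z_t|\mid X_1,\ldots,X_t]<c$, which is exactly what is needed to make your Fubini/DCT domination $\Exp[W]\le c\cdot\Exp[T]<\infty$ go through. (Similarly, $E[T]\le\infty$ should read $E[T]<\infty$, and the conclusion should read $E[Z_T]=E[Z_0]$ rather than $E[Z_t]=E[Z_0]$.) No gaps beyond the measure-theoretic details you explicitly defer to the reference.
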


\subsection{A Chernoff-type bound}
The following lemma bounds the sum of (dependent) binary random variables, under the assumption that the sum of the conditional probabilities of the variables, given all previous variables, is always  bounded (from above or below) by some $b$.
The bounds are the same as the ones for independent variables but use $b$ in place of $\mu$. The bound can be seen as a generalisation of \cite{ABKU99}.
The proof follows the proof of the independent case.

\begin{lemma}[Chernoff Bound for Dependent Setting]
    \label{lem:cherhoff-like}
    Let $Z_1,Z_2,\ldots,Z_\ell$ be a sequences of binary random variables, and for each $1\leq i\leq  \ell$, let $p_i = \Pr(Z_i = 1 \mid Z_1,\ldots,Z_{i-1})$.
    Let $Z=\sum_i Z_i$, $B=\sum_i p_i$.
    \begin{enumerate}[(a)]
      \item For any $b\ge 0$ with
       $\Pr(B \leq b) = 1$, it holds for any  $\delta > 0$ that
      $$\Pr(Z > (1+\delta)\cdot b) < \left(\frac{e^\delta}{(1+\delta)^{1+\delta}}\right)^b.$$
      \item For any $b\ge 0$ with
       $\Pr(B \geq b) = 1$, then for any $0<\delta<1$ it holds that
       $$\Pr(Z < (1-\delta)b) < e^{-b\delta^2/2}.$$
    \end{enumerate}
\end{lemma}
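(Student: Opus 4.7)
The plan is to carry out the classical exponential-moment (Chernoff) argument, adapted to the conditional setting via a supermartingale. The crucial observation is that although the $Z_i$ are not independent, the bound $E[e^{tZ_i}\mid Z_1,\dots,Z_{i-1}] = 1 + p_i(e^t-1) \leq \exp(p_i(e^t-1))$ still holds pointwise, so the standard telescoping still goes through once I package it correctly.

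For part (a), fix $t>0$ and define
\[
    M_k \;=\; \exp\!\left(t\sum_{i\leq k} Z_i \;-\; (e^t-1)\sum_{i\leq k} p_i\right),\qquad M_0=1.
\]
A short calculation, using the inequality $1+x\leq e^x$ on the conditional expectation above, shows $E[M_k \mid Z_1,\dots,Z_{k-1}] \leq M_{k-1}$, so $(M_k)$ is a supermartingale and $E[M_\ell]\leq 1$. Rewriting gives $E[\exp(tZ - (e^t-1)B)]\leq 1$. Since $B\leq b$ almost surely and $e^t-1>0$, I can replace $B$ by $b$ at the cost of an inequality, yielding $E[e^{tZ}]\leq \exp(b(e^t-1))$. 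Markov's inequality and the standard optimisation $t = \ln(1+\delta)$ then produce
\[
    \Pr(Z > (1+\delta)b) \;\leq\; \frac{\exp(b(e^t-1))}{\exp(t(1+\delta)b)} \;=\; \left(\frac{e^\delta}{(1+\delta)^{1+\delta}}\right)^{b}.
\]

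For part (b), I repeat the same supermartingale construction with $t<0$ (take $t = \ln(1-\delta)$ eventually). The one point that needs attention is the direction of the substitution: now $e^t-1<0$, so $-(e^t-1)B \geq -(e^t-1)b$ uses the assumption $B\geq b$ a.s., and again yields $E[e^{tZ}]\leq \exp(b(e^t-1))$. Markov (with the inequality correctly flipped because $t<0$) gives
\[
    \Pr(Z<(1-\delta)b)\;\leq\;\left(\frac{e^{-\delta}}{(1-\delta)^{1-\delta}}\right)^{b},
\]
after which the textbook calculus bound $(1-\delta)^{1-\delta}\geq \exp(-\delta+\delta^2/2)$ for $0<\delta<1$ collapses this to $e^{-b\delta^2/2}$.

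I do not expect a genuine obstacle here; the only subtlety is being careful about whether the condition $\Pr(B\leq b)=1$ (resp.\ $\Pr(B\geq b)=1$) allows the one-sided replacement of the random $B$ by the deterministic $b$ inside the exponential with the correct sign of $e^t-1$. As long as the sign of $t$ is matched to the direction of the desired tail bound, the substitution is valid pointwise and no further care is required. The proof is essentially that of the independent Chernoff bound, with $B$ playing the role of the mean and the supermartingale property replacing the product factorisation of the moment generating function.
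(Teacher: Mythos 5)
Your proof is correct and follows essentially the same route as the paper: bound the conditional moment generating factor of each $Z_i$ by $1+p_i(e^t-1)\le e^{p_i(e^t-1)}$, accumulate over $i$, use the almost-sure bound on $B$ to replace it by the deterministic $b$, and finish with Markov plus the standard optimisation of $t$. The one difference is cosmetic but worth noting: you package the accumulation as an explicit supermartingale $M_k=e^{tS_k-(e^t-1)B_k}$ with $E[M_\ell]\le 1$, which is a cleaner and more rigorous way to handle the randomness of the $p_i$ than the paper's iterated conditioning-and-max argument, and your sign discussion for part (b) makes explicit a step the paper leaves as ``analogous.''
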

\begin{proof}
The proof follows the proof of the Chernoff bound given in ~\cite{MP95}. However, the random variables $\{Z_i : 1\leq i\leq k\}$ are not independent.
For any positive real $t$ we have
$\Pr(Z \geq (1+\delta)b)=\Pr(exp(t Z) \geq exp(t (1+\delta)b) ) $. Thus, by applying Markov inequality we derive
\begin{equation}\label{aftermarkov}
\Pr(Z \geq (1+\delta)b) < \tfrac{\Exp[exp(t Z)]}{exp(t (1+\delta)b)}.
\end{equation}
We now bound  $\Exp[exp(t Z)] $. By law of total expectation and $p_1=\Pr(Z_1 = 1)$ we get

\begin{align*}
\Exp&[exp(t Z)] =\Exp[ exp(t Z) | Z_1 = 1 ]  \Pr(Z_1 = 1) + \Exp[ exp(t Z) | Z_1 = 0 ]  \Pr(Z_1 = 0)\\
&=\Exp[ exp(t Z) | Z_1 = 1 ]  p_1 + \Exp[ exp(t Z) | Z_1 = 0 ]  (1-p_1)\\
&=\Exp\left[ exp\left(t \sum\limits_{i=2}^kZ_i\right) \middle| Z_1 = 1 \right]  exp(t)  p_1 + \Exp\left[ exp\left(t \sum\limits_{i=2}^kZ_i\right) \middle| Z_1 = 0 \right]  (1-p_1)\\
&\leq\max\left\{\Exp\left[ exp(t \sum\limits_{i=2}^kZ_i) \middle| Z_1 = 1 \right], \Exp\left[ exp\left(t \sum\limits_{i=2}^kZ_i\right) \middle| Z_1 = 0 \right]\right\}  (p_1   exp(t) + 1-p_1).
\end{align*}
Repeating this inductively for the variables $Z_2, \dots, Z_k$ yields
\begin{align*}
\Exp[exp(t Z)] &< \prod_i (P_i   exp(t) + 1-P_i)=
 \prod_i (1+P_i(exp(t)-1)).
\end{align*}
Using $1+ x < e^x$ and rearranging gives
$\Exp[exp(t Z)] < exp((exp(t)- 1)b). $
Plugging this into Eq.~\eqref{aftermarkov} yields Claim (a).
Claim~(b) can be proven analogously.

\end{proof}



%


%
%



\end{document}